%% file: arxiv-v1.tex
\documentclass[11pt]{article}
 
\usepackage[
  top=0.9in,
  bottom=0.9in,
  left=0.9in,
  right=0.9in
]{geometry}
\usepackage{amsmath,amssymb,amsfonts}
\usepackage{amsthm}
\usepackage{mathtools}
\usepackage{bbm}        
\usepackage{mathrsfs}   
\usepackage{bm}         
 \usepackage{comment}
\usepackage{graphicx}
\usepackage{booktabs}
\usepackage{array}
\usepackage{natbib} 
\setcitestyle{authoryear} 
 
\usepackage{algorithm}
\usepackage{algpseudocode}
\input{preamble}

\usepackage[dvipsnames]{xcolor}    
\usepackage[colorlinks=true,
            linkcolor=blue!25!black,
            citecolor=blue!25!black,
            urlcolor=blue!25!black]{hyperref}

\theoremstyle{plain}
\newtheorem{theorem}{Theorem}[section]
\newtheorem{lemma}{Lemma}[section]

\theoremstyle{definition}
\newtheorem{definition}{Definition}[section]
\newtheorem{assumption}{Assumption}[section]

\theoremstyle{remark}
\newtheorem{remark}{Remark}[section]
\newtheorem{example}{Example}[section]

\newcommand{\remarkend}{\hfill$\lozenge$}
\newcommand{\exampleend}{\hfill$\triangle$}

\newcommand{\dist}{\mathrm{dist}}

\newcommand{\Id}{\mathrm{Id}}

\newcommand{\calK}{\mathcal{K}}

\newcommand{\calX}{\mathcal{X}} 
\newcommand{\calU}{\mathcal{U}} 
\newcommand{\calN}{\mathcal{N}} 
\newcommand{\calD}{\mathcal{D}} 
\newcommand{\calJ}{\mathcal{J}} 
\newcommand{\calG}{\mathcal{G}} 
\newcommand{\calI}{\mathcal{I}} 
\newcommand{\calV}{\mathcal{V}} 
\newcommand{\cone}{\mathrm{cone}}

\newcommand{\sign}{\operatorname{sign}}
\allowdisplaybreaks 

\title{Robust Uniform Recovery of Structured Signals from Nonlinear Observations\thanks{The authors are listed in alphabetical order.  Corresponding author: Junren Chen (\href{mailto:jchen58@umd.edu}{\texttt{jchen58@umd.edu}})}}

\author{Pedro Abdalla\thanks{University of California, Irvine.}\and Radu Balan\thanks{University of Maryland, College Park.} \and Junren Chen\thanks{University of Maryland, College Park.}
}

\date{\today}

\begin{document}
\maketitle

\begin{abstract}
Uniform signal recovery from a fixed ensemble is an important desideratum in mathematical signal processing. While it is well known that the restricted isometry property (RIP) guarantees uniform sparse recovery from noisy linear measurements, uniform recovery of structured signals from nonlinear observations remains much less understood. 
 This paper shows that the restricted approximate invertibility condition (RAIC) provides a unified approach to this end. Particularly, uniform recovery is achieved by projected gradient descent (PGD) with gradients obeying RAIC for all signals. As an application, under a large class of piecewise Lipschitz link functions (possibly discontinuous), we develop a uniform recovery theory for Gaussian single-index model by establishing the uniform RAIC for the gradient of the (scaled) $\ell_2$ loss via a covering argument. The theory   
    generalizes the nonuniform recovery guarantees due to \cite{plan2016generalized,oymak2017fast} and exhibits additional error terms that can be interpreted as the cost of uniform recovery. Intriguingly,  in  the three canonical settings  of  (a) sparse recovery via PGD with $\ell_0$ projection (i.e., iterative hard thresholding (IHT)), (b) sparse recovery via PGD with $\ell_1$ projection, and (c) recovering approximately sparse signals via PGD with $\ell_1$ projection, the additional error terms are negligible and in turn our uniform recovery error rates are at the same order of  existing nonuniform ones, up to log factors. Our results hence improve on \cite{genzel2023unified}.  
    Under the specific nonlinearity of 1-bit quantization, we  use a VC dimension argument to show that the uniform recovery error of IHT is at the same order of the nonuniform recovery error, with no loss of log factor. In addition, we show that the robustness of PGD to noise and corruption can be incorporated elegantly by bounding a single additional random process that captures the gradient mismatch. 
\end{abstract}


\section{Introduction}
We consider the recovery of $\bx\in \calX$ from the observations 
\begin{align}
    y_i = f_i(\ba_i^T\bx),\quad i=1,...,m \label{sim}
\end{align}
where $\ba_i$ are known sensing vectors, $\calX$ is a set of structured signals, and $f_i$ denote potential nonlinear transforms.  Canonical examples include 1-bit compressed sensing \citep{boufounos20081}, sparse phase retrieval \citep{candes2015phase}, generalized linear models \citep{mccullagh2019generalized}, and so on.

It is standard to consider random design $\bA=[\ba_1,...,\ba_m]^T$, under which an important feature of mathematical recovery guarantee is the (non)uniformity. A nonuniform guarantee for an estimator $\hat{\bx}$ ensures the accurate recovery of a fixed $\bx$ {\it oblivious} to $\bA$, taking the form \[\mathbb{P}(\|\hat{\bx}-\bx\|_2<\eta_1)\ge 1-\eta_2\] for some small $\eta_1,\eta_2>0$. The probability is taken over $\bA$ and other randomness with the model. In contrast, a uniform guarantee is stronger and states that $\hat{\bx}$ well approximates all $\bx$ in $\calX$:
\[\mathbb{P}\big(\|\hat{\bx}-\bx\|_2<\eta_1,~\forall \bx\in \calX\big)\ge 1-\eta_2.\]
 Put differently, nonuniform and uniform guarantees characterize respectively the average-case and worst-case performance of an estimator.

Uniformity is an important desideratum because in real-world  applications the sensing ensemble is fixed when designed and is expected to be able to work with all possible signals. Uniform guarantee is also more ``robust'' in that it allows for an adversarial generation of $\bx\in \calX$  that can be based on knowledge of $\bA$. From this perspective, uniformity may find interests beyond signal processing, for example, in statistical learning. 

For concreteness, we now consider the sparse recovery problem where $\calX\subset \Sigma^n_k:=\{\bx\in \mathbb{R}^n:\|\bx\|_0\le k\}$.
In the linear setting with $f_i = \Id$, that is the reconstruction of $\bx\in \Sigma^n_k$ from $\by=\bA\bx$,  there exists a number of efficient algorithms that uniformly recover all $\bx
\in \Sigma^n_k$ as long as the matrix $\bA$ satisfies restricted isometry property (RIP). See, e.g., \cite{Foucart2013AMI}. However, if $f_i$ is a nonlinear transform such as $\sign(\cdot)$ and $|\cdot|$, the problem of establishing uniform guarantee becomes much more entangled. In the literature, diverse algorithms are proposed for different   nonlinear observations, and the majority of their theoretical guarantees is nonuniform.

The recent work  \cite{genzel2023unified} aims to bridge this gap by developing a uniform recovery theory for the single index model (\ref{sim}) with (sub)Gaussian covariates.\footnote{The model in \cite{genzel2023unified} is slightly more general but we sacrifice some of the generality to ease the presentation.} The authors revisited the generalized Lasso approach of \cite{plan2016generalized} which works under a fairly large class of nonlinear link functions $f_i$ without the need of precise knowledge of $f_i$. It was observed that the major technical challenge arises in bounding a multiplier process taking the form
\begin{align}\label{multiplierp}
    \sup_{\bx\in\calX}\sup_{\bv\in\calV}\,\frac{1}{m}\sum_{i=1}^m\bigg(\ba_i^T\bx -\frac{f_i(\ba_i^T\bx)}{\mu}\bigg)\ba_i^T\bv,
\end{align}
especially when $f_i$ is discontinuous. Noticing that for Lipschitz continuous $f_i$ the concentration inequalities due to \cite{mendelson2016upper} directly apply, Genzel and Stollenwerk proposed a unified scheme based on Lipschitz approximation of discontinuous $f_i$ to bound (\ref{multiplierp}). However, for discontinuous $f_i$, this approach yields a bound on (\ref{multiplierp}) decaying in $m$ no faster than $O(m^{-1/4})$, and therefore the uniform recovery error of Generalized Lasso for (\ref{sim}) with discontinuous $f_i$ is no faster than $O(m^{-1/4})$. This is inferior to the nonuniform recovery error in \cite{plan2016generalized} and the uniform recovery error under Lipschitz continuous $f_i$ \cite[Theorem 1]{genzel2023unified}, both of which read $O(m^{-1/2})$.   Based on the upper bound $O(m^{-1/4})$, the authors concluded ``{\it the transition to uniform recovery with nonlinear output functions may result in a worse oversampling rate}''   \cite[Page 916]{genzel2023unified}. Does this gap between the uniform recovery error rates under Lipsthitz link functions and under discontinuous link functions truly exist?

As another restriction of  \cite{genzel2023unified}, most developments   therein are tailored to the uniform recovery of Generalized Lasso. This, however, may not apply to some nonlinear observations (e.g., it does not directly apply to phase retrieval), or may not be a premier solver for specific problem (e.g., for generalized linear models, maximum likelihood estimation is typically a preferred option). It is unclear how to adapt  \cite{genzel2023unified} to general nonlinear observations with a solver based on generic loss functions.

We note in passing that \cite{chen2023unified} used the Lipschitz approximation approach from \cite{genzel2023unified} to establish $O(m^{-1/2})$ uniform recovery error rates for recovering all signals with a generative prior from (\ref{sim}). Yet, \cite{chen2023unified} is   not directly comparable to the recovery of structured signals (like sparse vectors) treated in \cite{genzel2023unified} and the present paper, since the recovery program for recovering generative vectors is in general not computational tractable.

This paper develops a different approach to uniform recovery of structured signals from nonlinear observations. Our work is built upon a line of recent works \citep{friedlander2021nbiht,matsumoto2024binary,matsumoto2024robust,matsumoto2025learning,chen2024one,chen2024optimal,chen2025unified} that used a structured condition, referred to as the restricted approximate invertibility condition (RAIC), to analyze  nonlinear observation models. A historical review of these works can be found in Section \ref{sec:related}. While a difficulty of the problem is that the RIP is no longer effective for nonlinear observations \citep{genzel2023unified}, the RAIC precisely serves as an analog of RIP in various nonlinear models \citep{chen2025unified}.

In particular,  RAIC states that some gradient operator, when restricted to the structured signals, well approximates the ``ideal (invertibility) step'' under a dual norm related to the signal structure; it is  useful because projected gradient descent (PGD) with a gradient obeying RAIC linearly converges to the true signal. Therefore, PGD with gradient obeying RAIC for all $\bx\in\calX$, referred to as a uniform RAIC, achieves uniform recovery for all $\bx\in\calX$. While this perspective was (implicitly) used to establish uniform recovery guarantees for several problems (see Section \ref{sec:related}), our work carefully elucidates on this approach and offers some extensions, such as structured signals living in a convex set with canonical example being the recovery of approximately sparse vectors, as well as the robustness to noise and corruption which follows from slightly more work. Our work is therefore of some pedagogical value.

As an application of this approach, our second main contribution is to develop a uniform recovery theory for solving (\ref{sim}) with Gaussian covariate via PGD with respect to a properly scaled $\ell_2$ loss (similarly to \cite{oymak2017fast}), which can be viewed as a computational procedure for  generalized Lasso \citep{plan2016generalized,genzel2023unified}. The main bulk of technical work lies in establishing the uniform RAIC for all $\bx\in\calX$, where we encounter exactly the same multiplier process (\ref{multiplierp}) as with \cite{genzel2023unified}. However, unlike  \cite{genzel2023unified}, we    restrict our attention to a large class of $f_i$ that are piecewise Lipschitz continuous (see Assumption \ref{assump:fi}) and control the process by a delicate  covering argument. In turn, we establish uniform recovery guarantees different from the ones of \cite{genzel2023unified}. By analyzing several canonical settings, we show that our uniform guarantees are at most log factors   worse than the nonuniform guarantees in \cite{plan2016generalized,oymak2017fast}.

For concreteness, we consider sparse recovery as an example. If $\calX\subset \Sigma^n_k$ and the discontinuous $f_i$'s satisfy Assumption \ref{assumption:nonzero}, then the nonuniform  recovery error rate (for a fixed $\bx$) reads \[\|\hat{\bx}-\bx\|_2=O\bigg(\sqrt{\frac{k\log(en/k)}{m}}\bigg);\] 
 our result shows that, if $f_i$ satisfies some additional regularity conditions (see Assumption \ref{assump:fi}), then the uniform recovery error rate (for all $\bx\in\calX$) reads 
\[\sup_{\bx\in\calX}\|\hat{\bx}-\bx\|_2=\tilde{O}\bigg(\sqrt{\frac{k\log(en/k)}{m}}\bigg),\]
where $\tilde{O}(\cdot)$ hides log factor in $(m,n,k)$. This degrades from the nonuniform error bound  only by log factors and substantially improves on the uniform rate $\tilde{O}((\frac{k\log(en/k)}{m})^{1/4})$ from \cite{genzel2023unified}. For the specific $1$-bit observations $\{y_i=\sign(\ba_i^T\bx)\}_{i=1}^m$, we provide a sharper analysis based on VC dimension and establish 
\[\sup_{\bx\in\calX}\|\hat{\bx}-\bx\|_2=O\bigg(\sqrt{\frac{k\log(en/k)}{m}}\bigg),\]
meaning that the gap between the uniform rate and nonuniform rate is at most a universal multiplicative constant. In a nutshell, we show that the gap in \cite[Page 916]{genzel2023unified} does not exist for a large class of discontinuous $f_i$. In other words, {\it discontinuous link functions that are regular enough cannot lead to an essential gap between the uniform rate and nonuniform rate}. See Figure \ref{fig1} for an intuitive explanation of our contribution. 

 \begin{figure}[ht!]
	\begin{centering}
		\includegraphics[width=0.7\columnwidth]{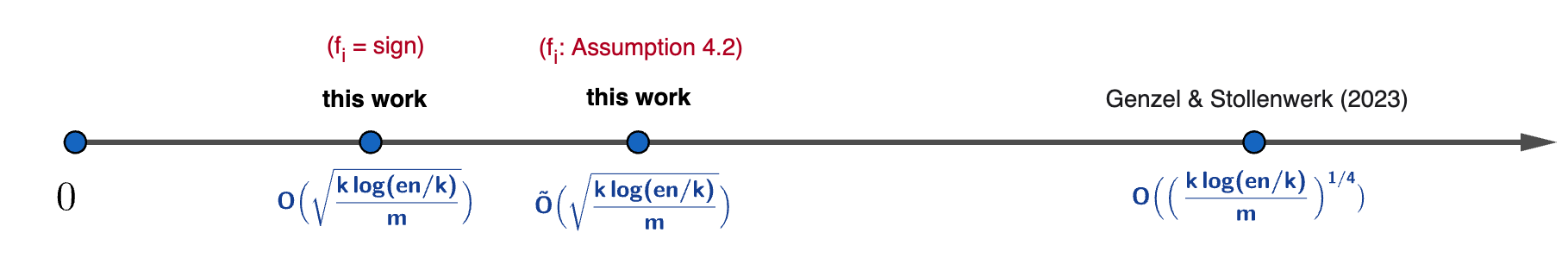} 
\caption{Existing uniform sparse recovery error rates for solving generalized Lasso. Note that existing nonuniform recovery error rates for generalized Lasso \citep{plan2016generalized} and a corresponding PGD procedure \citep{oymak2017fast} are $O(\sqrt{k\log(en/k)/m})$ under possibly discontinuous $f_i$ (satisfying a mild condition).
\label{fig1}}
	\end{centering}
\end{figure}

On a technical aspect, our covering argument for bounding (\ref{multiplierp}) is analogous to \cite{xu2020quantized,dirksen2021non,chen2024uniform,chen2024robust} but require  careful extensions of some steps to general and potentially discontinuous $f_i$ (note that most previous works are specialized to quantization). Also, we provide a more refined argument for the specific 1-bit observations by  VC-dimension theory with a suitable bound on   multiplier processes; see Section \ref{sec4:1bcs} for further details.


The remainder of this paper is structured as follows. In Section \ref{sec:pgdraic}, we introduce the PGD procedure and the RAIC for a cone or a convex set. Section \ref{sec:schedule} further unveils that RAIC implies the convergence of PGD to some statistical error and proposes our unified approach to uniform recovery of structured signals from nonlinear observations. In Section  \ref{sec:sim}, the approach is applied to establishing uniform guarantees for single-index model along with careful comparisons with \cite{plan2016generalized,genzel2023unified,oymak2017fast} (Sections \ref{sec4:ass}--\ref{sec4:guarantee}); we also provide an instantiation of modulo measurements in Section \ref{sec4:modulo} and a refined analysis of 1-bit measurements with no loss of log factors in Section \ref{sec4:1bcs}. 
Section \ref{sec:related} reviews recent works on nonlinear observations through the lens of RAIC, and, building on these examples, Section \ref{sec:robust} extends the robustness guarantees to noise and corruption. 
The missing proofs in Section \ref{sec:sim} are provided in Appendices \ref{sec:coneproof}--\ref{thm:nolog}.

\section{PGD and RAIC} \label{sec:pgdraic}
Our algorithm relies on a gradient operator $\bh_{\bx}:\mathbb{R}^n\to \mathbb{R}^n$, which depends on $\bx$ through the observations $y_i=f_i(\ba_i^T\bx)$ only and is typically 
chosen as the (sub)gradient of some loss function. Specifically, $\bh_{\bx}(\bu)$ serves as the gradient at $\bu$ in this work.

For the reconstruction of $\bx\in\calX$, we consider projected gradient descent (PGD) with step size $\eta$ 
\begin{align}\label{pgd}\tag{PGD}
    \bx_{t+1} = P_{\calK}(\bx_t-\eta \cdot \bh_{\bx}(\bx_t)),\quad t = 0,1,\cdots,
\end{align}
where $P_{\calK}$ denotes the projection onto a suitably chosen closed set $\calK$, i.e., 
\begin{align*}
    P_{\calK}(\bu) = \text{arg}\min_{\bw\in \calK}\,\|\bw-\bu\|_2. 
\end{align*}
Note that (\ref{pgd}) is computationally tractable as long as $P_{\calK}$ is tractable. For instance, in sparse recovery, two standard choices of the projection are $\calK=\Sigma^n_{s}$ (the $\ell_0$ projection)  and $\mathbb{B}_1^n(r)$ (the $\ell_1$ projection), both of which can be implemented efficiently and aim to promote the sparsity. In particular, since $P_{\Sigma^n_s}$ reduces to the hard thresholding operator that only retains  the $s$ largest entries of the iterate, PGD with $\ell_0$ projection is also referred to as a procedure of iterative hard thresholding (IHT) \citep{blumensath2009iterative,jacques2013robust}. In general, we treat the two cases of (a) $\calK$ is a cone and (b) $\calK$ is a convex set.   Also, $\calK$ should be chosen such that  $\calX\subset \calK$, that is, the desired signals live in $\calK$.

For $\calU\subset \mathbb{R}^n$ and any $\bv\in \mathbb{R}^n$, the dual norm of $\bv$ with respect to $\calU$ is \[\|\bv\|_{\calU^\circ}=\sup_{\bu\in \calU}\,\bu^T\bv.\] Note that $\|\cdot\|_{\calU^\circ}$ is a seminorm if $\calU$ is symmetric, i.e., $\calU=-\calU$. It also satisfies $\|\bv\|_{\calU^\circ}\le\|\bv\|_{\tilde{\calU}^\circ} $ for any $\tilde{\calU}\supset \calU$.
For $\phi>0$, we  let 
\[\calK_\phi = (\calK-\calK)\cap \phi B_2^n\qquad \textrm{and}\qquad \calK_{\bx,\phi}=(\calK-\bx)\cap \phi B_2^n.\]

Our key technical component, the restricted approximate invertibility condition (RAIC),  slightly differs for cone $\calK$ and convex set $\calK$. We shall start with the RAIC for a cone.

\begin{definition}[RAIC for a cone \citep{chen2025unified}] Let $\bx\in \calX$, $\bh_{\bx}: \mathbb{R}^n\to \mathbb{R}^n$ and $\calK$ be a cone. We say $\bh_{\bx}$ satisfies RAIC with respect to the cone $\calK$, a constraint set $\calU$, step size $\eta$, and an approximation error function $R_{\bx}:\calU\to \mathbb{R}$, if 
\begin{align*}
    \|\bu-\bx-\eta\cdot \bh_{\bx}(\bu)\|_{\calK_1^\circ}\le R_{\bx}(\bu), \quad \forall \bu\in \calU. 
\end{align*}
We denote this by \[\bh_{\bx}(\bu) \sim {\rm RAIC}(\calK;\calU,R_{\bx}(\bu),\eta).\]
\end{definition}

For convex set, RAIC involves an additional scaling parameter $\phi>0$. 

\begin{definition}[RAIC for a convex set] 
    Let $\bx\in \calX$, $\bh_{\bx}:\mathbb{R}^n\to \mathbb{R}^n$, $\calK$ be a convex set and $\phi>0$. We say $\bh_{\bx}$ satisfies RAIC at scale $\phi$ with respect to the convex set $\calK$, a constraint set $\calU$, step size $\eta$, and an approximation error function $R_{\bx,\phi}: \calU\to \mathbb{R}$, if 
    \begin{align*}
        \frac{1}{\phi}\|\bu-\bx -\eta \cdot \bh_{\bx}(\bu)\|_{\calK_{\bx,\phi}^\circ}\le R_{\bx,\phi}(\bu),\quad \forall \bu\in \calU.
    \end{align*}
    We denote this by 
    \[\bh_{\bx}(\bu)\sim {\rm RAIC}_\phi(\calK;\calU,R_{\bx,\phi}(\bu),\eta).\] 
\end{definition}

\begin{remark}\label{monotonicity}
Let $\phi'\ge\phi>0$. 
Since $\calK-\bx$ is a convex set containing $0$, $\frac{\calK-\bx}{\phi'} \subset \frac{\calK-\bx}{\phi}$ holds, and hence for any $\bw\in \mathbb{R}^n$
\begin{align*}
    \frac{1}{\phi}\|\bw\|_{\calK^\circ_{\bx,\phi}} = \|\bw\|_{(\frac{\calK-\bx}{\phi}\cap B_2^n)^\circ} \ge  \|\bw\|_{(\frac{\calK-\bx}{\phi'}\cap B_2^n)^\circ} =  \frac{1}{\phi'}\|\bw\|_{\calK^\circ_{\bx,\phi'}}.
\end{align*}
This means that RAIC with smaller $\phi$ is harder to establish. In other words, suppose that \[\bh_{\bx}(\bu)\sim {\rm RAIC}_\phi(\calK;\calU,R_{\bx,\phi}(\bu),\eta)\] for any $\phi>0$ with a set of approximation error functions $\{R_{\bx,\phi}(\bu)\}_{\phi>0}$, then one can assume $R_{\bx,\phi}(\bu)\ge R_{\bx,\phi'}(\bu)$ ($\forall \bu\in \calU$) without loss of generality.\remarkend 
\end{remark}

At a current iterate is $\bu$,  $\eta \cdot \bh_{\bx}(\bu)$ is the actual descent step, while $\bu-\bx$ is the ideal descent step (in light of $\bu-(\bu-\bx)=\bx$, which is   the desired signal). Therefore,  the RAIC essentially requires that the actual step approximates the ideal step under a dual norm adaptive to the low-dimensional signal structure. This is the key to establishing RAIC in high dimensions, as dual norm is in general much smaller than the $\ell_2$ norm.

\section{Unified Approach}\label{sec:schedule}
This section proposes a (deterministic) unified approach to uniform recovery from nonlinear observations. It is built upon
the linear convergence of PGD implied by RAIC.

In the case when $\calK$ is a cone, the following statement shows that RAIC with approximation error $\mu_1\|\bu-\bx\|_2+\mu_2$ implies the convergence of PGD to $\bx$ with per-iteration contraction rate $2\mu_1$ and  error $\frac{2\mu_2}{1-2\mu_1}$. Let $B_2^n(\bx;d) = \bx+ dB_2^n.$

\begin{theorem}\cite[Theorem 3.1]{chen2025unified} \label{coneK} Assume that $\calK$ is a cone   that contains $\bx$. If \[\bh_{\bx}(\bu)\sim{\rm RAIC}(\calK;\calU,R_{\bx}(\bu),\eta)\] where $\calU\supset \calK\cap B_2^n(\bx;d)$ for some $0<d\le \infty$ and $R_{\bx}(\bu)=\mu_1\|\bu-\bx\|_2+\mu_2$ with $\mu_1<\frac{1}{2}$ and $d>\frac{2\mu_2}{1-2\mu_1}$, then $\{\bx_t\}_{t\ge 0}$ generated by (\ref{pgd}) with $\bx_0\in \calK\cap B_2^n(\bx;d)$ satisfies 
\begin{align*}
    \|\bx_t-\bx\|_2\le (2\mu_1)^t\|\bx_0-\bx\|_2 + \frac{2\mu_2}{1-2\mu_1},\quad\forall t\ge 0.
\end{align*}
\end{theorem}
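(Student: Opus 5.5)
We argue by induction on $t$, proving the one-step inequality $\|\bx_{t+1}-\bx\|_2\le 2\mu_1\|\bx_t-\bx\|_2+2\mu_2$ whenever $\bx_t\in\calK\cap B_2^n(\bx;d)$, together with the invariant that $\bx_{t+1}$ stays in $\calK\cap B_2^n(\bx;d)$, so that RAIC keeps applying along the trajectory. Unrolling the recursion then gives
\[
\|\bx_t-\bx\|_2\le (2\mu_1)^t\|\bx_0-\bx\|_2+2\mu_2\sum_{j<t}(2\mu_1)^j\le (2\mu_1)^t\|\bx_0-\bx\|_2+\frac{2\mu_2}{1-2\mu_1}.
\]

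\textbf{One-step bound.} Write $\bu=\bx_t$ and $\bw=\bu-\eta\bh_{\bx}(\bu)$, so that $\bx_{t+1}=P_{\calK}(\bw)$. Since $\bx\in\calK$, optimality of the projection gives $\|\bx_{t+1}-\bw\|_2^2\le\|\bx-\bw\|_2^2$. Expanding $\bx_{t+1}-\bw=(\bx_{t+1}-\bx)+(\bx-\bw)$ yields
\[
\|\bx_{t+1}-\bx\|_2^2\le 2\langle \bx_{t+1}-\bx,\ \bw-\bx\rangle .
\]
Because $\calK$ is a cone containing $\bx$ and $\bx_{t+1}\in\calK$, the normalized vector $(\bx_{t+1}-\bx)/\|\bx_{t+1}-\bx\|_2$ lies in $(\calK-\calK)\cap B_2^n=\calK_1$. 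Hence
\[
\langle \bx_{t+1}-\bx,\bw-\bx\rangle\le \|\bx_{t+1}-\bx\|_2\,\|\bu-\bx-\eta\bh_{\bx}(\bu)\|_{\calK_1^\circ}\le \|\bx_{t+1}-\bx\|_2\,(\mu_1\|\bu-\bx\|_2+\mu_2)
\]
by RAIC, which applies since $\bu\in\calK\cap B_2^n(\bx;d)\subset\calU$. Dividing by $\|\bx_{t+1}-\bx\|_2$ (the case where it is zero is trivial) gives the one-step bound. Note that the factor $2$ comes from the fact that $P_{\calK}$ is only a projection onto a possibly nonconvex cone, not a nonexpansive map.

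\textbf{Invariant.} The step needing care is that the iterates remain in the ball of radius $d$ around $\bx$. If $\|\bu-\bx\|_2\le d$, then $\|\bx_{t+1}-\bx\|_2\le 2\mu_1 d+2\mu_2$, and this is at most $d$ exactly when $d\ge \frac{2\mu_2}{1-2\mu_1}$; the hypothesis $d>\frac{2\mu_2}{1-2\mu_1}$ ensures it. When $d=\infty$ there is nothing to check. Together with $\bx_0\in\calK\cap B_2^n(\bx;d)$, the induction closes and the stated bound holds for all $t\ge 0$.
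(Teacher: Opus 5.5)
Your proof is correct and follows essentially the same route as the paper. The paper cites this cone result rather than proving it, but its proof of the convex analogue, Theorem \ref{convexK}, has the same structure: induction on $t$, a one-step bound combining RAIC with the projection inequality $\|P_{\calK}(\bw)-\bx\|_2\le 2\|\bw-\bx\|_{\calK_1^\circ}$, and a check that the iterates stay in $\calK\cap B_2^n(\bx;d)\subset\calU$. The only differences are cosmetic: you derive the projection inequality inline from the optimality of $P_{\calK}$ and the conic structure of $\calK-\calK$ instead of quoting a lemma (cf.\ Lemma \ref{lem:projconv}), and you verify the invariant $\|\bx_t-\bx\|_2\le d$ step by step rather than through the paper's dominating sequence $f_t$.
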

\begin{remark} \label{rem:normalization}
    $\calU\supset \calK\cap B_2^n(\bx;d)$ means that $\bx$ is an interior point of $\calU$ relative to $\calK$. If RAIC only holds over $\calU$ contained in the sphere $\|\bx\|_2 S^{n-1}$, then under $ \calU\supset   \|\bx\|_2S^{n-1}\cap \calK\cap B_2^n(\bx;d)$ one can instead use the normalized PGD 
    \begin{align*}
        \bx_{t+1} = \frac{ \|\bx\|_2\cdot P_{\calK}(\bx_t-\eta\cdot \bh_{\bx}(\bx_t))}{\|P_{\calK}(\bx_t-\eta\cdot \bh_{\bx}(\bx_t))\|_2},\quad t = 0,1,2,\cdots
    \end{align*}
    with $x_0\in \calU\supset   \|\bx\|_2S^{n-1}\cap \calK\cap B_2^n(\bx;d)$.
    See Case (ii) of Theorem 3.1 in \cite{chen2025unified}. \remarkend 
\end{remark}

In the case when $\calK$ is a convex set, we have the following. 

\begin{theorem}  \label{convexK}
Assume that $\calK$ is a convex set containing $\bx$, and $\phi>0$. If \[\bh_{\bx}(\bu)\sim {\rm RAIC}_\phi(\calK;\calU,R_{\bx,\phi}(\bu),\eta)\] 
where $\calU\supset \calK\cap B_2^n(\bx;d)$ for some $0<d\le \infty$ and $R_{\bx,\phi}(\bu)=\mu_1\|\bu-\bx\|_2+\mu_2$ with $\mu_1<\frac{1}{2}$ and $d>\frac{2\mu_2+\phi}{1-2\mu_1}$, then $\{\bx_t\}_{t\ge 0}$ generated by (\ref{pgd}) with $\bx_0\in \calK\cap B_2^n(\bx;d)$ satisfies 
\begin{align*}
    \|\bx_t-\bx\|_2\le (2\mu_1)^t\|\bx_0-\bx\|_2+ \frac{2\mu_2+\phi}{1-2\mu_1},\quad\forall t\ge 0.
\end{align*}
\end{theorem}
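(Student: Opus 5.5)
The plan is to prove a one-step contraction: if $\bx_t\in\calK\cap B_2^n(\bx;d)$, then
\[\|\bx_{t+1}-\bx\|_2\le 2\mu_1\|\bx_t-\bx\|_2+2\mu_2+\phi .\]
Induction then gives the claim. The threshold $d>\frac{2\mu_2+\phi}{1-2\mu_1}$ keeps every iterate in $B_2^n(\bx;d)$, since $2\mu_1 d+2\mu_2+\phi< d$. Hence RAIC applies at every step, because $\bx_t\in\calK\cap B_2^n(\bx;d)\subset\calU$. Unrolling the recursion $e_{t+1}\le 2\mu_1 e_t+(2\mu_2+\phi)$ and summing the geometric series gives $e_t\le (2\mu_1)^t e_0+\frac{2\mu_2+\phi}{1-2\mu_1}$.

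For the one-step bound, write $\bu=\bx_t$, $\bw=\bu-\eta\bh_{\bx}(\bu)$ and $\bx^+=P_\calK(\bw)$. Since $\calK$ is convex and $\bx\in\calK$, the projection inequality gives $\langle \bw-\bx^+,\bx-\bx^+\rangle\le 0$. Expanding this yields
\[\|\bx^+-\bx\|_2^2\le\langle \bw-\bx,\bx^+-\bx\rangle=\langle \bu-\bx-\eta\bh_{\bx}(\bu),\,\bx^+-\bx\rangle .\]
Set $\bz=\bu-\bx-\eta\bh_{\bx}(\bu)$ and $r=\|\bx^+-\bx\|_2$. The difference with the cone case is that we cannot rescale $\bx^+-\bx$ to the unit sphere inside $\calK-\bx$. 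Instead we split into two cases.

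\emph{Case $r\le\phi$.} Here $\bx^+-\bx\in(\calK-\bx)\cap\phi B_2^n=\calK_{\bx,\phi}$. RAIC then gives $r^2\le\|\bz\|_{\calK^\circ_{\bx,\phi}}\le \phi R_{\bx,\phi}(\bu)$, so $r\le\sqrt{\phi R}\le \phi$ or $r\le R$ (in fact $r\le \max(\phi,R)\le R+\phi$). In either case $r\le \mu_1\|\bu-\bx\|_2+\mu_2+\phi$.

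\emph{Case $r>\phi$.} Use convexity of $\calK-\bx\ni 0$: the shrunk vector $\frac{\phi}{r}(\bx^+-\bx)$ lies in $\calK_{\bx,\phi}$. Therefore $\langle\bz,\bx^+-\bx\rangle=\frac{r}{\phi}\langle \bz,\frac{\phi}{r}(\bx^+-\bx)\rangle\le \frac{r}{\phi}\cdot\phi R_{\bx,\phi}(\bu)=rR_{\bx,\phi}(\bu)$. Combined with $r^2\le\langle\bz,\bx^+-\bx\rangle$ this gives $r\le R_{\bx,\phi}(\bu)=\mu_1\|\bu-\bx\|_2+\mu_2$.

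Both cases give $r\le \mu_1\|\bu-\bx\|_2+\mu_2+\phi$, which already implies the weaker bound with constants $2\mu_1$, $2\mu_2+\phi$ stated in the theorem (we may simply relax to match). The only delicate point is the case split with rescaling, which is where the additive $\phi$ enters; the rest is routine. I also need to check that $\bx_0\in\calK$ and that every $\bx_{t+1}\in\calK$; both hold because projections land in $\calK$. Finally, the $d$ condition must be verified with the constants as stated: $2\mu_1 d+2\mu_2+\phi<d$, which is equivalent to $d>\frac{2\mu_2+\phi}{1-2\mu_1}$.
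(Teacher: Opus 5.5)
Your proof is correct. The skeleton is the same as the paper's: a one-step bound, invariance of $B_2^n(\bx;d)$ via $2\mu_1 d+2\mu_2+\phi<d$, and geometric unrolling, which is exactly the paper's induction on $f_{t+1}=2\mu_1 f_t+2\mu_2+\phi$. The difference is in how you get the one-step bound.

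The paper invokes Lemma \ref{lem:projconv}, $\|P_{\calK}(\bw)-\bx\|_2\le\max\{\phi,\frac{2}{\phi}\|\bw-\bx\|_{\calK^\circ_{\bx,\phi}}\}$, adapted from Plan--Vershynin. That argument uses only two facts:
\begin{itemize}
\item the nearest-point property $\|P_{\calK}(\bw)-\bw\|_2\le\|\bx-\bw\|_2$, which gives $\|P_{\calK}(\bw)-\bx\|_2^2\le 2\langle \bw-\bx,P_{\calK}(\bw)-\bx\rangle$;
\item star-shapedness of $\calK-\bx$ about $0$, used for the rescaling.
\end{itemize}
This is where the factor $2$, and hence the constants $2\mu_1,2\mu_2$, come from.

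You instead use the obtuse-angle inequality for projection onto a closed convex set, which removes that factor. In effect you prove Lemma \ref{lem:projconv} with $1/t$ in place of $2/t$, and you get the sharper bound $\|\bx_{t+1}-\bx\|_2\le\max\{\phi,\mu_1\|\bx_t-\bx\|_2+\mu_2\}$. Your argument is also self-contained, whereas the paper omits the lemma's proof.

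The trade-off is this. The paper's route carries over verbatim to nonconvex closed $\calK$ for which $\calK-\bx$ is star-shaped about $0$. Yours genuinely needs convexity. In return, your one-step bound already gives contraction factor $\mu_1$ under the weaker requirements $\mu_1<1$ and $d>\frac{\mu_2+\phi}{1-\mu_1}$, with final error $\frac{\mu_2+\phi}{1-\mu_1}$. The stated theorem is a relaxation of that.

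One small point: in your case $r\le\phi$ you do not need the RAIC at all, since $r\le\phi\le R_{\bx,\phi}(\bu)+\phi$ holds trivially.
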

A key component to the proof is the following lemma which slightly tightens Corollary 8.3 in \cite{plan2017high} for the projection onto a convex set. 
\begin{lemma} \label{lem:projconv}
Let $\calK$ be a convex set, $\bz\in \calK$ and $\bw\in \mathbb{R}^n$. Then for every $t>0$ we have 
\begin{align*}
    \|P_{\calK}(\bw)-\bz\|_2 \le \max\bigg\{t,\frac{2}{t}\|\bw-\bz\|_{\calK_{\bz,t}^\circ}\bigg\}.
\end{align*}
\end{lemma}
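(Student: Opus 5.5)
We use the variational characterization of the projection onto a closed convex set, together with a rescaling that uses convexity of $\calK$. Write $\bp=P_{\calK}(\bw)$ and $\bd=\bp-\bz$. If $\|\bd\|_2\le t$ there is nothing to prove, so we assume $\|\bd\|_2>t$ and aim to show $\|\bd\|_2\le \frac{2}{t}\|\bw-\bz\|_{\calK_{\bz,t}^\circ}$.

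\textbf{Step 1 (optimality condition).} Since $\bz\in\calK$ and $\calK$ is convex, the projection satisfies $\langle \bw-\bp,\bz-\bp\rangle\le 0$, that is, $\langle \bw-\bp,\bd\rangle\ge 0$. Writing $\bw-\bp=(\bw-\bz)-\bd$, this gives
\[\|\bd\|_2^2\le \langle \bw-\bz,\bd\rangle.\]

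\textbf{Step 2 (rescaling into $\calK_{\bz,t}$).} Let $\lambda=t/\|\bd\|_2\in(0,1)$. By convexity, $\bz+\lambda\bd=(1-\lambda)\bz+\lambda\bp\in\calK$. Hence $\lambda\bd\in(\calK-\bz)\cap tB_2^n=\calK_{\bz,t}$, and therefore $\langle \bw-\bz,\lambda\bd\rangle\le \|\bw-\bz\|_{\calK_{\bz,t}^\circ}$. Combining this with Step 1,
\[\|\bd\|_2^2\le \frac{\|\bd\|_2}{t}\,\|\bw-\bz\|_{\calK_{\bz,t}^\circ},\]
so $\|\bd\|_2\le \frac1t\|\bw-\bz\|_{\calK_{\bz,t}^\circ}$. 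This is even stronger than the claimed bound with the factor $2$.

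In either case, $\|\bd\|_2\le\max\{t,\frac{2}{t}\|\bw-\bz\|_{\calK_{\bz,t}^\circ}\}$.

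The only potential obstacle is ensuring that the rescaled difference lies in the set defining the dual norm. This is where convexity of $\calK$ and $\bz\in\calK$ are used: the segment from $\bz$ to $\bp$ stays in $\calK$. We also need the projection onto the closed convex set to be well defined, which it is, so the variational inequality in Step 1 holds. The constant $2$ in the statement leaves slack, perhaps to mirror Corollary 8.3 of \cite{plan2017high}. Our argument yields the constant $1$, which implies the stated bound.
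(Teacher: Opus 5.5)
Your proof is correct and follows essentially the route the paper intends (the paper omits the proof and defers to the argument of \cite{plan2017high}): derive an inequality of the form $\|P_{\calK}(\bw)-\bz\|_2^2\le c\,\langle \bw-\bz,P_{\calK}(\bw)-\bz\rangle$ from the projection, then rescale $P_{\calK}(\bw)-\bz$ to length $t$ inside $\calK-\bz$, which is star-shaped about the origin because $\calK$ is convex. The only difference is the first step. You use the obtuse-angle variational inequality, which needs convexity and gives $c=1$. The cited argument uses only the comparison $\|\bw-P_{\calK}(\bw)\|_2\le\|\bw-\bz\|_2$, which holds for any closed set and gives $c=2$, hence the factor $2$ in the statement. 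Your constant $1$ is therefore a legitimate sharpening, not a discrepancy.
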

\begin{proof}
    This proof is omitted since it can be easily adapted from the argument in \cite{plan2017high} by noticing that $\calK-\bx$ is a star-shaped set.  
\end{proof}
We now give the proof of Theorem \ref{convexK}. 
\begin{proof}[Proof of Theorem \ref{convexK}] We define the sequence $\{f_t\}_{t\ge 0}$ by $f_0=\|\bx_0-\bx\|_2$ and \[f_{t+1}=2\mu_1 f_t+2\mu_2+\phi,\quad t\ge 0.\] It has a closed form expression
\begin{align}\label{closed}
    f_{t} & = (2\mu_1)^t \|\bx_0-\bx\|_2 + (2\mu_2+\phi)\frac{1-(2\mu_1)^t}{1-2\mu_1}
  \le (2\mu_1)^t \|\bx_0-\bx\|_2 + \frac{2\mu_2+\phi}{1-2\mu_1},
\end{align}
and thus it remains to prove $\|\bx_t-\bx\|_2\le f_t$ for all $t\ge 0$. We  use induction to achieve this. The base case holds trivially. Suppose that $\|\bx_t-\bx\|_2\le f_t$, we seek to show $\|\bx_{t+1}-\bx\|_2\le f_{t+1}$.  From (\ref{closed}) and $d> \frac{2\mu_2+\phi}{1-2\mu_1}$, we  reach $f_t\le d$ for all $t\ge 0$, and thus $\|\bx_t-\bx\|_2\le d$. In view of (\ref{pgd}) and $\bx_0\in \calK$, we have $\bx_t\in \calK$. Taken collectively, $\bx_t \in \calK \cap B_2^n(\bx;d)\subset \calU$, and therefore we can use Lemma \ref{lem:projconv} and the RAIC to achieve 
\begin{align*}
    &\|\bx_{t+1}-\bx\|_2 = \|P_{\calK}(\bx_t-\eta \cdot \bh_{\bx}(\bx_t))-\bx\|_2 \\
    & \le \max\Big\{\phi,\frac{2}{\phi}\|\bx_t-\bx-\eta\cdot \bh_{\bx}(\bx_t)\|_{\calK^\circ_{\bz,\phi}}\Big\} \\
    & \le \max\{\phi,2\mu_1\|\bx_t-\bx\|_2+2\mu_2\}\\
    &\le 2\mu_1 f_t+2\mu_2+\phi = f_{t+1},
\end{align*}
completing the proof. 
\end{proof}
\begin{remark}
Theorem \ref{convexK} shows the convergence of PGD to an error of $O(\mu_2+\phi)$, with $\mu_2$ and $\phi$ in equal footing. By Remark \ref{monotonicity}, one can think of $\mu_2 = \mu_2(\phi)$ as a non-decreasing function of $\phi$;\,\footnote{Strictly, under $R_{\bx,\phi}(u)=\mu_1(\phi)\|\bu-\bx\|_2+\mu_2(\phi)$ and $\phi'\ge \phi$, $R_{\bx,\phi}(x)\ge R_{\bx,\phi'}(x)$ from Remark \ref{monotonicity} yields $\mu_2(\phi)\ge\mu_2(\phi')$.} therefore, one needs to deal with a tradeoff between $\mu_2$ and $\phi$ in order to derive the minimal error. Specifically, to attain the best possible recovery error rate, the principle is to choose $\phi$ such that $\phi\asymp \mu_2$.  \remarkend 
\end{remark}

\paragraph{Unified approach to uniform recovery.}
We are now ready to propose a unified approach to uniform signal recovery from nonlinear observations. The idea is simple: given Theorems \ref{coneK} and \ref{convexK} which show that RAIC of $\bh_{\bx}(\bu)$ implies the nonuniform recovery of the fixed $\bx$ via PGD, one only needs to prove the RAIC of $\bh_{\bx}(\bu)$ for all $\bx\in \calX$ --- referred to as a uniform RAIC --- to establish the uniform recovery of $\bx\in \calX$ via PGD.

Note that our approach indeed goes beyond the specific observations in (\ref{sim}). To the end of uniform recovery of $\bx\in \calX$ from $\{D_{\bx}\}_{\bx\in \calX}$, where $D_{\bx}$ denotes what we can access for the estimation of $\bx$, our approach consists of only two steps:
\begin{enumerate}
    \item Construct the gradients $\bh_{\bx}:\mathbb{R}^n\to \mathbb{R}^n$ from $D_{\bx}$, for all $\bx\in \calX$. 
    \item Establish the (uniform) RAIC of $\bh_{\bx}$, for all $\bx\in \calX$. 
\end{enumerate}
The first step is typically model-specific, and a common practice is to choose $\bh_{\bx}$ as the (sub)gradient of some loss function. The second step is the main challenge and reduces to bounding an empirical process under random data.  We have the following two theorems, which are proved by applying  Theorems \ref{coneK} and \ref{convexK} to every $\bx\in\calX$, respectively. We omit the detailed proofs. 

\begin{theorem}[$\calK$ is a cone] \label{thm:uniformcone}
    Suppose that we recover $\bx\in\calX$ by running (\ref{pgd}) with some cone $\calK$ satisfying $\calK \supset \calX$, which produces $\{\bx_t\}_{t\ge 0}$. If there exist positive numbers $\mu_1,\mu_2,d_{\bx}$ and sets $\calU_{\bx}$ such that  
      \begin{gather}
        \bh_{\bx}(\bu)\sim {\rm RAIC}\big(\calK;\calU_{\bx},\mu_1\|\bu-\bx\|_2+\mu_2,\eta \big),\\\label{mu1boundcone}
        \mu_1 <\frac{1}{2},\\ \label{conecon3}
        \calU_{\bx} \supset \calK\cap B_2^n(\bx;d_{\bx})\text{~~for some~}d_{\bx}\in\Big(\frac{2\mu_2}{1-2\mu_1},\infty\Big],\\\label{conecon4}
        \bx_0 \in  \calK\cap B_2^n(\bx;d_{\bx})
    \end{gather}hold  for all $\bx\in\calX$, 
    then 
        \begin{align}\label{coneuniformguarantee}
        \|\bx_t-\bx\|_2\le (2\mu_1)^t\|\bx_0-\bx\|_2 + \frac{2\mu_2}{1-2\mu_1},\quad\forall t\ge 0,~~\bx\in\calX. 
    \end{align}
\end{theorem}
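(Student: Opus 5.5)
The plan is to fix an arbitrary $\bx\in\calX$ and check that the hypotheses of Theorem \ref{coneK} hold for this $\bx$ with the data $(\calK,\calU_{\bx},\eta,\mu_1,\mu_2,d_{\bx})$. Theorem \ref{coneK} then gives (\ref{coneuniformguarantee}) for that $\bx$, and since $\bx$ was arbitrary the bound holds for every $\bx\in\calX$. The uniformity is already built into the hypotheses: the RAIC and conditions (\ref{mu1boundcone})--(\ref{conecon4}) are assumed for all $\bx\in\calX$ simultaneously, on a single fixed realization of the data. So no additional probabilistic or union-bound step is needed, and the argument is purely deterministic.

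The verification goes through the hypotheses of Theorem \ref{coneK} one by one.

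First, $\calK$ is a cone containing $\bx$, because $\bx\in\calX\subset\calK$.

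Second, the hypothesis gives $\bh_{\bx}(\bu)\sim{\rm RAIC}(\calK;\calU_{\bx},R_{\bx}(\bu),\eta)$ with $R_{\bx}(\bu)=\mu_1\|\bu-\bx\|_2+\mu_2$. This is exactly the affine form required in Theorem \ref{coneK}, and $\mu_1<\frac12$ is (\ref{mu1boundcone}).

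Third, (\ref{conecon3}) supplies a radius $d=d_{\bx}\in(0,\infty]$ with $\calU_{\bx}\supset\calK\cap B_2^n(\bx;d_{\bx})$ and $d_{\bx}>\frac{2\mu_2}{1-2\mu_1}$. This is exactly the radius condition of Theorem \ref{coneK}. It is automatically positive, since $\mu_2>0$ and $1-2\mu_1>0$.

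Finally, (\ref{conecon4}) says the common initialization $\bx_0$ lies in $\calK\cap B_2^n(\bx;d_{\bx})$.

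One point needs care: the iterates $\{\bx_t\}$ are the same sequence for every $\bx$ only in notation, since in fact $\bh_{\bx}$ depends on $\bx$ through its observations. I would therefore phrase the argument so that, for each fixed $\bx$, $\{\bx_t\}$ denotes the PGD trajectory driven by $\bh_{\bx}$ and started at $\bx_0$. Theorem \ref{coneK} applies to exactly that trajectory, and the constants $\mu_1,\mu_2$ in the resulting bound do not depend on $\bx$, which is what makes the conclusion uniform.

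There is no real obstacle here. The only subtlety is bookkeeping: making sure that $\mu_1$ and $\mu_2$ are independent of $\bx$ while $d_{\bx}$ and $\calU_{\bx}$ are allowed to depend on it, and noting that the case $d_{\bx}=\infty$ is covered because Theorem \ref{coneK} allows $d\le\infty$.
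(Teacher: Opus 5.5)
Your proposal is correct and is exactly the paper's argument: the paper proves this theorem by applying Theorem~\ref{coneK} to each $\bx\in\calX$ separately, and you do the same. Your hypothesis-by-hypothesis check is accurate, and so is your remark that the trajectory $\{\bx_t\}$ depends on $\bx$ through $\bh_{\bx}$ while $\mu_1,\mu_2$ do not.
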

\begin{theorem}[$\calK$ is a convex set]\label{thm:uniformconvex}
Suppose that we recover $\bx\in\calX$ by running (\ref{pgd}) with some convex set $\calK$ satisfying $\calK\supset \calX$, which produces $\{\bx_t\}_{t\ge 0}$. If there exist positive numbers $\phi,\mu_1,\mu_2,d_{\bx}$ and sets $\calU_{\bx}$ such that 
 \begin{gather}
        \bh_{\bx}(\bu)\sim {\rm RAIC}_\phi(\calK;\calU,\mu_1\|\bu-\bx\|_2+\mu_2,\eta),\\
        \mu_1<\frac{1}{2},\\\label{convexcon3}
        \calU_{\bx} \supset \calK\cap B_2^n(\bx;d_{\bx})\text{~~for some~}d_{\bx}\in\Big(\frac{2\mu_2+\phi}{1-2\mu_1},\infty\Big],\\\label{convexcon4}
         \bx_0 \in  \calK\cap B_2^n(\bx;d_{\bx})
    \end{gather}
    hold  for all $\bx\in\calX$, 
    then     \begin{align}\label{convexuniformguarantee}
    \|\bx_t-\bx\|_2\le (2\mu_1)^t\|\bx_0-\bx\|_2+ \frac{2\mu_2+\phi}{1-2\mu_1},\quad\forall t\ge 0,~\bx\in\calX. 
\end{align}
\end{theorem}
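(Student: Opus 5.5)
The plan is to fix an arbitrary $\bx\in\calX$ and check that every hypothesis of Theorem \ref{convexK} holds for this $\bx$, with the same $\phi,\mu_1,\mu_2$ and with $d=d_{\bx}$, $\calU=\calU_{\bx}$. Theorem \ref{convexK} is deterministic, so applying it pointwise to each $\bx$ and then taking the conjunction over $\bx\in\calX$ gives (\ref{convexuniformguarantee}). No union bound or probabilistic argument is needed. All the randomness is hidden in the assumption that the RAIC holds simultaneously for every $\bx$, so the whole statement is bookkeeping.

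The verification, for fixed $\bx\in\calX$, goes as follows.
\begin{itemize}
\item Because $\calK\supset\calX$, the convex set $\calK$ contains $\bx$, as Theorem \ref{convexK} requires.
\item The RAIC assumption gives $\bh_{\bx}(\bu)\sim{\rm RAIC}_\phi(\calK;\calU_{\bx},R_{\bx,\phi}(\bu),\eta)$ with $R_{\bx,\phi}(\bu)=\mu_1\|\bu-\bx\|_2+\mu_2$. The statement writes $\calU$ rather than $\calU_{\bx}$; I read it as $\calU_{\bx}$, and the argument works with either.
\item We have $\mu_1<\frac12$.
\item By (\ref{convexcon3}), $\calU_{\bx}\supset\calK\cap B_2^n(\bx;d_{\bx})$ with $d_{\bx}>\frac{2\mu_2+\phi}{1-2\mu_1}$.
\item By (\ref{convexcon4}), the initialization satisfies $\bx_0\in\calK\cap B_2^n(\bx;d_{\bx})$.
\end{itemize}

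One point needs care: the iterates depend on $\bx$, since $\bh_{\bx}$ is built from the observations of $\bx$. So the sequence $\{\bx_t\}$ in the conclusion is the PGD trajectory for the signal being recovered. This matches how the theorem is phrased.

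Theorem \ref{convexK} then yields $\|\bx_t-\bx\|_2\le(2\mu_1)^t\|\bx_0-\bx\|_2+\frac{2\mu_2+\phi}{1-2\mu_1}$ for all $t\ge0$. Since $\bx\in\calX$ was arbitrary, this is exactly (\ref{convexuniformguarantee}). The cone case, Theorem \ref{thm:uniformcone}, is handled the same way by invoking Theorem \ref{coneK} in place of Theorem \ref{convexK}.

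There is no genuine obstacle here. The only thing to watch is that the constants $\phi,\mu_1,\mu_2$ must not depend on $\bx$, so that the error bound is uniform; the hypotheses guarantee this. The real difficulty, establishing the uniform RAIC, is deferred to later sections.
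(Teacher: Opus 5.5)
Your proposal is correct and matches the paper's approach: the paper says this theorem follows by applying Theorem \ref{convexK} to every $\bx\in\calX$ (with $d=d_{\bx}$ and $\calU=\calU_{\bx}$) and omits the details, which your hypothesis-by-hypothesis check supplies. Your two side remarks are also accurate: the iterates depend on $\bx$ through $\bh_{\bx}$, and the $\calU$ in the RAIC hypothesis should be read as $\calU_{\bx}$.
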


\section{Single-Index Model}\label{sec:sim}
As an application, we consider the uniform recovery of $\bx\in \calX$ from $\{y_i = f_i(\ba_i^T\bx)\}_{i=1}^m$ as in (\ref{sim}). We will treat a fairly large class of $f_i$, which can be unknown, discontinuous or random, by an approach similar to \cite{plan2016generalized,genzel2023unified,oymak2017fast}. We shall proceed with the unified approach consisting of two steps. 

\paragraph{Step 1: Choose the gradient.}  
For $\bx\in \calX$, we adopt the $\ell_2$ loss
 \begin{align}\label{rescalel2}
     L_{\bx}(\bu) = \frac{1}{2m}\sum_{i=1}^m (y_i - \mu \ba_i^T\bu)^2
 \end{align}
 with a re-scaling parameter $\mu>0$. Here, $\mu$ is a tuning parameter; its role is clear from  an interesting perspective  that {\it a (Gaussian) nonlinear measurement  can be viewed as a noisy linear measurement on a rescaled parameter}; see, e.g., \cite{plan2016generalized,thrampoulidis2015lasso,plan2017high}. Therefore, we take the gradient 
 \begin{align}\label{hxsim}
     \bh_{\bx}(\bu) = \nabla L_{\bx}(\bu) = \frac{1}{m}\sum_{i=1}^m(\mu \ba_i^T\bu-y_i)\mu \ba_i. 
 \end{align}
We emphasize that  $\bh_{\bx}(\bu)$  depends on $\bx$ through $y_i=f_i(\ba_i^T\bx)$. 

 \paragraph{Step 2: Prove the uniform RAIC.} We shall discuss two cases:
 \begin{itemize}
     \item 
 If $\calK$ is a cone, then we seek to prove
\[\bh_{\bx}(\bu)\sim {\rm RAIC}\big(\calK;\calU:=\calK,\mu_1\|\bu-\bx\|_2+\mu_2,\eta:=\mu^{-2}\big),\quad \forall \bx\in \calX,\]    namely 
\begin{align}\label{raiccone}
    \bigg\|\bu-\bx- \frac{1}{m}\sum_{i=1}^m\Big(\ba_i^T\bu-\frac{f_i(\ba_i^T\bx)}{\mu}\Big)\ba_i\bigg\|_{\calK_1^\circ}\le \mu_1\|\bu-\bx\|_2+\mu_2,~~\forall \bu\in \calK,\,\bx\in \calX
\end{align}
with $\mu_1<\frac{1}{2}$. Notice that (\ref{conecon3})--(\ref{conecon4}) hold with $d_{\bx}=\infty$ and $\bx_0 =0$. Therefore, by Theorem \ref{thm:uniformcone},  (\ref{pgd}) with $\bx_0=0$ satisfies the uniform guarantee in (\ref{coneuniformguarantee}). 

     \item  If instead $\calK$ is a convex set, then for some $\phi>0$, our goal is to prove \[\bh_{\bx}(\bu)\sim {\rm RAIC}_\phi\big(\calK;\calU:=\calK,\mu_1\|\bu-\bx\|_2+\mu_2,\eta:=\mu^{-2}\big),\quad\forall \bx\in \calX,\]
that is, 
\begin{align}\label{raicconvex}
    \frac{1}{\phi}\bigg\|\bu-\bx-\frac{1}{m}\sum_{i=1}^m\Big(\ba_i^T\bu-\frac{f_i(\ba_i^T\bx)}{\mu}\Big)a_i\bigg\|_{\calK^\circ_{\bx,\phi}}\le \mu_1\|\bu-\bx\|_2+\mu_2,~~\forall \bu\in \calK,\,\bx\in \calX.  
\end{align}
with some $\mu_1<\frac{1}{2}$. Since (\ref{convexcon3})--(\ref{convexcon4}) hold with $d_{\bx}=\infty$ and an arbitrary $\bx_0$ in $\calK$, Theorem \ref{thm:uniformconvex} guarantees that (\ref{pgd}) with any $\bx_0\in\calK$ enjoys the uniform guarantee in (\ref{convexuniformguarantee}).  
 \end{itemize}

At present, the conditions are deterministic. To proceed, a number of assumptions are needed to establish (\ref{raiccone}) and (\ref{raicconvex}).

\subsection{Assumptions}\label{sec4:ass}
We work with Gaussian design and $f_i$ that are either deterministic or independent across $i\in[m]$ .
\begin{assumption}\label{assump:sg}
    The sensing vectors $\ba_i$ are i.i.d. $N(0,\bI_d)$ vectors. The nonlinear transforms $f_i$ are either deterministic or independent across $i\in[m]$ (and also independent of $\ba_i$).  
\end{assumption}
\begin{remark}
    Under some $f_i$, such as $f_i=\Id+ \epsilon_i$ (i.e., noisy linear measurements \citep{raskutti2011tit}) and $f_i=\sign(\Id+ \tau_i)$, $\tau_i \sim {\rm Uniform}[-\lambda,\lambda]$ (i.e., uniformly dithered 1-bit measurements \citep{dirksen2021non}),   $\ba_i$ can be general isotropic sub-Gaussian vectors. The generality of sub-Gaussian $\ba_i$, however, is not of interest to this work: we simply treat Gaussian $\ba_i$.  \remarkend 
\end{remark}

Similarly to Definition 1 of \cite{genzel2023unified}, we introduce a model mismatch term 
\begin{align}\label{rhox}
    \rho(\bx):=\bigg|\mathbb{E}_{g\sim N(0,1)}\bigg[\frac{f_i(\|\bx\|_2g)g}{\mu}\bigg]-\|\bx\|_2\bigg|,\quad \bx\in \calX.
\end{align}

\begin{remark} 
    In general, $\rho(\bx)$ is a contributor to the final error (since it appears in the approximation error of the RAIC in Theorems \ref{uraic_cone} and \ref{uraic_convex}). As such, it is   implicitly required that
    \begin{align}
        \mathbb{E}[f_i(\|\bx\|_2g)g]\ne 0;\label{nonzero-con}
    \end{align} otherwise, $\rho(\bx)\ge\|\bx\|_2$ and error below $O(\|\bx\|_2)$ cannot be achieved. As a consequence, the development in this section excludes even link functions $f_i$ that satisfy $\mathbb{E}[f_i(\|\bx\|_2g)g]=0$, thus excluding phase retrieval with $f_i=|\Id|$ or $|\Id|^2$. However, this does not mean that our unified approach does not apply to phase retrieval --- indeed, it still works, see Section \ref{sec:related} and \cite{chen2025unified}. Indeed,  this simply means that the gradient chosen in (\ref{hxsim}) is incompatible with phase retrieval.

    On the other hand, once (\ref{nonzero-con}) holds, it is obvious that one can choose $\mu = \frac{\mathbb{E}[f_i(\|\bx\|_2g)g]}{\|\bx\|_2}$ to render $\rho(\bx)=0$. Thus, if $\calX$ is a subset of a sphere, say $\calX\subset\lambda_0S^{n-1}$ for some $\lambda_0>0$, then the universal choice $\mu=\frac{\mathbb{E}[f_i(\lambda_0g)g]}{\lambda_0}$ zeroes $\rho(\bx)$ for all $\bx\in\calX$. To fix idea, we shall restrict our attention to this case via Assumption \ref{assumption:nonzero} later on.
    \remarkend 
\end{remark}

  To preclude pathological nonlinearities, we impose the following regularity conditions. We define  
  \begin{align*}
       \tilde{f}_i:=\Id-\frac{f_i}{\mu}. 
  \end{align*}
 We also let $\calD_{f_i}$ be the points of discontinuity of $f_i$. 

\begin{assumption} \label{assump:fi}
For some constants  $\varphi_1,\varphi_2,\varphi_3,\varphi_4,\varphi_5>0$, the following conditions are satisfied:  
\begin{enumerate}
    \item[(C1)] ({\it Sub-Gaussianity}) Let $g\sim N(0,1)$, 
    \[\|\tilde{f}_i(\|\bx\|_2g)\|_{\psi_2}\le \varphi_1,\quad \forall \bx\in\calX;\]     
    \item[(C2)] ({\it Separation of discontinuities}) If $|\calD_{f_i}|> 1$,  there exists some $\varphi_2>0$ such that 
    \[|\xi_1-\xi_2|\ge\varphi_2,\quad\,\forall\,\xi_1,\xi_2 \in \calD_{f_i}~~\text{obeying}~~\xi_1\ne \xi_2.\]
    If $|\calD_{f_i}|\le 1$, we let $\varphi_2:=\infty$ and follow the convention $\frac{a}{\infty } = 0~~(\forall a\in \mathbb{R})$. 
    \item[(C3)] ({\it Discontinuity with bounded jump height}) The discontinuity of $f_i$, if exists, is either removable discontinuity or jump discontinuity with bounded jump height: 
    \[\bigg|\lim_{a\to \xi^+}f_i(a)-\lim_{a\to\xi^-}f_i(a)\,\bigg|\le \varphi_3,\quad \forall\,\xi \in \calD_{f_i}.\]
    We also assume $f_i(a)=\lim_{a\to\xi^-}f_i(a)$ with no loss of generality. 
    
    \item[(C4)] ({\it Piecewise Lipschitzness}) $\tilde{f}_i$ is $\varphi_4$-Lipschitz continuous over $(b_1,b_2)$ for any $-\infty\le b_1<b_2\le \infty$ such that $(b_1,b_2)\cap \calD_{f_i}=\varnothing.$ 
     
    \item[(C5)] ({\it Small-ball probability}) Let $\dist(a,\calD):=\inf_{\xi\in\calD}|a-\xi|$. It holds for any $\bx\in\calX$ and $t\in (0,\frac{\varphi_2}{4})$ that  
    \[\mathbb{P}\big(\dist(\ba_i^T\bx,\calD_{f_i})\le t\big)\le \varphi_5 t.\] 
\end{enumerate}
\end{assumption}
\begin{remark}
     In view of $\tilde{f}_i = \Id-\frac{f_i}{\mu}$, the above (C1) and (C4) imply the sub-Gaussianity of $f_i(\|\bx\|_2g)$ and piecewise Lipschitzness of $f_i$, vice versa. Since the two conditions enter our analysis in bounding the empirical process $\sup_{\bx,\bq}\frac{1}{m}\sum_{i=1}^m \tilde{f}_i(\ba_i^T\bx)\ba_i^T\bq$, they are enforced on $\tilde{f}_i$ for technical convenience. 
    \remarkend
\end{remark}
\begin{remark}
    We note that (C2)--(C4) jointly imply 
    \begin{align}\label{worstbound}
        |\tilde{f}_i(b_1)-\tilde{f}_i(b_2)|\le  \varphi_4|b_1-b_2|+ \bigg(\frac{|b_1-b_2|}{\varphi_2}+1\bigg)\frac{\varphi_3}{\mu}\,,~~\forall \,b_1,b_2\in \mathbb{R}, 
    \end{align}
    since the number of discontinuities in $[\min\{b_1,b_2\},\max\{b_1,b_2\}]$ is bounded by $\lfloor \frac{|b_1-b_2|}{\varphi_2}\rfloor + 1$, and the jump heights of $\tilde{f}_i$ are bounded by $\frac{\varphi_3}{\mu}$.
    \remarkend
\end{remark}

\subsection{Uniform RAIC} \label{sec4:uraic}
We now establish the RAIC in (\ref{raiccone}) and (\ref{raicconvex}). We first decompose the left-hand side into two terms and then bound them separately. For instance, in the conic case,
\begin{align}\nonumber
    &\bigg\|\bu-\bx-\frac{1}{m}\sum_{i=1}^m\Big(\ba_i^T\bu-\frac{f_i(\ba_i^T\bx)}{\mu}\Big)\ba_i\bigg\|_{\calK_1^\circ}\\
    &\le \bigg\|\bu-\bx - \frac{1}{m}\sum_{i=1}^m \ba_i\ba_i^T(\bu-\bx)\bigg\|_{\calK_1^\circ}  + \bigg\|\frac{1}{m}\sum_{i=1}^m \tilde{f}_i(\ba_i^T\bx)\ba_i\bigg\|_{\calK_1^\circ}. \label{2terms}
\end{align}
We then seek to bound the two terms in (\ref{2terms}) uniformly for all $(\bu,\bx)\in \calK\times \calX$.

The first term is easy to control and in fact amounts to showing a RIP. It is much harder to uniformly control the second term due to the nonlinearity $f_i(\ba_i^T\bx)$. To this end, we rely on a technically involved covering argument, whose ideas are, however,   largely adapted from existing works of binary embedding (e.g., \cite{plan2014dimension,oymak2015near,dirksen2021non}) and nonlinear compressed sensing (e.g.,  \cite{xu2020quantized,chen2024uniform,chen2024robust}).

Our results of the uniform RAIC are given in the following, with proofs postponed to Section \ref{sec:coneproof}. The statements involve two complexity measures for a general set $\calU\subset \mathbb{R}^n$ --- the Gaussian width 
\begin{align}
    \omega(\calU) = \mathbb{E}\sup_{\bu\in\calU} \langle \bg,\bu\rangle,\quad \text{where }~\bg\sim N(0,\bI_n), \label{gauwidth}
\end{align}
and the metric entropy 
\begin{align}
    \mathscr{H}(\calU,\varepsilon)=\log \mathscr{N}(\calU,\varepsilon)\label{metricentro}
\end{align}
where $\mathscr{N}(\calU,\varepsilon)$ is the covering number under radius $\varepsilon$, i.e., the minimal number of radius-$\varepsilon$ $\ell_2$ balls needed to cover $\calU$. See \cite{vershynin2018high} for a fuller account. 

\begin{theorem}[Uniform RAIC for a cone]
    \label{uraic_cone}Assume that Assumptions \ref{assump:sg}, \ref{assump:fi} hold, and that $\calK$ is a cone containing $\calX$. For any sufficiently small $\varepsilon,\zeta>0$ such that $\varphi_5\varepsilon\sqrt{\log(1/\zeta)}$ is small enough, if  
    \begin{align}\label{sam:raiccone}
        m\gtrsim  \mathscr{H}(\calX,\varepsilon)+\Big(1+\frac{\varphi_5^2\varepsilon^2}{\zeta}\Big)\omega^2(\calK_1) 
    \end{align}
    then with probability at least $1-\exp(-c_1\omega^2(\calK_1))-\exp(-c_2\mathscr{H}(\calX,\varepsilon))$,   $\bh_{\bx}(\bu)$ in (\ref{hxsim}) satisfies 
    \begin{align*}
        \bh_{\bx}(\bu)\sim {\rm RAIC}\bigg(\calK;\calK,\frac{C\omega(\calK_1)}{\sqrt{m}}\|\bu-\bx\|_2+\mu_{2}+O\Big(\sup_{\bx\in\calX}\rho(\bx)\Big),\mu^{-2}\bigg)\,,~~\forall \bx\in\calX, 
    \end{align*}
   where 
    \begin{align*}
        \mu_2\lesssim \varphi_1\sqrt{\frac{\mathscr{H}(\calX,\varepsilon)+\omega^2(\calK_1)}{m}}+\varepsilon\varphi_4+ \frac{\varphi_3}{\mu}\Xi.
    \end{align*}
    Here, $\Xi$  is defined by 
    \begin{align} \label{heartstart}
    &\Xi := \sqrt{\overline{\Xi}+\zeta}\bigg(\frac{\omega(\calK_1)}{\sqrt{m}}+ \sqrt{\overline{\Xi}\log\Big(\frac{1}{\overline{\Xi}}\Big)}+\sqrt{\zeta\log\Big(\frac{1}{\zeta}\Big)}\bigg)\\
    & + \frac{\varepsilon}{\varphi_2} \bigg(\frac{\omega^2(\calK_1)}{m}+\overline{\Xi}\log\Big(\frac{1}{\overline{\Xi}}\Big)+\zeta\log\Big(\frac{1}{\zeta}\Big)\bigg),\label{heartmultiple}\\\label{heartend}
    &\text{where}\quad \overline{\Xi}:=\frac{\mathscr{H}(\calX,\varepsilon)}{m}+ \frac{\varphi_5\varepsilon\omega(\calK_1)}{\sqrt{\zeta m}}+ \varphi_5\varepsilon\sqrt{\log(e/\zeta)}
\end{align}
\end{theorem}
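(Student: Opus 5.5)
We begin from the decomposition (\ref{2terms}) and bound its two terms uniformly over $(\bu,\bx)\in\calK\times\calX$. Since $\calK$ is a cone, $\bu-\bx\in\calK-\calK$. After normalizing by $\|\bu-\bx\|_2$, the first term is at most $\|\bu-\bx\|_2\sup_{\bv,\bw}|\bv^T(\frac1m\sum_i\ba_i\ba_i^T-\bI)\bw|$, where $\bv\in\calK_1$ and $\bw\in(\calK-\calK)\cap S^{n-1}$. Gaussian width bounds for quadratic/product processes (Mendelson-type or matrix deviation inequality) give $C\omega(\calK_1)/\sqrt m$ with probability $1-\exp(-c\,\omega^2(\calK_1))$. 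This yields the $\mu_1$ term.

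For the second term, we write $\tilde f_i(\ba_i^T\bx)$ and note that $\mathbb{E}[\tilde f_i(\ba_i^T\bx)\ba_i]=\frac{\bx}{\|\bx\|_2}(\|\bx\|_2-\mathbb{E}[f_i(\|\bx\|_2g)g]/\mu)$ by Gaussian rotation invariance. The mean contributes at most $\rho(\bx)$ in the dual norm, since $\|\bx/\|\bx\|_2\|_{\calK_1^\circ}\le1$. It then remains to control the centered multiplier process $\sup_{\bx\in\calX,\bq\in\calK_1}\frac1m\sum_i\tilde f_i(\ba_i^T\bx)\ba_i^T\bq-\mathbb{E}$. Let $\calN$ be an $\varepsilon$-net of $\calX$ with $\log|\calN|=\mathscr{H}(\calX,\varepsilon)$.

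\textbf{Net points.} For fixed $\bx\in\calN$, (C1) makes $\tilde f_i(\ba_i^T\bx)$ sub-Gaussian. Splitting $\ba_i$ into its components along $\bx$ and orthogonal to it, and applying a Mendelson multiplier-process bound, gives $\varphi_1(\omega(\calK_1)+t)/\sqrt m$ with tail $e^{-t^2}$. Taking $t^2\asymp\mathscr{H}+\omega^2$ and a union bound over $\calN$ gives the first term of $\mu_2$.

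\textbf{Approximation error.} For $\bx$ with nearest net point $\bx'$, write $\tilde f_i(\ba_i^T\bx)-\tilde f_i(\ba_i^T\bx')$. We split indices by whether $|\ba_i^T(\bx-\bx')|$ is small and $\ba_i^T\bx'$ is far from $\calD_{f_i}$.
\begin{enumerate}
\item[(i)] On good indices, (C4) gives a Lipschitz difference $\varphi_4|\ba_i^T(\bx-\bx')|$. Its contribution is controlled by Cauchy--Schwarz together with uniform bounds on $\sup_{\bw\in(\calX-\calX)\cap\varepsilon B}\frac1m\sum(\ba_i^T\bw)^2$ and on the RIP-type quantity for $\bq$. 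This yields $\varepsilon\varphi_4$.
\item[(ii)] On bad indices, the jump bound (\ref{worstbound}) gives a difference at most $\frac{\varphi_3}{\mu}(1+|\ba_i^T(\bx-\bx')|/\varphi_2)+\varphi_4|\cdot|$. Here we must bound the number of bad indices uniformly.
\end{enumerate}
For (ii) we use the standard binary-embedding device. A uniform bound $\sup_{\bw}|\{i:|\ba_i^T\bw|>\varepsilon\sqrt{\log(1/\zeta)}\cdots\}|\le \zeta m$-type control comes from Gaussian width and concentration; this is where the $\varphi_5^2\varepsilon^2\omega^2/\zeta$ sample requirement enters. Combined with (C5) and a Chernoff bound over the net for $|\{i:\dist(\ba_i^T\bx',\calD)\le t\}|$, the fraction of bad indices is at most $\overline{\Xi}+\zeta$.

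Then for an index set $I$ of size $\le(\overline\Xi+\zeta)m$, we need to bound $\sup_{|I|\le sm}\sup_{\bq}\frac1m\sum_{i\in I}|\ba_i^T\bq|$. By Cauchy--Schwarz this is at most $\sqrt{s}\,(\frac1m\sup_{|I|\le sm}\sum_{i\in I}(\ba_i^T\bq)^2)^{1/2}$, and a union bound over subsets gives $\sqrt s(\omega/\sqrt m+\sqrt{s\log(1/s)})$. This is exactly the first line of $\Xi$. The $\frac{\varepsilon}{\varphi_2}$ line comes from the $|\ba_i^T(\bx-\bx')|/\varphi_2$ part, bounded the same way with the product of two such sums.

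We expect the main obstacle to be this uniform counting of bad indices, which must hold simultaneously over all $\bx$ near the net and all $\bq$, while keeping the dependence on $\varepsilon,\zeta$ exactly as in $\overline{\Xi}$. Collecting the pieces and the failure probabilities gives the claim.
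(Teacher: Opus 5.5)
Your proposal is correct and follows the paper's proof essentially step for step. The shared steps are the product-process (RIP-type) bound for the first term of (\ref{2terms}); the rotation-invariance reduction of the mean to $\rho(\bx)$; and an $\varepsilon$-net of $\calX$ with a Mendelson multiplier bound and a union bound. The approximation error is split in the same way: bad indices are those where $\ba_i^T\bx'$ is near $\calD_{f_i}$ (Chernoff over the net via (C5)) or where the increment $|\ba_i^T(\bx-\bx')|$ exceeds the threshold (a uniform max-over-subsets bound). The small bad set is then handled by (\ref{worstbound}) and Cauchy--Schwarz, and the good set by (C4). The only cosmetic difference is that you obtain the max-over-subsets estimate from Gaussian concentration plus a union bound over index sets rather than quoting Lemma \ref{lem:maxlsum}; this gives the same $\sqrt{s}\,\big(\omega(\calK_1)/\sqrt{m}+\sqrt{s\log(1/s)}\big)$ bound.
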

 
\begin{theorem}[Uniform RAIC for a convex set]
    \label{uraic_convex}
    Assume that Assumptions \ref{assump:sg}, \ref{assump:fi} hold, and that $\calK$ is a convex set containing $\calX$. For any small enough $\phi,\varepsilon,\zeta>0$, if \begin{align} \label{sampleconvex}
        m\gtrsim \frac{\omega^2(\calK_{\calX,\phi})}{\phi^2} + \Big(\frac{1}{\varepsilon^2}+\frac{\varphi_5^2}{\zeta}\Big)\omega^2(\calX_\varepsilon) + \mathscr{H}(\calX,\varepsilon),
    \end{align}
    then with probability at least $1-\exp(-c_1\phi^{-2}\omega^2(\calK_{\calX,\phi}))$, $\bh_{\bx}(\bu)$ in (\ref{hxsim}) satisfies 
    \begin{align*}
        \bh_{\bx}(\bu)\sim {\rm RAIC}_\phi\bigg(\calK ;\calK,\frac{C\omega(\phi^{-1}\calK_{\calX,\phi})}{\sqrt{m}}\|\bu-\bx\|_2+\mu_{2}+O\Big(\sup_{\bx\in\calX}\rho(\bx)\Big),\mu^{-2}\bigg)\,,~~\forall \bx\in\calX, 
    \end{align*}
   where 
    \begin{align*}
        \mu_2\lesssim\varphi_1\sqrt{\frac{\mathscr{H}(\calX,\varepsilon)}{m}}+\Big(1+\frac{\varphi_1}{\phi}\Big)\frac{\omega(\calK_{\calX,\phi})}{\sqrt{m}}+\varepsilon\varphi_4+ \sup_{\bx\in\calX}\rho(\bx)+ \frac{\varphi_3}{\mu} \Upsilon.
    \end{align*}
   Here, $\Upsilon$ is defined by  
   \begin{align}\label{club_start}
\Upsilon:&=\frac{\varepsilon}{\varphi_2}\bigg(\frac{\omega(\varepsilon^{-1}\calX_\varepsilon)}{\sqrt{m}}+\sqrt{\overline{\Upsilon}\log\Big(\frac{1}{\overline{\Upsilon}}\Big)+\zeta\log\Big(\frac{1}{\zeta}\Big)}\bigg)\bigg(\frac{\omega(\phi^{-1}\calK_{\calX,\phi})}{\sqrt{m}}+\sqrt{\overline{\Upsilon}\log\Big(\frac{1}{\overline{\Upsilon}}\Big)+\zeta\log\Big(\frac{1}{\zeta}\Big)}\bigg)\\
    &+ \sqrt{\overline{\Upsilon}+\zeta}\bigg(\frac{\omega(\phi^{-1}\calK_{\calX,\phi})}{\sqrt{m}}+\sqrt{\overline{\Upsilon}\log\Big(\frac{1}{\overline{\Upsilon}}\Big)+\zeta\log\Big(\frac{1}{\zeta}\Big)}\bigg) \\
    &\textrm{where}~~ \overline{\Upsilon}:=\frac{\mathscr{H}(\calX,\varepsilon)}{m}+ \frac{\varphi_5\omega(\calX_\varepsilon)}{\sqrt{\zeta m}}+\varphi_5 \varepsilon\sqrt{\log(e/\zeta)}.\label{club_end}
\end{align}
\end{theorem}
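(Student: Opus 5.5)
We prove the convex-set analogue of Theorem \ref{uraic_cone} by splitting the RAIC quantity into a linear part and a nonlinear part, now measured in the localized dual norm. Write $\bw=\bu-\bx$ and let $\calK_{\calX,\phi}:=\bigcup_{\bx\in\calX}\calK_{\bx,\phi}$; since $\calK_{\bx,\phi}\subset\calK_{\calX,\phi}$, the dual norm is monotone in the set. As in (\ref{2terms}), we bound
\begin{align*}
\frac1\phi\Big\|\bu-\bx-\tfrac1m\textstyle\sum_i(\ba_i^T\bu-\tfrac{f_i(\ba_i^T\bx)}{\mu})\ba_i\Big\|_{\calK_{\bx,\phi}^\circ}
\le \sup_{\bq\in\phi^{-1}\calK_{\calX,\phi}}\Big|\bw^T\bq-\tfrac1m\textstyle\sum_i \ba_i^T\bw\,\ba_i^T\bq\Big| + \sup_{\bq\in\phi^{-1}\calK_{\calX,\phi}}\tfrac1m\textstyle\sum_i\tilde f_i(\ba_i^T\bx)\ba_i^T\bq .
\end{align*}

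\textbf{Linear term.} Because $\calK$ is convex, $\bw\in\calK-\bx$. Whenever $\|\bw\|_2\ge\phi$, star-shapedness gives $\phi\bw/\|\bw\|_2\in\calK_{\bx,\phi}$. A standard two-set matrix deviation inequality (Gaussian $\ba_i$, e.g.\ Mendelson's product-process bound) over $\phi^{-1}\calK_{\calX,\phi}\times\phi^{-1}\calK_{\calX,\phi}$ then yields $\frac{C\omega(\phi^{-1}\calK_{\calX,\phi})}{\sqrt m}\max\{\|\bw\|_2,\phi\}$, with probability $1-\exp(-c\,\omega^2(\phi^{-1}\calK_{\calX,\phi}))$ under (\ref{sampleconvex}). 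The part contributed by $\phi$ becomes the term $\omega(\calK_{\calX,\phi})/\sqrt m$ in $\mu_2$.

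\textbf{Nonlinear term.} We center it: $\mathbb E[\tilde f_i(\ba_i^T\bx)\ba_i]=\bx-\mathbb E[f_i(\|\bx\|_2g)g]\bx/(\mu\|\bx\|_2)$. Its inner product with $\bq\in\phi^{-1}\calK_{\calX,\phi}\subset B_2^n$ is at most $\rho(\bx)$, which produces the $\sup\rho$ term. For the fluctuation, take an $\varepsilon$-net $\calN$ of $\calX$ with $|\calN|=e^{\mathscr H(\calX,\varepsilon)}$, and for each $\bx$ write $\tilde f_i(\ba_i^T\bx)=\tilde f_i(\ba_i^T\bx')+[\tilde f_i(\ba_i^T\bx)-\tilde f_i(\ba_i^T\bx')]$ with $\bx'\in\calN$ nearest.
\begin{enumerate}
\item[(i)] \emph{Net points.} For fixed $\bx'$, the process $\bq\mapsto\frac1m\sum_i(\tilde f_i\ba_i^T\bq-\mathbb E)$ is a multiplier process with $\psi_2$-multiplier by (C1). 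Mendelson's multiplier inequality plus a union bound over $\calN$ gives $\varphi_1\big(\sqrt{\mathscr H/m}+\omega(\phi^{-1}\calK_{\calX,\phi})/\sqrt m\big)$.
\item[(ii)] \emph{Differences.} By (\ref{worstbound}) and (C4), the Lipschitz part contributes $\varphi_4|\ba_i^T(\bx-\bx')|$, hence $\varphi_4\varepsilon$ up to the RIP-type bound $\sup\frac1m\sum|\ba_i^T\bv||\ba_i^T\bq|$ for $\bv\in\calX_\varepsilon$. The jump part is nonzero only on the index set $J_{\bx}=\{i:\ \text{a discontinuity lies between }\ba_i^T\bx'\text{ and }\ba_i^T\bx\}$. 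There the jump is bounded by $\frac{\varphi_3}{\mu}(1+|\ba_i^T(\bx-\bx')|/\varphi_2)$.
\end{enumerate}
So we must control $\frac1m\sum_{i\in J_{\bx}}|\ba_i^T\bq|\,(1+|\ba_i^T\bv|/\varphi_2)$ uniformly in $\bx,\bq$.

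\textbf{Counting near-discontinuity indices (main obstacle).} We bound $|J_{\bx}|$ uniformly. Index $i$ lies in $J_{\bx}$ only if either $\dist(\ba_i^T\bx',\calD_{f_i})\le t$ or $|\ba_i^T(\bx-\bx')|>t$. For the first event, (C5) and a Chernoff bound with a union over $\calN$ give at most $m(\varphi_5t+\mathscr H/m)$ such indices. For the second, we bound $\sup_{\bv\in\calX_\varepsilon}$ of the number of $i$ with $|\ba_i^T\bv|>t$ via the $\ell_2$ tail of order statistics: the sum of the largest $\zeta m$ values of $|\ba_i^T\bv|$ is at most $\sqrt{\zeta m}\,(\omega(\calX_\varepsilon)+\varepsilon\sqrt{\zeta m\log(e/\zeta)})$. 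Choosing $t\asymp(\omega(\calX_\varepsilon)/\sqrt{\zeta m}+\varepsilon\sqrt{\log(e/\zeta)})$ gives $|J_{\bx}|\lesssim(\overline\Upsilon+\zeta)m$, matching (\ref{club_end}); this is where the $\varphi_5^2\omega^2(\calX_\varepsilon)/\zeta$ sample requirement enters.

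Finally, for any index set of size $\le\theta m$, the maximal sums $\sup_{|J|\le\theta m}\frac1m\sum_{i\in J}|\ba_i^T\bq|$ and the product version with $|\ba_i^T\bv|$ are bounded by Cauchy--Schwarz and the same order-statistic bound. These give $\sqrt\theta\big(\omega(\phi^{-1}\calK_{\calX,\phi})/\sqrt m+\sqrt{\theta\log(1/\theta)}\big)$ and the product of the two such factors (the latter scaled by $\varepsilon/\varphi_2$), which are exactly the two lines of $\Upsilon$. Collecting (i), (ii), the count bound, and the linear term, then absorbing constants, gives the claim.

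I expect the hardest step to be the uniform count of $J_{\bx}$ together with the maximal index-set sums. This is where care is needed so that the $t$-threshold splitting and the union over the net do not cost more than $\overline\Upsilon\log(1/\overline\Upsilon)$.
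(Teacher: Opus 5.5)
Your proposal is correct and follows essentially the same route as the paper. The linear term is handled via $\max\{\|\bu-\bx\|_2,\phi\}$, star-shapedness of $\calK-\bx$, and Mendelson's product bound over $\phi^{-1}\calK_{\calX,\phi}$; the nonlinear term via centering to produce $\rho(\bx)$, an $\varepsilon$-net with a union bound at the net points, and control of the increments on the union of the near-discontinuity set and the large-increment set, counted via (C5) with a Chernoff bound and via Lemma~\ref{lem:maxlsum}, then Cauchy--Schwarz on small index sets. The only cosmetic difference is that you split each increment into a Lipschitz part over all indices plus a jump part supported on the crossing set, whereas the paper splits the index set into $\bar{\calI}_{\bp,\hat{\bp},\eta}$ and its complement and applies (\ref{worstbound}) on the former; both give the same $\varphi_4\varepsilon+\frac{\varphi_3}{\mu}\Upsilon$ bound.
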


\subsection{General Uniform Recovery Guarantees}\label{sec4:guarantee}
Under Assumptions \ref{assump:sg} and \ref{assump:fi}, our goal is to uniformly recovery all $\bx\in\calX$ from $\{y_i=f_i(\ba_i^T\bx)\}_{i=1}^m$ using (\ref{pgd}). With the gradient chosen as in (\ref{hxsim}), the step size $\eta = \mu^{-2}$, and $\bx_0$ be a point in $\calK$, the algorithm reads 
\begin{align} \label{simpgd}
    \bx_{t+1} = P_{\calK}\bigg(\bx_t -\frac{1}{m}\sum_{i=1}^m\big(\ba_i^T\bx_{t}-\frac{y_i}{\mu}\big)\ba_i\bigg),\quad t=0,1,\cdots. 
\end{align}

We further make the following assumption.
\begin{assumption}\label{assumption:nonzero}
    $\calX\subset \mathbb{S}^{n-1}$ and \[\mathbb{E}_{g\sim N(0,1)}[f_i(g)g]\ne 0.\] 
\end{assumption}
 Under this assumption, we shall set 
\begin{align}\label{valuemu}
    \mu: = \mathbb{E}_{g\sim N(0,1)}[f_i(g)g]
\end{align}
so that $\rho(\bx)=0$ holds for all $\bx\in\calX$.

We now present the uniform recovery guarantees which, under the unified framework introduced in Section \ref{sec:schedule}, are direct outcomes of the uniform RAIC in Theorems \ref{uraic_cone} and \ref{uraic_convex}.  

\begin{theorem}[PGD with  cone $\calK$] \label{thm:recoverycone}
    Assume that Assumptions \ref{assump:sg}, \ref{assump:fi}, \ref{assumption:nonzero} hold, and that   $\calK$ is a cone containing $\calX$.  For any sufficiently small $\varepsilon,\zeta>0$, if (\ref{sam:raiccone}) holds, then with probability at least $1-\exp(-c_1\omega^2(\calK_1))-\exp(-c_2\mathscr{H}(\calX,\varepsilon))$, for all $\bx\in\calX$, the sequence $\{\bx_t\}_{t\ge 0}$ generated by (\ref{simpgd}) with some $\bx_0=0$ satisfies 
\begin{align*}
    \|\bx_t-\bx\|_2&\le \bigg(\frac{C_1\omega(\calK_1)}{\sqrt{m}}\bigg)^t\|\bx_0-\bx\|_2 + C_2\varphi_1\sqrt{\frac{\mathscr{H}(\calX,\varepsilon)+\omega^2(\calK_1)}{m}}+ C_3\varepsilon\varphi_4+ \frac{C_5\varphi_3}{\mu}\Xi
\end{align*}
for any integer $t\ge 0$, where   $\Xi$ is defined in  (\ref{heartstart})--(\ref{heartend}).
\end{theorem}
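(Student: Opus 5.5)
The plan is to combine the uniform RAIC of Theorem \ref{uraic_cone} with the deterministic uniform guarantee of Theorem \ref{thm:uniformcone}. No new probabilistic argument is needed.

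\textbf{Step 1 (model mismatch vanishes).} By Assumption \ref{assumption:nonzero}, every $\bx\in\calX$ has $\|\bx\|_2=1$. With $\mu$ chosen as in (\ref{valuemu}), definition (\ref{rhox}) gives
\[\rho(\bx)=\big|\mathbb{E}[f_i(g)g]/\mu-1\big|=0 \quad\text{for all } \bx\in\calX.\]
Hence the term $O(\sup_{\bx\in\calX}\rho(\bx))$ in Theorem \ref{uraic_cone} drops out.

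\textbf{Step 2 (uniform RAIC).} Assume (\ref{sam:raiccone}) and that $\varepsilon,\zeta$ are small. Theorem \ref{uraic_cone} then gives the following on an event of probability at least $1-\exp(-c_1\omega^2(\calK_1))-\exp(-c_2\mathscr{H}(\calX,\varepsilon))$, simultaneously for all $\bx\in\calX$:
\[\bh_{\bx}(\bu)\sim{\rm RAIC}\big(\calK;\calK,\mu_1\|\bu-\bx\|_2+\mu_2,\mu^{-2}\big), \qquad \mu_1=\frac{C\omega(\calK_1)}{\sqrt m}.\]
Here $\mu_2$ is bounded as in that theorem. We fix this event; everything afterwards is deterministic.

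\textbf{Step 3 (verify the hypotheses of Theorem \ref{thm:uniformcone}).}
\begin{itemize}
\item Contraction: we need $\mu_1<\tfrac12$. This holds if the implicit constant in (\ref{sam:raiccone}) is large enough, since then $m\gtrsim\omega^2(\calK_1)$ forces $C\omega(\calK_1)/\sqrt m<1/4$.
\item Domain: take $\calU_{\bx}=\calK$ and $d_{\bx}=\infty$. Then (\ref{conecon3}) is trivial.
\item Initialization: $\bx_0=0$ lies in the cone $\calK$, so (\ref{conecon4}) holds.
\item Iteration: with $\eta=\mu^{-2}$, the update (\ref{pgd}) applied to $\bh_{\bx}$ from (\ref{hxsim}) is exactly (\ref{simpgd}).
\end{itemize}

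\textbf{Step 4 (conclude).} Theorem \ref{thm:uniformcone} gives, for all $\bx\in\calX$ and all $t\ge 0$,
\[\|\bx_t-\bx\|_2\le(2\mu_1)^t\|\bx_0-\bx\|_2+\frac{2\mu_2}{1-2\mu_1}.\]
Since $\mu_1<1/4$, we have $1-2\mu_1\ge 1/2$, so the error term is at most $4\mu_2$. Substituting the bound on $\mu_2$ gives the three displayed error terms with adjusted constants $C_2,C_3,C_5$. The contraction factor is $(2C\omega(\calK_1)/\sqrt m)^t$, i.e.\ $C_1=2C$.

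\textbf{Main obstacle.} The only point needing care is constant bookkeeping: the sample-size condition must guarantee $\mu_1<1/2$ strictly, uniformly in $\bx$. This follows because $\mu_1$ does not depend on $\bx$. Everything else is a direct substitution.
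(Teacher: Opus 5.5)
Your proposal is correct and is essentially the paper's own argument. You invoke the uniform RAIC of Theorem \ref{uraic_cone}, where $\rho\equiv 0$ follows from Assumption \ref{assumption:nonzero} and the choice (\ref{valuemu}), and then apply Theorem \ref{thm:uniformcone} with $\calU_{\bx}=\calK$, $d_{\bx}=\infty$, $\bx_0=0$. Your explicit checks fill in details the paper leaves implicit: that (\ref{sam:raiccone}) with a large enough constant gives $\mu_1<\tfrac14$, that the step size $\eta=\mu^{-2}$ turns the update into (\ref{simpgd}), and that $\frac{2\mu_2}{1-2\mu_1}\le 4\mu_2$.
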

\begin{proof}
    The result follows from invoking Theorem \ref{uraic_cone} to establish the uniform RAIC (for all $\bx\in \calX$) and then applying Theorem \ref{thm:uniformcone}. 
\end{proof}
\begin{theorem}[PGD with  convex set $\calK$]\label{thm:recoveryconvex}
Assume that Assumptions \ref{assump:sg}, \ref{assump:fi}, \ref{assumption:nonzero} hold, and that   $\calK$ is a convex set containing $\calX$. For any sufficiently small $\phi,\varepsilon,\zeta>0$, if (\ref{sampleconvex}) holds, then with probability at least $1-\exp(-c_1\phi^{-2}\omega^2(\calK_{\calX,\phi}))$, for all $\bx\in\calX$, the sequence $\{\bx_t\}_{t\ge 0}$ generated by (\ref{simpgd}) with some $x_0\in \calK$ satisfies  
\begin{align}\nonumber
    \|\bx_t-\bx\|_2 &\le \bigg(\frac{C_1\omega(\calK_{\calX,\phi})}{\phi\sqrt{m}}\bigg)^t \|\bx_0-\bx\|_2 + C_2\varphi_1 \sqrt{\frac{\mathscr{H}(\calX,\varepsilon)}{m}} \\ \label{pgdconvexbound2}
    &+ \frac{C_3\varphi_1\omega(\calK_{\calX,\phi})}{\phi\sqrt{m}} + 2\phi+C_4\varepsilon\varphi_4+ \frac{C_5\varphi_3}{\mu}\Upsilon,\qquad t=0,1,2,...,
\end{align}
 where $\Upsilon$ is defined  in  (\ref{club_start})--(\ref{club_end}). 
\end{theorem}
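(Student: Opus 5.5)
The plan mirrors the proof of Theorem \ref{thm:recoverycone}: combine the uniform RAIC for a convex set with the deterministic uniform guarantee of Theorem \ref{thm:uniformconvex}. First, under Assumption \ref{assumption:nonzero} and the choice (\ref{valuemu}) of $\mu$, we have $\rho(\bx)=0$ for every $\bx\in\calX\subset\mathbb{S}^{n-1}$, so every $\sup_{\bx\in\calX}\rho(\bx)$ term in Theorem \ref{uraic_convex} drops out. Invoking Theorem \ref{uraic_convex} under the sample size (\ref{sampleconvex}), on an event of probability at least $1-\exp(-c_1\phi^{-2}\omega^2(\calK_{\calX,\phi}))$, we get, simultaneously for all $\bx\in\calX$,
\[
\bh_{\bx}(\bu)\sim{\rm RAIC}_\phi\big(\calK;\calK,\mu_1\|\bu-\bx\|_2+\mu_2,\mu^{-2}\big),
\]
with $\mu_1=C\omega(\phi^{-1}\calK_{\calX,\phi})/\sqrt{m}=C\omega(\calK_{\calX,\phi})/(\phi\sqrt m)$ by homogeneity of the Gaussian width. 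Here $\mu_2$ is bounded as stated in Theorem \ref{uraic_convex}, again with the $\rho$ term set to zero.

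Next, I would check the remaining hypotheses of Theorem \ref{thm:uniformconvex} on this event. Enlarging the implied constant in (\ref{sampleconvex}) makes $m\ge 16C^2\omega^2(\calK_{\calX,\phi})/\phi^2$, hence $\mu_1\le 1/4<1/2$. Taking $\calU_{\bx}=\calK$ and $d_{\bx}=\infty$ makes (\ref{convexcon3}) trivial. Since every iterate of (\ref{simpgd}) lies in $\calK$, any $\bx_0\in\calK$ satisfies (\ref{convexcon4}). The algorithm (\ref{simpgd}) is exactly (\ref{pgd}) with gradient (\ref{hxsim}) and step $\eta=\mu^{-2}$, so Theorem \ref{thm:uniformconvex} gives, for all $\bx\in\calX$ and $t\ge0$,
\[
\|\bx_t-\bx\|_2\le(2\mu_1)^t\|\bx_0-\bx\|_2+\frac{2\mu_2+\phi}{1-2\mu_1}.
\]

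Finally, I would simplify the constants. Because $1-2\mu_1\ge 1/2$, the second term is at most $4\mu_2+2\phi$. Substituting the bound on $\mu_2$ produces the terms $\varphi_1\sqrt{\mathscr{H}(\calX,\varepsilon)/m}$, $\varepsilon\varphi_4$ and $(\varphi_3/\mu)\Upsilon$. The term $(1+\varphi_1/\phi)\omega(\calK_{\calX,\phi})/\sqrt m$ splits into $\varphi_1\omega(\calK_{\calX,\phi})/(\phi\sqrt m)$ plus $\omega(\calK_{\calX,\phi})/\sqrt m$. The contraction factor $(2\mu_1)^t$ becomes $(C_1\omega(\calK_{\calX,\phi})/(\phi\sqrt m))^t$ with $C_1=2C$.

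The only step needing care, and the one I expect to be the main obstacle, is the stray term $\omega(\calK_{\calX,\phi})/\sqrt m$, which does not appear in (\ref{pgdconvexbound2}). I would first absorb it into the $2\phi$ term: (\ref{sampleconvex}) gives $\omega(\calK_{\calX,\phi})/\sqrt m\lesssim\phi$. A constant-factor enlargement of this bound is acceptable, and taking the constant in (\ref{sampleconvex}) large enough should keep the $\phi$ coefficient at $2$. If that fails, I would instead absorb it into $C_3\varphi_1\omega(\calK_{\calX,\phi})/(\phi\sqrt m)$, which requires $\varphi_1\gtrsim\phi$; this holds for sufficiently small $\phi$ when $\varphi_1$ is treated as a constant. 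Everything else is a direct substitution.
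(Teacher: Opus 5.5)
Your proposal is correct and matches the paper's argument: invoke Theorem \ref{uraic_convex} with $\rho\equiv 0$ under Assumption \ref{assumption:nonzero} to get the uniform RAIC, then apply Theorem \ref{thm:uniformconvex} with $\calU_{\bx}=\calK$ and $d_{\bx}=\infty$. Your treatment of the leftover term fills in a detail the paper leaves implicit. Writing $\omega(\calK_{\calX,\phi})/\sqrt{m}=\phi\,\delta$ with $\delta=\omega(\calK_{\calX,\phi})/(\phi\sqrt m)$, you absorb it into $2\phi$ by enlarging the constant in (\ref{sampleconvex}), so that $\phi(1+c\delta)/(1-2C\delta)\le 2\phi$. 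This first route already works, so your fallback assumption $\varphi_1\gtrsim\phi$, which is not guaranteed in general, is never needed.
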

\begin{proof}
    The result follows from invoking Theorem \ref{uraic_convex} to establish the uniform RAIC (for all $\bx\in \calX$) and then applying Theorem \ref{thm:uniformconvex}. 
\end{proof}

 To put our results in perspective, we shall provide detailed comparisons to prior works. 

\begin{remark}[Comparison to nonuniform guarantees of \cite{plan2016generalized,oymak2017fast}; Cost of uniformity; and Cost of discontinuity] \label{rem:cost}The nonuniform recovery guarantee in  Theorem 1.9 of \cite{plan2016generalized}, when adapted to our setting with $\mu \asymp 1$, states the following: under Assumptions \ref{assump:sg}, \ref{assumption:nonzero} and $\|g-\frac{f_i(g)}{\mu}\|_{\psi_2}\le \varphi_1$ with $g\sim N(0,1)$,\,\footnote{This is exactly (C1) in our Assumption \ref{assump:fi} when $\calX\subset \mathbb{S}^{n-1}$. However, this condition is not needed in \cite{plan2016generalized}; instead, their treatment is more general and relies on two parameters --- $\hat{\sigma}^2:= \mathbb{E}(f_i(g)-\mu g)^2$ and $\hat{\eta}^2:=\mathbb{E}[(f(g)-\mu g)^2g^2]$ --- to achieve  concentration; see Equation (I.5) therein. In the comparison, we stick with $\|g-\frac{f_i(g)}{\mu}\|_{\psi_2}$ and $\mu\asymp 1$, under which  the two parameters are bounded by $\hat{\sigma},\hat{\eta}\lesssim\varphi_1$.}   for a fixed $\bx\in \calK\cap \mathbb{S}^{n-1}$ and any small enough $\phi>0$,   if \begin{align}
    m\gtrsim \frac{\omega^2(\calK_{\bx,\phi})}{\phi^2},\label{samnonuniform}
\end{align}
then the convex program {\it generalized Lasso}\,\footnote{In \cite{plan2016generalized}, the authors assumed $\mu \bx\in \calK$ and used the regular $\ell_2$-loss $\frac{1}{2m}\sum_{i=1}^m (y_i - \ba_i^T\bu)^2$ in (\ref{Klasso}), which is slightly different from but technically equivalent to $\bx\in\calK$ and the loss (\ref{rescalel2}) adopted here.}
\begin{align}\label{Klasso}
    \hat{\bx}_{GLasso} = \textrm{arg}\min_{u\in \calK}\,\frac{1}{2m}\sum_{i=1}^m(y_i-\mu \ba_i^T\bu)^2
\end{align} 
satisfies the high-probability error bound 
\begin{align}\label{glassobound}
    \|\hat{\bx}_{GLasso}-\bx\|_2 \lesssim \frac{\varphi_1\omega(\calK_{\bx,\phi})}{\phi\sqrt{m}}+\phi.
\end{align}

This result can be closely compared to our Theorem \ref{thm:recoveryconvex} concerning with {\it PGD  in (\ref{simpgd}) with convex set $\calK$}, which, indeed, can be viewed as a computational procedure for solving (\ref{Klasso}). The main difference is that our guarantee is uniform for all $\bx\in\calX$, under the additional Assumption \ref{assump:fi}.

Toward a comparison, we observe that     our Theorem \ref{thm:recoveryconvex} with  $\calX=\{\bx\}$ for a fixed $\bx\in \calK\cap \mathbb{S}^{n-1}$ --- which  concerns the nonuniform recovery of $\bx$ --- is consistent with the result in \cite{plan2016generalized}: since $\calX$ is now a singleton, we have $\mathscr{H}(\calX,\varepsilon)=0$ ($\forall \varepsilon>0$) and $\omega(\calX_\varepsilon)=0$, and thus the sample complexity (\ref{sampleconvex}) reduces to (\ref{samnonuniform}); by further working with $\varepsilon\to 0$, $\Upsilon$  in (\ref{club_start})--(\ref{club_end}) simplifies to
\begin{align*}
    &\Upsilon = \frac{\sqrt{\zeta}\cdot\omega(\calK_{\bx,\phi})}{\phi\sqrt{m}} + \zeta\sqrt{\log(1/\zeta)}
\end{align*}
and thus also vanishes by taking $\zeta \to 0$; in turn, when $t$ is sufficiently large, the guarantee in  (\ref{pgdconvexbound2}) reduces to  \[\|\bx_t-\bx\|_2 \lesssim \frac{\varphi_1\omega(\calK_{\bx,\phi})}{\phi\sqrt{m}} + \phi,\] which is  exactly identical to (\ref{glassobound}). As such,  our Theorem \ref{thm:recoveryconvex} can be viewed as an extension of \cite{plan2016generalized}, recovering their result when setting $\calX$ as a singleton.

When $\calX$ is not a singleton, under sufficiently large $t$, Theorem \ref{thm:recoveryconvex} gives the uniform error bound 
\begin{align*}
    \sup_{\bx\in\calX}\|\bx_t-\bx\|_2 \lesssim \frac{\varphi_1\omega(\calK_{\calX,\phi})}{\phi\sqrt{m}} + \phi + \varphi_1\sqrt{\frac{\mathscr{H}(\calX,\varepsilon)}{m}} + \varepsilon\varphi_4 + \frac{\varphi_3}{\mu} \Upsilon.
\end{align*}
Therefore, a useful perspective is to interpret the additional terms 
\begin{align*}  
    \varphi_1\sqrt{\frac{\mathscr{H}(\calX,\varepsilon)}{m}} + \varepsilon\varphi_4 + \frac{\varphi_3}{\mu} \Upsilon
\end{align*}
and the increased term $\frac{\varphi_1\omega(\calK_{\calX,\phi})}{\phi\sqrt{m}}$ (compared to $\frac{\varphi_1\omega(\calK_{\bx,\phi})}{\phi\sqrt{m}}$ in (\ref{glassobound})) as {\it the cost of the uniformity}.

Similarly, one can compare the nonuniform recovery guarantee due to \cite{oymak2017fast}, who also considered PGD, with our  Theorem \ref{thm:recoverycone}. In this setting, the cost of uniformity is mainly captured by the additional terms\,\footnote{\label{fn:pgdcompute}As a passing note, we mention that Theorem \ref{thm:recoverycone} does not have a counterpart for generalized Lasso (\ref{Klasso}) and is unique to PGD: while projection onto cone $\calK$ is computationally tractable for important instances such as $\calK=\Sigma^n_k,~M^{n_1,n_2}_r$, (\ref{Klasso}) with a low-complexity cone $\calK$ is in general computationally intractable. This computational advantage of projected gradient descent (over the corresponding convex program) has been pointed out in  \cite{soltanolkotabi2019structured,oymak2017fast}.} 
\begin{align*} 
    \varphi_1 \sqrt{\frac{\mathscr{H}(\calX,\varepsilon)}{m}} + \varepsilon\varphi_4 + \frac{\varphi_3}{\mu}\Xi. 
\end{align*}

Moreover, we notice that $f_i$ is Lipschitz continuous if and only if $\varphi_3=0$; see (C3) in Assumption \ref{assump:fi}. Therefore,  the terms  $\frac{\varphi_3}{\mu}\Xi$ in Theorem \ref{thm:recoverycone} and  $\frac{\varphi_3}{\mu}\Upsilon$ in Theorem \ref{thm:recoveryconvex} can be further interpreted as {\it the cost of discontinuity} in uniform recovery.

Finally, the discontinuous $f_i$  has only one discontinuity if and only if $\varphi_2=\infty$; see (C2) in Assumption \ref{assump:fi}. Therefore, the contributors to  $\Xi$ in Equation (\ref{heartmultiple}), namely 
\[\frac{\varepsilon}{\varphi_2} \bigg(\frac{\omega^2(\calK_1)}{m}+\overline{\Xi}\log\Big(\frac{1}{\overline{\Xi}}\Big)+\zeta\log\Big(\frac{1}{\zeta}\Big)\bigg),\]
and the contributors to $\Upsilon$ in Equation  (\ref{club_start}), that is
\[\frac{\varepsilon}{\varphi_2}\bigg(\frac{\omega(\varepsilon^{-1}\calX_\varepsilon)}{\sqrt{m}}+\sqrt{\overline{\Upsilon}\log\Big(\frac{1}{\overline{\Upsilon}}\Big)+\zeta\log\Big(\frac{1}{\zeta}\Big)}\bigg)\bigg(\frac{\omega(\phi^{-1}\calK_{\calX,\phi})}{\sqrt{m}}+\sqrt{\overline{\Upsilon}\log\Big(\frac{1}{\overline{\Upsilon}}\Big)+\zeta\log\Big(\frac{1}{\zeta}\Big)}\bigg),\]
capture {\it the cost of multiple points of discontinuity} in uniform recovery. 
    \remarkend
\end{remark}
In the following, we discuss the explicit error decay rates in some canonical examples. Our goal is to identify how much the uniform recovery error rate is worse than the nonuniform recovery error rate. 
We shall  pause to distinguish two types of $\calX$ based on the dependence of $\mathscr{H}(\calX,\varepsilon)$ on $\varepsilon$:
\begin{enumerate}
    \item {\it $\calX$ is a structured set.} 
 The first type of $\calX$ is the so-called structured set, whose defining feature is that $\mathscr{H}(\calX,\varepsilon)$ only logarithmically depends on $\varepsilon$.   Canonical examples  include $\Sigma^{n,*}_k$ and $M^{n_1,n_2,*}_r$, in light of their well-known covering number bounds (see, e.g., \cite{plan2012robust,candes2011tight}).

\begin{definition}[e.g., \cite{xu2020quantized,oymak2015near,jacques2017small,chen2023uniform,chen2024uniform}] We say $\calX$ is a structured set if the metric entropy defined in (\ref{metricentro}) satisfies
\begin{align}\label{structured}
    \mathscr{H}(\calX,\varepsilon)\le C\omega^2(\calX)\log(1+\varepsilon^{-1}),\quad \forall \varepsilon >0  
\end{align}
for some absolute constant $C$. 
\end{definition}

\begin{remark}
For structured set $\calX$ and any $\varepsilon\in (0,1)$ and $\eta>0$, the definition of covering number yields
\begin{align*}
    \mathscr{N}\Big(\frac{\calX_\varepsilon}{\varepsilon},\eta\Big)\le \mathscr{N}\Big(\frac{\calX-\calX}{\varepsilon},\eta\Big) = \mathscr{N}(\calX-\calX,\varepsilon\eta) \le \mathscr{N}\Big(\calX,\frac{\varepsilon\eta}{2}\Big)^2
\end{align*}
and hence
\begin{align*}
    \mathscr{H}\Big(\frac{\calX_\varepsilon}{\varepsilon},\eta\Big) \le 2\mathscr{H}\Big(\calX,\frac{\varepsilon\eta}{2}\Big)\lesssim \omega^2(\calX)\log\Big(1+\frac{1}{\varepsilon\eta}\Big).
\end{align*}
Then, Dudley's inequality gives
\begin{align}
    \omega(\varepsilon^{-1}\calX_\varepsilon) \lesssim \int_0^2 \sqrt{\omega^2(\calX)\log\Big(1+\frac{1}{\varepsilon\eta}\Big)}\,d\eta\lesssim\omega(\calX)\log(1+\varepsilon^{-1}). \label{localwidthb}
\end{align}
    \remarkend
\end{remark}

\item {\it $\calX$ is a general set.}  The second is the general setting where  $\mathscr{H}(\calX,r)$ can only be bounded via  Sudakov's inequality 
\begin{align}\label{sudakov}
    \mathscr{H}(\calX,\varepsilon)\le \frac{C\omega^2(\calX)}{\varepsilon^2} 
\end{align}
where $C $  is an absolute constant, and no more information is available on $\mathscr{H}(\calX,r)$. Here, the quadratic dependence is essentially worse then the logarithmic one in (\ref{structured}). Nonetheless, note that (\ref{sudakov}) turns out to be nearly tight for some examples of interest, such as the set of approximately $k$-sparse vectors $\sqrt{k}\mathbb{B}_1^n\cap \mathbb{S}^{n-1}$ \citep{plan2017high}. 
\end{enumerate}

We are now ready to examine several concrete settings and determine the uniform recovery error rates.

\paragraph{Concrete setting (a):  $\calX$ is a structured set and $\calK$ is a cone.} The canonical examples include 
\[\textrm{$(\calX,\calK)=(\Sigma^{n,*}_k,\Sigma^n_k)$ ~and~ $(\calX,\calK)=(M^{n_1,n_2,*}_r,M^{n_1,n_2}_r)$.}\]

\begin{remark}[Uniform recovery error rate in setting (a)] \label{rem:ex1cone}  The nonuniform error rate for recovering a fixed $x$ via PGD, which was established in \cite{oymak2017fast}, reads 
\begin{align}\label{nonuniformcase1}
    \|\hat{\bx}_{pgd} -\bx \|_2 \lesssim \varphi_1 \sqrt{\frac{\omega^2(\calK_1)}{m}}.
\end{align}
By Theorem \ref{thm:recoverycone}, (\ref{structured}) and $\calX\subset \calK_1$,
PGD achieves the uniform recovery error rate 
\begin{align}\label{uniformcone11}
    \sup_{\bx\in \calX}\|\hat{\bx}_{pgd} -\bx \|_2\lesssim \varphi_1 \sqrt{\frac{\omega^2(\calK_1)}{m}} + \varepsilon\varphi_4 + \frac{\varphi_3}{\mu}\Xi
\end{align}
up to a factor of $\sqrt{\log(1+\varepsilon^{-1})}$,
where, by substituting (\ref{structured}) into   (\ref{heartstart})--(\ref{heartend}) and {\it ignoring log factors},   
\begin{align}\label{simheart1}
    &\Xi: = \Big(1+\frac{\varepsilon}{\varphi_2}\Big)\Big(\frac{\omega^2(\calK_1)}{m}+\frac{\varphi_5\varepsilon\omega(\calK_1)}{\sqrt{\zeta m}}+\varphi_5\varepsilon+\zeta\Big).
\end{align}
Suppose that \[\textrm{$\varphi_1=\Theta(1)$, $\varphi_2=\Omega(1)$ and    $\varphi_3,\varphi_4,\varphi_5=O(1)$.}\] We then choose  $\varepsilon=\zeta$ to be at  a sufficiently small order to guarantee that the terms with $\varepsilon$, $\zeta$ or $\frac{\varepsilon}{\sqrt{\zeta}}=\sqrt{\varepsilon}$ as a leading factor are negligible. For instance, we may choose \[\varepsilon=\zeta = \Big(\frac{\omega^2(\calK_1)}{m}\Big)^{10};\]
here, then $\varepsilon\varphi_4+\frac{\varphi_3}{\mu}\Xi =\tilde{O}(\frac{\omega^2(\calK_1)}{m})$ when hiding log factors, and therefore the uniform recovery error from (\ref{uniformcone11}) reads
\begin{align*}
     \sup_{\bx\in \calX}\|\hat{\bx}_{pgd} - \bx \|_2=\tilde{O}\bigg(\varphi_1\sqrt{\frac{\omega^2(\calK_1)}{m}}\bigg).
\end{align*} 
This is identical to the nonuniform one (\ref{nonuniformcase1}) up to log factors, indicating that the uniformity costs very little in this setting. 
\remarkend 
\end{remark}

For any $\calU$ we let $\cone(\calU)$ be the minimal cone containing $\calU$. We say that $\calK_\calX$ has a descent cone structure if $\cone(\calK_{\calX})$ remains low-complexity in terms of  Gaussian width. Two canonical examples are the Lasso-type convex relaxations in sparse or low-rank recovery such that the true signals lie in the boundary of $\calK$ (we shall let $c_*\in(0,1]$ in the following): 
\begin{enumerate}
    \item (sparse recovery) For  
\begin{align} \label{sparseconvexexample}
    \calX = \Sigma^{n,*}_k\cap \{\bu:\|\bu\|_1= c_*\sqrt{k}\}\quad\text{and}\quad\calK=\mathbb{B}_1^n(c_*\sqrt{k}), 
\end{align}
    we have that 
    $
        \omega^2(\cone(\calK_{\calX})\cap \mathbb{B}_2)\lesssim k\log\Big(\frac{en}{k}\Big)$ 
    is at the same order as $\omega^2(\Sigma^{n,*}_k)$. 
    \item (low-rank recovery) For
    \begin{align}
        \label{lowrankconvexexample}
        \calX = M^{n_1,n_2,*}_r\cap \{\bX:\|\bX\|_{nu}=c_*\sqrt{r}\}\quad\text{and}\quad \calK=\mathbb{B}_{nu}(c_*\sqrt{r}), 
    \end{align}
   we have that $
        \omega^2(\cone(\calK_{\calX})\cap \mathbb{B}_F) \lesssim r(n_1+n_2)$ 
    is at the same order as $\omega^2(M^{n_1,n_2,*}_r)$.
\end{enumerate}
See, e.g., \cite{chandrasekaran2012convex,tropp2015convex}.

We now proceed to the second setting.

\paragraph{Concrete setting (b):  $\calX$ is a structured set, $\calK$ is a convex set, $\calK_{\calX}$ exhibits a descent cone structure.}  
Canonical examples include the above (\ref{sparseconvexexample}) and (\ref{lowrankconvexexample}). We point out that,  by using signal-dependent $\calK$, $\calX$ can be enlarged to all the sparse vectors or low-rank matrices --- for instance,  (\ref{sparseconvexexample}) can be adapted to $\calX= \Sigma^{n,*}_k$ by using $\calK = \mathbb{B}_1^n(\|\bx\|_1)$ for recovering $\bx$. 

\begin{remark}[Uniform recovery error rate in setting (b)] \label{rem:ex2coneconvex} For recovering a fixed $\bx\in\calX$, the nonuniform recovery error rate of (\ref{Klasso}) due to \cite{plan2016generalized}, as reviewed in Remark \ref{rem:cost},  is
\begin{align}\label{nonuniformglasso}
    \|\hat{\bx}_{GLasso}-\bx\|_2\lesssim \inf_{\phi\in(0,1/2)} \frac{\varphi_1\omega(\calK_{\bx,\phi})}{\phi\sqrt{m}}+\phi\lesssim \frac{\varphi_1\omega(\cone(\calK_{\bx})\cap \mathbb{B}_2^n)}{\sqrt{m}},
\end{align}
where in the second inequality we choose $\phi= \varphi_1\frac{\omega(\cone(\calK_{\bx})\cap \mathbb{B}_2^n)}{\sqrt{m}}$ and notice   \[\omega\Big(\frac{\calK_{\bx,\phi}}{\phi}\Big)\le\omega(\cone(\calK-\bx)\cap \mathbb{B}_2^n),\quad \forall \phi>0.\]

For simplicity, we enforce a very mild assumption $\omega(\calX)\lesssim \omega(\cone(\calK_{\calX})\cap \mathbb{B}_2^n)$, which holds, e.g., when $0\in\calK$. By Theorem \ref{thm:recoveryconvex}, (\ref{structured}),  and $\omega(\frac{\calK_{\calX,\phi}}{\phi}) \le \omega(\cone(\calK_{\calX})\cap \mathbb{B}_2^n)$,\,\footnote{This relaxation and $\omega(\frac{\calK_{\bx,\phi}}{\phi})\le\omega(\cone(\calK-\bx)\cap \mathbb{B}_2^n)$ used in (\ref{nonuniformglasso}) rely on the descent cone structure of $\calK_{\calX}$, which ensures that $\omega^2(\cone(\calK_{\calX})\cap \mathbb{B}_2^n)\ll n$. In general (e.g., when $\bx$ is an interior of $\calK$), one may have $\omega^2(\cone(\calK_{\calX})\cap\mathbb{B}_2^n)\asymp n $.} we reach the following: PGD in (\ref{simpgd}) achieves the uniform recovery error  
\begin{align*}
    \sup_{\bx\in\calX}\|\hat{\bx}_{pgd}-\bx\|_2 \lesssim \frac{\varphi_1\omega(\cone(\calK_\calX\cap \mathbb{B}_2^n))}{\sqrt{m}} + \phi+\varphi_4\varepsilon +\frac{\varphi_3}{\mu}\Upsilon
\end{align*}
up to a factor of $\sqrt{\log(1+\varepsilon^{-1})}$; moreover, by substituting (\ref{structured}), (\ref{localwidthb}), and $\omega(\frac{\calK_{\calX,\phi}}{\phi}) \le \omega(\cone(\calK_{\calX})\cap \mathbb{B}_2^n)$ into (\ref{club_start})--(\ref{club_end}) and ignoring log factors, the ``cost of discontinuity'' $\Upsilon$ is bounded by 
\begin{align*}
    &\Upsilon \lesssim \Big(1+\frac{\varepsilon}{\varphi_2}\Big)\bigg(\frac{\omega^2(\cone(\calK_{\calX})\cap \mathbb{B}_2^n)}{m}+\frac{\varphi_5\varepsilon\omega(\calX)}{\sqrt{\zeta m}}+\zeta+\varphi_5\varepsilon\bigg).
\end{align*}

Suppose that $\varphi_1\asymp 1$, $\varphi_2=\Omega(1)$ and $\varphi_3,\varphi_4,\varphi_5=O(1)$. Then we can set $\phi=\varepsilon=\zeta$ at a sufficiently small scaling, say, \[\Big(\frac{\omega^2(\cone(\calK_{\calX})\cap \mathbb{B}_2^n)}{m}\Big)^{10},\] to guarantee that $\phi+\varphi_4\varepsilon+\frac{\varphi_3}{\mu}\Upsilon $ is dominated by $\frac{\varphi_1\omega(\cone(\calK_{\calX}\cap \mathbb{B}_2^n))}{\sqrt{m}}$. Therefore, we obtain a uniform recovery error rate
\begin{align*}
    \sup_{\bx\in\calX}\|\hat{\bx}_{pgd}-\bx\|_2=\tilde{O}\bigg(\varphi_1\sqrt{\frac{\omega^2(\cone(\calK_{\calX})\cap \mathbb{B}_2^n)}{m}}\bigg)
\end{align*}
that is identical to (\ref{nonuniformglasso}) up to log factors as long as \[\omega^2(\cone(\calK_{\calX})\cap \mathbb{B}_2^n)=\tilde{O}(\omega^2(\cone(\calK_{\bx})\cap \mathbb{B}_2^n)),\] which is satisfied by  (\ref{sparseconvexexample}) and (\ref{lowrankconvexexample}). 
\remarkend
\end{remark}

\paragraph{Concrete setting (c): General $\calX$, convex set $\calK$, and general $\calK_{\calX}$.}
We now drop the structured set assumption on $\calX$ and the descent cone structure assumption on $\calK_{\calX}$. This setting is of interest because the ``non-structured'' $\calX$ can be a much larger set  --- such as the set of approximately $k$-sparse vectors $\calX=\mathbb{B}_1(\sqrt{k})\cap \mathbb{S}^{n-1}$ and the set of approximately rank-$r$ matrices $\calX=\mathbb{B}_{nu}(\sqrt{r})\cap \mathbb{S}_F$ --- for which we shall choose $\calK=\mathbb{B}_1(\sqrt{k})$ and $\calK=\mathbb{B}_{nu}(\sqrt{r})$, respectively. In this setting, we only assume that $\calK$ is convex so that (\ref{Klasso}) can be solved in polynomial time.

\begin{remark}[Uniform recovery error rate in setting (c)] \label{rem:ex3setX} For recovering a fixed $\bx\in\calX$, the nonuniform recovery error  \citep{plan2016generalized}  is no worse than (see Equation (\ref{glassobound})) 
\begin{align}
    \|\hat{\bx}_{GLasso}-\bx\|_2 \lesssim \inf_{\phi\in(0,1)} \frac{\varphi_1\omega(\calK_{\bx,\phi})}{\phi\sqrt{m}}+\phi \le  \inf_{\phi\in(0,1)} \frac{\varphi_1\omega(\calK_{\bx,1})}{\phi\sqrt{m}}+\phi\le 2 \sqrt{\varphi_1}\bigg(\frac{\omega^2(\calK_{\bx,1})}{m}\bigg)^{1/4}. \label{nonuniformconvexX}
\end{align}
This worst-case error rate turns out to be tight in some setting, e.g., the recovery of approximately sparse vectors; see, e.g., \cite{raskutti2011tit,plan2017high}.

On the other hand, by (\ref{sudakov}), $\omega(\calX_{\varepsilon})\le \omega(\calX-\calX)\le 2\omega(\calX)$, $\omega(\calK_{\calX,\phi})\le \omega(\calK_{\calX,1})$ (let $\varepsilon,\phi<1$), along with a very mild assumption $\omega(\calX)\lesssim \omega(\calK_{\calX,1})$, our Theorem \ref{thm:recoveryconvex} yields the uniform recovery error 
\begin{align*}
    \sup_{\bx\in\calX}\|\hat{\bx}_{pgd}-\bx\|_2 \lesssim \varphi_1  \frac{\omega(\calK_{\calX,1})}{\min\{\phi,\varepsilon\}\sqrt{m}} + \phi + \varphi_4\varepsilon + \frac{\varphi_3}{\mu}\Upsilon;
\end{align*}
moreover, by using $\omega(\calX_\varepsilon)\le 2\omega(\calX)$, $\omega(\calK_{\calX,\phi})\le \omega(\calK_{\calX,1})$, the mild assumption $\omega(\calX)\lesssim \omega(\calK_{\calX,1})$ and ignoring log factors in  (\ref{club_start})--(\ref{club_end}), 
\begin{align*}
    \Upsilon \lesssim \Big(1+\frac{\varepsilon}{\varphi_2}\Big)\bigg(\frac{\omega^2(\calK_{\calX,1})}{\min\{\varepsilon^2,\phi^2\}m}+\frac{\varphi_5\omega(\calX)}{\sqrt{\zeta m}} + \varphi_5\varepsilon+\zeta\bigg).
\end{align*}
Suppose $\varphi_1=\Theta(1)$, $\varphi_2=\Omega(1)$ and $\varphi_3,\varphi_4,\varphi_5=O(1)$. We 
set 
\[\zeta = \bigg(\frac{\varphi_5^2\omega^2(\calX)}{m}\bigg)^{1/3}\quad\text{and}\quad \varepsilon=\phi=\sqrt{\varphi_1}\bigg(\frac{\omega^2(\calK_{\calX,1})}{m}\bigg)^{1/4}\]
 to reach 
\begin{align*}
    \sup_{\bx\in\calX}\|\hat{\bx}_{pgd}-\bx\|_2 = \tilde{O}\bigg(\sqrt{\varphi_1}\Big(\frac{\omega^2(\calK_{\calX,1})}{m}\Big)^{1/4}\bigg),
\end{align*}
which is identical to (\ref{nonuniformconvexX}) up to log factors. 
\remarkend
\end{remark}

\begin{remark}[Improvement on existing uniform guarantees in \cite{genzel2023unified}] The recent work of Genzel and Stollenwerk provides the first uniform recovery theory for single-index model via generalized Lasso (\ref{Klasso}). For Lipschitz continuous link functions (which render $(\varphi_2,\varphi_3,\varphi_5)=(\infty,0,0)$ in our Assumption \ref{assump:fi}),  \cite{genzel2023unified} yielded a uniform recovery error rate
comparable to the non-uniform one \cite{plan2016generalized}: for instance, if 
\begin{align}\label{specificKX}
\calX=\Sigma^{n,*}_k\cap \{\bx:\|\bx\|_1=c_*\sqrt{k}\}\,,\quad\calK =\mathbb{B}_1^n(c_*\sqrt{k})    
\end{align}
for $c_*\in (0,1)$, under $\mu=\Theta(1)$ and   (C1) with $\varphi_1\asymp 1$, (C4) with $\varphi_4\asymp 1$ in our Assumption \ref{assump:fi},   Theorem 1 therein gives the high-probability uniform rate
\begin{align*}
\sup_{\bx\in\calX}\|\hat{\bx}_{Glasso}-\bx\|_2 \lesssim  \sqrt{\frac{k\log(en/k)}{m}}
\end{align*}
which is identical to the nonuniform rate in \cite{plan2016generalized}. However, for discontinuous   link functions,  the unified approach of \cite{genzel2023unified} cannot derive a uniform recovery error decaying faster than $m^{-1/4}$:  in the same example as per (\ref{specificKX}), in general, the uniform recovery error derived by Theorem 2 of \cite{genzel2023unified} is lower bounded by
\begin{align*}\sup_{\bx\in\calX}\|\hat{\bx}_{Glasso}-\bx\|_2  \gtrsim \bigg(\frac{k\log(en/k)}{m}\bigg)^{1/4},
\end{align*}
due to the terms $t^{-2}(L_t^2+\hat{L}_2^2)\cdot\omega^2(T\calX)$ in Equation (2.3) therein; see also the discussion in \cite[Page 913]{genzel2023unified}.

Since (under mild assumption) the nonuniform error rate of  \cite{plan2016generalized} remains at \[\|\hat{\bx}_{Glasso}-\bx\|_2\lesssim \sqrt{\frac{k}{m}\log\Big(\frac{en}{k}\Big)},\] Genzel and Stollenwerk drew a conclusion that ``{\it the transition to uniform recovery with (discontinuous) nonlinear output functions may result in a worse oversampling rate}''  \cite[Page 916]{genzel2023unified}. Our work shows that this phenomenon does not occur for  a large class of discontinuous link functions (obeying Assumption \ref{assump:fi}); instead,  there exists no essential (non-log) gap between the uniform recovery errors and nonuniform recovery errors. See the three canonical settings analyzed in Remarks \ref{rem:ex1cone}, \ref{rem:ex2coneconvex}, \ref{rem:ex3setX}. 
\remarkend
\end{remark}

\begin{remark}[Key to the improvement and extension to Generalized Lasso] Although our work builds on the RAIC framework, we note that the transition from generalized Lasso to PGD (via RAIC) is not essential to our improvement over \cite{genzel2023unified}. Rather, the analysis of uniform recovery via generalized Lasso ends up with bounding the same multiplier process (see Section 2.2 of \cite{genzel2023unified}), and indeed our improvement is due to establishing sharper bound on the multiplier process via a different argument. As such, the uniform error rate in our Theorem   \ref{thm:recoveryconvex} carries over to the generalized Lasso. On the other hand, our RAIC treatment for PGD offers much more flexibility: (i) it encompasses PGD with projection onto some highly nonconvex sets such as $\Sigma^n_k$ and $M^{n_1,n_2}_r$ that is beyond the scope of convex program (see Footnote \ref{fn:pgdcompute}); (ii) perhaps more importantly, it readily works many other nonlinear problems such as phase retrieval, generalized linear models, and ReLU regression (see Section \ref{sec:related}), while generalized Lasso is tailored for Gaussian single-index model and an extension of \cite{genzel2023unified} on this regard is highly entangled (see Remark 3(2) therein).
\remarkend
\end{remark}

\subsection{Uniform Sparse Recovery from Modulo Measurements}\label{sec4:modulo}
To further illustrate the abstract analysis in previous sections, we   provide a concrete example of modulo measurements \citep{bhandari2020unlimited}. Consider the uniform recovery of $\bx\in \calX$ from 
\begin{align}\label{modulomea}
    y_i = m_{\lambda}(\ba_i^T\bx) ,\quad i=1,2,...,m,
\end{align}
where $m_{\lambda}$ is the modulo function given by $
    m_{\lambda}(v) = v-2\lambda \Big\lfloor \frac{v+\lambda}{2\lambda}\Big\rfloor$ for some $\lambda\ge \frac{1}{4}$ (here, $\frac{1}{4}$ can be replaced by any positive constant); see Figure \ref{fig:1} for instance. 
\begin{figure}[ht!] 
    \centering
    \includegraphics[width=0.5\linewidth]{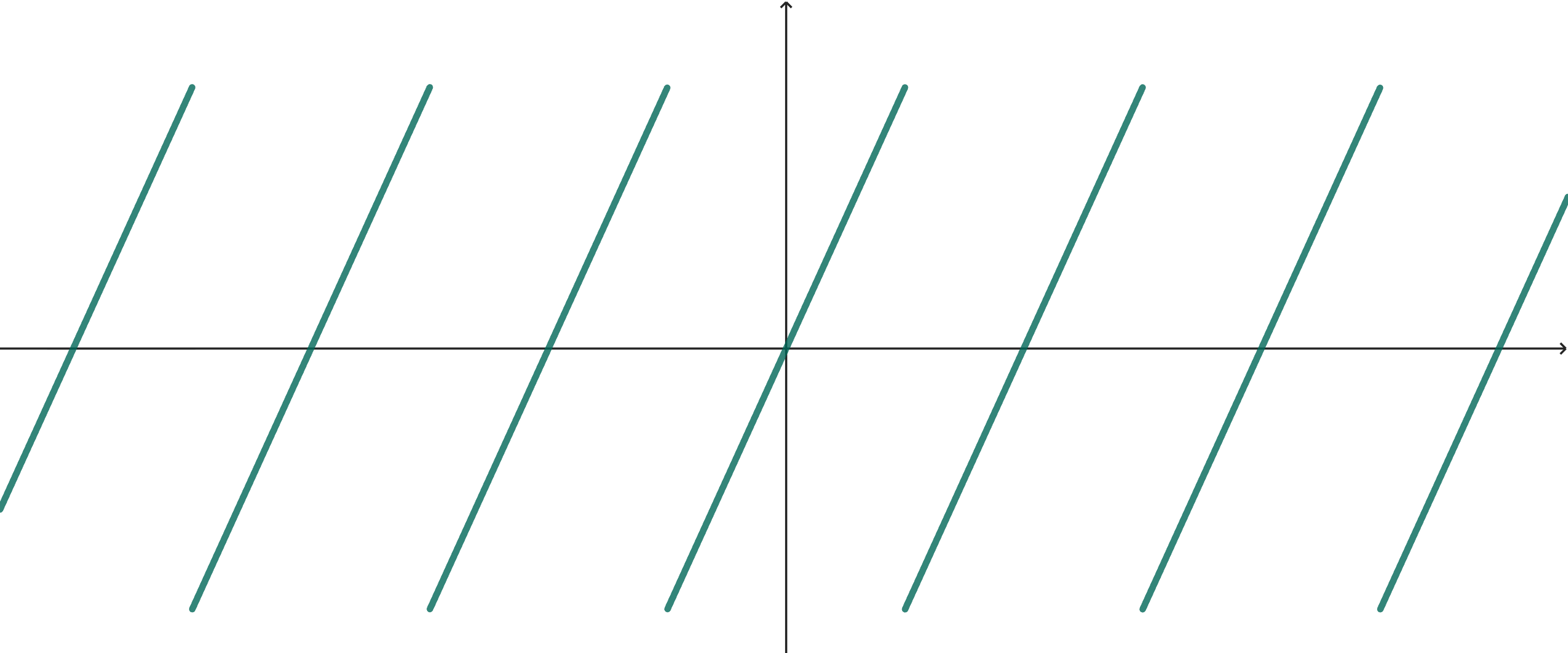} 
    \caption{The graph of $m_{1/2}(v)=v-\lfloor v+\frac{1}{2}\rfloor$. \label{fig:1}}
\end{figure}
We consider the following two sparse recovery settings that correspond to Remark \ref{rem:ex1cone} and Remark \ref{rem:ex2coneconvex}: \begin{align}
    (\calX,\calK)=(\Sigma^{n,*}_k,\Sigma^n_k)\label{sparseconeK}
\end{align} and 
\begin{align}
    (\calX,\calK)=(\Sigma^{n,*}_k\cap \{\bu:\|\bu\|_1=c_*\sqrt{k}\},\mathbb{B}_1^n(c_*\sqrt{k})),\quad c_*\in (0,1). \label{sparseconvexK}
\end{align}
In the two settings, our unified framework yields the following statement that improves on Corollary 5 of \cite{genzel2023unified}, where a  uniform recovery error over $\bx\in\calX$ no faster than $(\frac{k\log(en/k)}{m})^{1/4}$ is achieved.

\begin{theorem}[Uniform sparse recovery from modulo measurements] \label{thm:modulo}Consider  (\ref{modulomea}) with Gaussian $\ba_i$, suppose $\lambda\ge \frac{1}{4}$ and let $ \mu= 1$. In either (\ref{sparseconeK}) or (\ref{sparseconvexK}), if $m=\tilde{\Omega}(k\log\frac{en}{k})$, then with probability at least $1-C\exp(-ck\log\frac{en}{k})$, for all $\bx\in \calX$, the sequence $\{\bx_t\}_{t\ge 0}$ generated by (\ref{simpgd}) with   $\bx_0=0$ satisfies
\begin{align*}
    \|\bx_t-\bx\|_2 \le \bigg(\frac{Ck\log(en/k)}{m}\bigg)^{t/2}\|\bx_0-\bx\|_2+ \tilde{O}\bigg(\sqrt{\frac{k\log(en/k)}{m}}+\frac{\lambda k \log(en/k)}{m}\bigg)
\end{align*}
for any $t\ge 0$.    
\end{theorem}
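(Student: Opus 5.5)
The plan is to instantiate Theorems \ref{thm:recoverycone} and \ref{thm:recoveryconvex} with $f_i = m_\lambda$ and read off the rate from the computations in Remarks \ref{rem:ex1cone} and \ref{rem:ex2coneconvex}.

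\textbf{Verifying Assumption \ref{assump:fi}.} With $\mu=1$ we have $\tilde{f}_i(v) = v - m_\lambda(v) = 2\lambda\lfloor \frac{v+\lambda}{2\lambda}\rfloor$. This is a step function with the following properties.
\begin{itemize}
\item It is piecewise constant, so (C4) holds with $\varphi_4 = 0$. Hence the term $\varepsilon\varphi_4$ vanishes.
\item Its discontinuities are the points $\calD = \{(2j+1)\lambda : j\in\mathbb{Z}\}$. They are separated by $2\lambda$, so (C2) holds with $\varphi_2 = 2\lambda \ge \frac12$.
\item Each jump has height $2\lambda$, so (C3) holds with $\varphi_3 = 2\lambda$.
\item $|\tilde{f}_i(v)| \le |v| + \lambda$, and since $\calX \subset \mathbb{S}^{n-1}$, $\|\tilde{f}_i(g)\|_{\psi_2} \lesssim 1+\lambda$. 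This gives (C1) with $\varphi_1 \lesssim 1+\lambda$.
\item For (C5), $\ba_i^T\bx \sim N(0,1)$ has density at most $(2\pi)^{-1/2}$. For $t < \varphi_2/4 = \lambda/2$, the $t$-neighborhoods of the discontinuities are disjoint intervals of length $2t$, spaced $2\lambda$ apart. Their total Gaussian mass is at most $2t\cdot\sum_j \sup_{|v-(2j+1)\lambda|\le t}\phi(v)$. Since $\lambda\ge 1/4$, this sum is $O(1+1/\lambda)=O(1)$, so $\varphi_5 = O(1)$.
\end{itemize}

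\textbf{Checking $\mu$ and Assumption \ref{assumption:nonzero}.} The main subtlety is $\mu$. We need $\rho(\bx) = |\mathbb{E}[m_\lambda(g)g] - 1|$, which is not zero for $\mu=1$. Writing $m_\lambda(g)g = g^2 - \tilde f(g)g$, we get $\rho = |\mathbb{E}[\tilde{f}(g)g]|$. By Stein's identity applied to the step function, $\mathbb{E}[\tilde f(g) g] = \sum_j 2\lambda\,\phi((2j+1)\lambda)$. This is a Riemann sum approximating $\int\phi = 1$, so the quantity is not small in general. I would therefore keep $\rho$ as an explicit error term, using the $O(\sup\rho)$ allowance in Theorems \ref{uraic_cone} and \ref{uraic_convex}.

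That does not match the claimed rate. So either the intended reading implicitly uses $\mu=\mathbb{E}[m_\lambda(g)g] = 1-\sum_j 2\lambda\phi((2j+1)\lambda)$, or $\rho$ is negligible in the stated regime. I would check numerically and adopt whichever normalization makes Assumption \ref{assumption:nonzero} hold with $\rho=0$. This is the step I expect to be delicate. I would keep $\mu\asymp 1$ so that $\varphi_3/\mu \asymp \lambda$.

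\textbf{Cone case (\ref{sparseconeK}).} Here $\omega^2(\calK_1)\asymp k\log(en/k)$, and $\mathscr{H}(\calX,\varepsilon)\lesssim k\log(en/(k\varepsilon))$. Take $\varepsilon=\zeta=(k\log(en/k)/m)^{10}$. Then (\ref{sam:raiccone}) reduces to $m=\tilde\Omega(k\log\frac{en}{k})$, and the probability bound becomes $1-C\exp(-ck\log\frac{en}{k})$. The contraction factor $C_1\omega(\calK_1)/\sqrt m$ gives the stated $(Ck\log(en/k)/m)^{t/2}$.

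For the error terms:
\begin{itemize}
\item The $\varphi_1$ term is $\tilde O(\sqrt{k\log(en/k)/m})$, with the $\lambda$-dependence of $\varphi_1$ tracked.
\item By (\ref{simheart1}), $\Xi=\tilde O(k\log(en/k)/m)$, since $\varepsilon/\varphi_2\le 2\varepsilon$ is negligible.
\item Hence $\frac{\varphi_3}{\mu}\Xi = \tilde O(\lambda k\log(en/k)/m)$, which is exactly the second term of the bound.
\end{itemize}

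\textbf{Convex case (\ref{sparseconvexK}).} This proceeds identically via Theorem \ref{thm:recoveryconvex} and Remark \ref{rem:ex2coneconvex}, using the descent-cone bound $\omega^2(\cone(\calK_\calX)\cap\mathbb{B}_2^n)\lesssim k\log(en/k)$. Choose $\phi=\varepsilon=\zeta$ polynomially small so that the $2\phi$ term is absorbed. The main obstacle, besides the $\mu$/$\rho$ issue above, is bookkeeping: the $\lambda$-dependence of $\varphi_1$ must not produce a $\lambda\sqrt{\cdot}$ term. I would handle this by centering, i.e., by bounding the $\psi_2$ norm of $\tilde f - \mathbb{E}$ through its deviation from the identity, which is at most $\lambda$ only on the event $|g|>\lambda$. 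If that fails, I would absorb the term into the $\tilde O$.
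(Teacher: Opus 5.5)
\textbf{Where your proposal matches the paper.} Your route is the paper's: verify (C1)--(C5), then plug into Theorems \ref{thm:recoverycone} and \ref{thm:recoveryconvex} with $\varepsilon=\zeta\,(=\phi)=(k\log(en/k)/m)^{10}$. Your constants $\varphi_2=\varphi_3=2\lambda$, $\varphi_4=0$, $\varphi_5=O(1)$ agree with the paper's. Your worry about $\varphi_1$ goes away once you note $|m_\lambda(v)|\le |v|$: this gives $|\tilde f_i(v)|\le 2|v|$, hence $\varphi_1=O(1)$.

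\textbf{The gap.} The genuine gap is the step you flag and leave open, and it cannot be closed. The paper disposes of it by claiming $\mathbb{E}[g\,m_\lambda(g)]=1$. Your Stein computation is correct and that claim is false. In the paper's sum $\sum_k k\,\mathbb{E}[g\mathbbm{1}(\lambda(2k-1)\le g<\lambda(2k+1))]$, the terms for $k$ and $-k$ are equal and positive, not opposite. In any case $|m_\lambda|\le\lambda$ already forces $\mathbb{E}[g\,m_\lambda(g)]\le\lambda\sqrt{2/\pi}$. Write $\gamma$ for the standard normal density and $\mu^\star:=\mathbb{E}[g\,m_\lambda(g)]$. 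Your formula together with Poisson summation gives
\[
\mu^\star=1-\sum_{j\in\mathbb{Z}}2\lambda\,\gamma\big((2j+1)\lambda\big)=2\sum_{k\ge 1}(-1)^{k+1}e^{-\pi^2k^2/(2\lambda^2)}.
\]
Thus $\mu^\star\approx 1.4\cdot 10^{-2}$ at $\lambda=1$ and $\mu^\star\approx 10^{-34}$ at $\lambda=1/4$, while $1-\mu^\star\approx 4\lambda\gamma(\lambda)$ for large $\lambda$.

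\textbf{Why neither escape route works.} First, with $\mu=1$ one has $\rho=1-\mu^\star$, which does not depend on $m$. So it is never ``negligible in the stated regime''; it enters the RAIC, and hence the final error, as an additive constant. Second, taking $\mu=\mu^\star$ changes the iteration (\ref{simpgd}) that the statement fixes. It is also incompatible with ``keeping $\mu\asymp 1$'': $\varphi_1$ and $\varphi_3/\mu$ then carry a factor $1/\mu^\star$. (Also $\varphi_4=1/\mu^\star-1\neq 0$ in that case, which is harmless since it only multiplies $\varepsilon$.)

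\textbf{The statement with $\mu=1$ is false.} No argument can deliver the displayed bound with $\mu=1$. Starting from $\bx_0=0$, one has $\bx_1=P_{\calK}\big(\frac{1}{m}\sum_i y_i\ba_i\big)$. For fixed $n,k,\bx$, the law of large numbers gives $\frac1m\sum_i y_i\ba_i\to\mathbb{E}[m_\lambda(\ba^T\bx)\ba]=\mu^\star\bx$, and $\mu^\star\bx\in\calK$ in both settings. Hence $\|\bx_1-\bx\|_2\to 1-\mu^\star>0$ as $m\to\infty$, whereas the claimed bound tends to $0$.

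\textbf{What your approach does prove.} With $\mu=1$, you get the stated estimate plus an additive $O(1-\mu^\star)$ bias. With $\mu=\mu^\star$ in (\ref{simpgd}), you get the stated rate with constants inflated by $1/\mu^\star$. That factor is moderate only for $\lambda$ above an absolute constant (e.g.\ $\mu^\star\ge 1/2$ for $\lambda\ge 2$) and is astronomically large near $\lambda=1/4$.
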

 
\subsection{No Loss of Log Factors under 1-Bit Measurements}\label{sec4:1bcs}
In the previous developments, we establish the uniform recovery error rates in Gaussian single-index model and show in   canonical settings (a), (b), and (c) that the uniform rates are identical to the nonuniform rates, up to log factors. The log factors arise from the covering arguments in the proof of the uniform RAIC. A natural question is whether there exists a gap of log factors between the uniform and nonuniform recovery errors, or the log factors are simply proof artifacts.

In this subsection, we show for the specific 1-bit measurements and iterative hard thresholding algorithm (corresponding to setting (a)) that the log factors in Theorem \ref{thm:recoverycone} can be removed; we conjecture that, under more general discontinuous $f_i$, there is not a gap of log factors between the uniform and nonuniform recovery error rates. We shall discuss this further in Remark \ref{rem:extend11}.

We consider the problem of recovery of $\bx\in \Sigma^{n,*}_k$ from 
\begin{align*}
    y_i = \sign(\ba_i^T\bx),\quad i=1,...,m
\end{align*}
under Gaussian $\ba_i$. This is the standard 1-bit compressed sensing problem which has been extensively studied (e.g., \cite{boufounos20081,jacques2013robust}). For the recovery of a fixed $\bx$,  \cite{plan2016generalized} showed that the generalized Lasso 
\begin{align*} 
    \hat{\bx}_{Glasso} = \textrm{arg}\min_{\bu:\|\bu\|_1\le \|\bx\|_1} \frac{1}{2m}\sum_{i=1}^m \Big(y_i-\sqrt{\frac{2}{\pi}}\ba_i^T\bu\Big)^2
\end{align*}
attains, with high probability, the non-uniform recovery error rate
\begin{align*}
    \|\hat{\bx}_{Glasso}-\bx\|_2 \lesssim \sqrt{\frac{k\log(en/k)}{m}}.
\end{align*}
We shall consider PGD in (\ref{simpgd}) with $\calK=\Sigma^n_k$, that is a procedure of IHT:   
\begin{align}\label{pgd1bcs}
    \bx_{t+1} = P_{\Sigma^n_k}\bigg(\bx_t - \frac{1}{m}\sum_{i=1}^m\Big(\ba_i^T \bx_t-\sqrt{\frac{\pi}{2}}y_i\Big)\ba_i\bigg),\quad t= 0,1,2,\cdots.
\end{align}
    Also note that the main result of \cite{oymak2017fast} yields a nonuniform error rate $\|\hat{\bx}_{iht}-\bx\|_2\lesssim \sqrt{k\log(en/k)/m}$ identical to generalized Lasso.
  While an application of Theorem \ref{thm:recoverycone} yields the uniform rate (see Remark \ref{rem:ex1cone})
\begin{align*}
    \sup_{\bx\in\Sigma^{n,*}_k}\|\hat{\bx}_{iht}-\bx\|_2 =\tilde{O}\bigg(\sqrt{\frac{k\log(en/k)}{m}}\bigg)
\end{align*}
the following theorem shows that, indeed, the uniform recovery error rate is of order \[O\bigg(\sqrt{\frac{k\log(en/k)}{m}}\bigg).\]   Hence, there is no loss of log factors in the recovery errors when shifting from nonuniform recovery to uniform recovery (rather, the loss is at most an absolute constant). 

\begin{theorem}[Uniform 1-bit compressed sensing with no loss of log factor]\label{thm:sharp1b}
   We solve the 1-bit compressed sensing problem by running PGD in (\ref{pgd1bcs}) starting from an arbitrary $\bx_0\in\Sigma^{n,*}_k$ for all $\bx\in \Sigma^{n,*}_{k}$. If $m\gtrsim k\log(\frac{en}{k})$, then 
    \begin{align*}
        \|\bx_t-\bx\|_2 \le \bigg(\frac{Ck\log(en/k)}{m}\bigg)^{t/2} + \sqrt{\frac{C_1k\log(en/k)}{m}},\quad \forall t \ge 0,~~\forall \bx\in \Sigma^{n,*}_k
    \end{align*}
    holds with probability at least $1-C'\exp(-c'k\log\frac{en}{k})$. 
\end{theorem}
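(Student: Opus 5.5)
The plan is to follow the unified approach of Section \ref{sec:schedule}. Take $\calK=\Sigma^n_k$ (a cone), $\mu=\sqrt{2/\pi}$ (so $\rho(\bx)=0$ on $\mathbb{S}^{n-1}$), and $\eta=\mu^{-2}$. The goal is the uniform RAIC
\[
\bh_{\bx}(\bu)\sim{\rm RAIC}\Big(\Sigma^n_k;\Sigma^n_k,\mu_1\|\bu-\bx\|_2+\mu_2,\mu^{-2}\Big)
\]
for all $\bx\in\Sigma^{n,*}_k$, with $\mu_1\lesssim\sqrt{k\log(en/k)/m}$ and $\mu_2\lesssim\sqrt{k\log(en/k)/m}$ and no extra log factors. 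Theorem \ref{thm:uniformcone} with $d_{\bx}=\infty$ then gives the claim for any $\bx_0\in\Sigma^{n,*}_k$, since $\|\bx_0-\bx\|_2\le 2$ can be absorbed into the constant $C$.

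As in (\ref{2terms}), the dual-norm quantity splits into two terms. The first is $\|(\bI-\frac1m\sum_i\ba_i\ba_i^T)(\bu-\bx)\|_{\calK_1^\circ}$. Here $\bu-\bx\in\Sigma^n_{2k}$ and $\calK_1\subset\Sigma^n_{2k}\cap B_2^n$, so the term is bounded by the standard RIP-type estimate $\sup_{\bv,\bw\in\Sigma^n_{2k}\cap B_2^n}|\bv^T(\bI-\frac1m\bA^T\bA)\bw|\lesssim\sqrt{k\log(en/k)/m}$, which holds with probability $1-e^{-ck\log(en/k)}$. This gives $\mu_1$.

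The main work is the multiplier term
\[
\sup_{\bx\in\Sigma^{n,*}_k}\sup_{\bv\in\Sigma^n_{2k}\cap B_2^n}\frac1m\sum_{i=1}^m\tilde f(\ba_i^T\bx)\,\ba_i^T\bv,
\qquad \tilde f(a)=a-\sqrt{\pi/2}\,\sign(a).
\]
I would avoid the covering argument, which is the source of the log factors, and use a VC/symmetrization argument instead. First, split the term into the continuous part and the sign part:
\[
\frac1m\sum_i\big(\ba_i^T\bx\,\ba_i^T\bv-\bx^T\bv\big)+\sqrt{\tfrac{\pi}{2}}\Big(\bx^T\bv\sqrt{\tfrac{2}{\pi}}-\frac1m\sum_i\sign(\ba_i^T\bx)\ba_i^T\bv\Big).
\]
The first part is again RIP-controlled. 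For the second part, note that $\mathbb{E}[\sign(\ba^T\bx)\ba^T\bv]=\sqrt{2/\pi}\,\bx^T\bv$. So it remains to bound the centered process $Z=\sup_{\bx,\bv}|\frac1m\sum_i(\epsilon_i(\bx)\ba_i^T\bv-\mathbb{E})|$ with $\epsilon_i(\bx)=\sign(\ba_i^T\bx)$.

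To bound $Z$, I would fix a support $S\cup T$ of size at most $3k$ and work inside it. The sign pattern $(\sign(\ba_i^T\bx))_i$ then ranges over at most $(em/3k)^{3k}$ vectors, by the VC dimension of halfspaces in $\mathbb{R}^{3k}$. A naive union bound over sign patterns would cost $k\log(m/k)$, which is a log loss. The sharper route is to write $\frac1m\sum_i\epsilon_i\ba_i^T\bv=\langle\frac1m\sum_i\epsilon_i\ba_i,\bv\rangle$. The supremum over $\bv$ then equals the norm of the restriction $\|(\frac1m\bA^T\epsilon)_{S\cup T}\|$ after centering. I would control it through a multiplier-process bound of the form
\[
\mathbb{E}\sup_{\epsilon\in\calC}\Big\|\frac1m\sum_i(\epsilon_i\ba_i-\mathbb{E})\Big\|_{\calK_1^\circ}\lesssim\sqrt{\frac{\mathrm{VC}(\calC)+\omega^2(\calK_1)}{m}}
\]
for a VC class $\calC$ of $\{\pm1\}$-valued functions. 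The proof would go by symmetrization, and then by conditioning on the projections of the $\ba_i$ that determine $\epsilon$: decompose $\ba_i=(\ba_i^T\bx)\bx+P_{\bx^\perp}\ba_i$. The $\bx^\perp$ part is independent of the sign, so its contribution is a Gaussian process in $\bv$ with variance at most $1/m$, which gives $\omega(\calK_1)/\sqrt m$. The $\bx$-direction part reduces to $\bx^T\bv\cdot\frac1m\sum_i(|\ba_i^T\bx|-\sqrt{2/\pi})$, whose supremum over the sparse sphere is again a RIP-like quantity. Finally, I would take a union bound over the $\binom{n}{3k}$ supports, which costs only $k\log(en/k)$, and obtain tail probability $e^{-ck\log(en/k)}$ by Talagrand/Gaussian concentration.

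The step I expect to be the main obstacle is this last decomposition: the $\bx^\perp$ component depends on $\bx$, so it is not a fixed Gaussian process. To handle it, I would rewrite $\frac1m\sum_i\sign(\ba_i^T\bx)P_{\bx^\perp}\ba_i$ via Gaussian conditioning uniformly in $\bx$. If that fails, I would fall back on a chaining bound in which the increments over sign patterns use the Hamming distance of the patterns. Uniform control of those Hamming distances, of order $\|\bx-\bx'\|_2$ for $\|\bx-\bx'\|_2\gtrsim k\log(en/k)/m$, would come from the VC-based uniform tessellation bound. This avoids the $\log(1/\varepsilon)$ factors of Theorem \ref{uraic_cone}.
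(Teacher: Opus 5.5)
Your reduction matches the paper's. You use a uniform RAIC on $\Sigma^n_k$ with $\mu_1,\mu_2\lesssim\sqrt{k\log(en/k)/m}$, Theorem~\ref{thm:uniformcone} with $d_{\bx}=\infty$, and RIP-type product-process bounds for the linear pieces. The gap is the only step that is not routine: a uniform-in-$\bx$, log-free bound on
\[
\sup_{\bx,\bv}\frac{1}{m}\sum_{i=1}^m\Big(\sign(\ba_i^T\bx)\,\ba_i^T\bv-\sqrt{2/\pi}\,\bx^T\bv\Big).
\]
Your main route does not deliver it. The decomposition $\ba_i=(\ba_i^T\bx)\bx+P_{\bx^\perp}\ba_i$ makes $\frac1m\sum_i\sign(\ba_i^T\bx)P_{\bx^\perp}\ba_i$ an $N(0,P_{\bx^\perp}/m)$ vector only for a \emph{fixed} $\bx$. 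Once $\bx$ ranges over a set, the sign pattern is selected by the data and is correlated with $P_{(\bx')^{\perp}}\ba_i$ for every other $\bx'$. So the family indexed by $\bx$ is not a Gaussian process, and there is no ``Gaussian conditioning uniformly in $\bx$'' to invoke. That is the nonuniform argument, and uniformity is the entire difficulty. Consequently the VC multiplier inequality $\mathbb{E}\sup_{\epsilon\in\mathcal{C}}\|\cdot\|_{\calK_1^\circ}\lesssim\sqrt{(\mathrm{VC}(\mathcal{C})+\omega^2(\calK_1))/m}$ that you rely on is left unproved. The closing concentration step is not routine either. The supremum is not a Lipschitz function of $\bA$ and the class is unbounded, so neither Gaussian concentration nor Talagrand's inequality applies off the shelf. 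Yet you need tails $e^{-ck\log(en/k)}$ to pay for the union over $\binom{n}{3k}$ supports.

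The missing ideas are the ones the paper uses. After Gin\'e--Zinn symmetrization \eqref{eq:symmetrization_step}, one works with $Z_{\bx,\bq}=\frac{1}{\sqrt m}\sum_i\varepsilon_i\sign(\ba_i^T\bx)\ba_i^T\bq$. Lemma~\ref{lemma:conditioned_psi2} shows that, conditionally on $(\sign(\ba^T\bx),\sign(\ba^T\bx'))$, $\ba^T\bv$ stays sub-Gaussian with $\psi_2$-norm $\lesssim\|\bv\|_2$. This yields sub-Gaussian increments with respect to $d=\|\bq-\bq'\|_2+\big(\frac1m\sum_i|\sign(\ba_i^T\bx)-\sign(\ba_i^T\bx')|^2\big)^{1/2}$, despite the dependence between $\ba_i^T\bq$ and the signs. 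Your fallback names Hamming-distance increments but gives no mechanism that makes them sub-Gaussian. The summands $(\sign(\ba_i^T\bx)-\sign(\ba_i^T\bx'))\ba_i^T\bq$ are supported on, and correlated with, the rare flip events, so you need symmetrization plus a conditional tail bound of this type. The paper then bounds the entropy \emph{directly} in the empirical sign metric. Lemma~\ref{lemmma:covering} gives $\log\mathcal{N}\lesssim k\log(en/k)\log(2/\varepsilon)$ for every point configuration, and the VC dimension of sparse halfspaces already pays for the choice of support. Since $\sqrt{\log(4/\varepsilon)}$ is integrable on $(0,4]$, Dudley's bound gives $\sqrt{k\log(en/k)}$ with no logarithmic loss. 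In particular, you never need to convert Hamming distances into $\|\bx-\bx'\|_2$ through a uniform tessellation bound. That detour would require a second nontrivial uniform result, valid only above a resolution scale, which the VC entropy bound makes unnecessary.
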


The main technical ingredient to prove Theorem \ref{thm:sharp1b} is to show
\begin{align*} 
\sup_{\bx,\bq\in\Sigma^{n,*}_{2k}}\bigg|\sqrt{\frac{\pi}{2}}\frac{1}{m}\sum_{i=1}^m \sign(\ba_i^T\bx)\ba_i^T\bq-\bx^T\bq\bigg|\lesssim\sqrt{\frac{k\log(en/k)}{m}}\,.
\end{align*}
Our key insight is to work conditionally on the sign functions and find a suitable metric that is essentially an interpolation between the standard $\ell_2$ metric induced by the linear component $\ba_i^T\bq$ and the Lipschitz metric with respect to the empirical measure given by $\sum_{i=1}^m \sign(\ba_i^T\bx)-\sign(\ba_i^T\bx')$; importantly, covering number estimates are available through invoking VC theory. Finally, once the covering number under this interpolation metric is controlled, we can apply Dudley's inequality to control the supremum of the process of interest.

\begin{remark}
    [Potential extension of Theorem \ref{thm:sharp1b}] \label{rem:extend11} 
    Compared to Theorems \ref{thm:recoverycone} and \ref{thm:recoveryconvex}, the above result is sharper but nonetheless has two restrictions: (i) for IHT only, a representative example of setting (a) and Theorem \ref{thm:recoverycone}; (ii) for $f_i=\sign$ only, corresponding to the 1-bit compressed sensing problem. To relax the restriction (ii), one may seek such an extension: in fact, once the conditional sub-Gaussianity in Lemma \ref{lemma:conditioned_psi2} extends to a conditioning on $\{\sign(\ba^T\bx-w),\sign(\ba^T\bx'-w)\}$ for constant $w$, our VC dimension argument yields $\sup_{\bx,\bq\in\Sigma^{n,*}_{2k}}|\frac{1}{m}\sum_{i=1}^m \sign(\ba_i^T\bx-w)\ba_i^T\bq-\mathbb{E}[\sign(\ba_i^T\bx-w)\ba_i^T\bq]|=O_w(\sqrt{k\log(en/k)/m})$, and hence IHT under $f_i(\cdot)=\sign(\cdot-w)$ achieves a sharp uniform recovery error rate of $O_w(\sqrt{k\log(en/k)/m})$; then immediately, the uniform rate $O(\sqrt{k\log(en/k)/m})$ extends to $f_i$ that is a linear combination of a Lipschitz continuous function and functions in $\{\sign(\cdot-w):w\}$. To keep the paper at a reasonable length, we leave this investigation for future work.   \remarkend
\end{remark}

\section{Nonlinear Observations via RAIC}
\label{sec:related} 
We provide a review of the recent line of works that analyze   recovery or regression problems with nonlinear observations by establishing the RAIC.  

\paragraph{Quantized measurements.} To study 1-bit compressed sensing, RAIC was first (formally) introduced by Friedlander, Jeong, Plan and Yilmaz in Definition 8 of \cite{friedlander2021nbiht} --- in a multiscale manner --- to show that the normalized binary iterative hard thresholding (NBIHT) algorithm achieves the rate $\tilde{O}(\frac{k^{3.5}}{m})$, although we mention in passing that  \cite{oymak2017fast,soltanolkotabi2019structured,soltanolkotabi2017learning}   essentially established the RAIC of the gradient for some nonlinear regression problems.  While \cite{friedlander2021nbiht} is the first work  to achieve $O(m^{-1})$ decay rate for an efficient 1-bit compressed sensing algorithm, the dependence on the sparsity $ k$ --- i.e., $k^{3.5}$ up to log factors --- is sub-optimal, in light of the information-theoretic optimal rate $\tilde{\Theta}(\frac{k}{m})$ \citep{jacques2013robust}. To close this gap, the work of \cite{matsumoto2024binary} refined the definition of RAIC (see Definition 3.1 therein) and introduced a number of new ideas to the analysis, showing that NBIHT is indeed nearly information-theoretic optimal and attains the uniform recovery error 
\begin{align*}
    \sup_{\bx\in \Sigma^{n,*}_k}\|\hat{\bx}_{nbiht}-\bx\|_2 =\tilde{O}\Big(\frac{k}{m}\Big).
\end{align*}
In a follow-up work of the same authors, \cite{matsumoto2024robust} showed that NBIHT is robust to adversarial bit flips; in Section \ref{sec:robust}, we shall see that this robustness property of NBIHT follows from slightly more work (see Example \ref{1bcsexample} therein). Another follow-up work   \cite{chen2024optimal}  provides two-fold extensions of \cite{matsumoto2024binary} toward more general quantization models and  signal structures.

\paragraph{Other nonlinear observations.}
Let us now consider  the statistical learning problem of sparse logistic regression with a temperature parameter $\beta>0$, in which the goal is to learn the true parameter $\bm{\theta}^*\in\Sigma^{n,*}_k$ from known Gaussian covariates $\{\bx_i\}_{i=1}^m$ and the Bernoulli responses 
\begin{align*}
    y_i = {\rm Bernoulli}\bigg(\frac{1}{1+e^{-\beta \bx_i^T\bm{\theta}^*}}\bigg)\,,\quad i=1,...,m.
\end{align*}
This problem is more general and hence strictly harder than 1-bit  compressed sensing, since it reduces to 1-bit compressed sensing when $\beta\to\infty$. While the case of $\beta=1$ is well understood (e.g., \cite{negahban2012unified,plan2017high}), the information-theoretic limit for all $\beta>0$ is a recent result due to   \cite{hsu2024sample}, which exhibits a transition from $\tilde{O}(\sqrt{k/m})$ to $\tilde{O}(k/m)$ as $\beta$ increases from a $0$ to $\infty$. More recently, by establishing an RAIC (see Theorem 9 therein), \cite{matsumoto2025learning} showed that an algorithm analogous to NBIHT is near-optimal and attains the information-theoretic limit in \cite{hsu2024sample}.

The problem of 1-bit phase retrieval is another nonlinear model that is more intricate than 1-bit compressed sensing. It concerns the reconstruction of $\bx$ using only 1-bit  information of each phaseless observation  and has attracted some research attention  \citep{mroueh2013quantization,kishore2020wirtinger,domel2022phase,eamaz2022one}, among which  \cite{domel2022phase} is the only work providing non-asymptotic error rates --- particularly, the authors analyzed a spectral method and derived a nonuniform recovery error rate $\dist(\hat{\bx}_{spectral},\bx)=\tilde{O}(\sqrt{n/m})$ and a substantially degraded uniform recovery error rate $\sup_{\bx\in S^{n-1}}\dist(\hat{\bx}_{spectral},\bx)=\tilde{O}(\sqrt{n^2/m}),$ where $\dist(\hat{\bx},\bx)=\min\{\|\hat{\bx} -\bx\|_2,\|\hat{\bx} + \bx\|_2\}$ is the phaseless $\ell_2$ distance. The recent work of \cite{chen2024one} analyzed the recovery of $\bx\in \mathbb{A}_\alpha^\beta=\{\bu:\alpha\le\|\bu\|_2\le\beta\}$ from $\by=\sign(|\bA\bx|-\tau)$ with an $m\times n$ Gaussian matrix $\bA$ and a fixed $\tau>0$. Built upon \cite{matsumoto2024binary}, the authors introduced a number of new ideas to prove a phaseless local RAIC (see Definition 4.1 therein), which then implies that the procedure of a spectral method followed by gradient descent attains  the near-optimal uniform rate 
\begin{align*}
\sup_{\bx\in\mathbb{A}_\alpha^\beta}\dist(\hat{\bx},\bx)=\tilde{O}_{\alpha,\beta,\tau}\Big(\frac{n}{m}\Big).
\end{align*}
The problem of 1-bit sparse phase retrieval was also analyzed in \cite{chen2024one}.

\paragraph{Unified approach.}
  While the above-reviewed works provide sharp  analysis   to a specific nonlinear problem through establishing the RAIC,  \cite{chen2025unified} showed that RAIC is   a unified approach to     estimation problems with nonlinear observations, consisting of the two steps of choosing a gradient operator and establishing the RAIC. One central perspective of \cite{chen2025unified} is that RAIC serves as an analog of RIP for nonlinear models; to support this, the authors established three general convergence guarantees under RAIC --- the convergence of PGD, the local convergence of Riemannian gradient descent for tensor regression, and the local convergence of gradient descent based on matrix factorization --- all of which are implications of RIP under linear observations   (e.g., \cite{luo2024tensor,tu2016low}). We note that our definitions of RAIC are taken from  \cite{chen2025unified}.

  The present paper is a follow-up work of \cite{chen2025unified} and shows that  RAIC provides a unified approach to uniform recovery of structured signals from nonlinear observations. This is reminiscent of  the implication of RIP on uniform sparse recovery from (noisy) linear observations \cite[Section 6]{Foucart2013AMI}. As already noted,  this contribution is mainly of pedagogical value, since indeed prior works already  implicitly rely on this perspective to obtain uniform recovery guarantees from quantized measurements \citep{matsumoto2024binary,chen2024optimal,chen2024one} and logistic regression \citep{matsumoto2025learning}. Instead, our technical contributions  mainly lie in 
  the application of this approach to the Gaussian single-index model where we improve on \cite{genzel2023unified}. Focusing on sparse recovery via IHT, a companion paper \citep{chen2026iterative}   develops a unified approach to instance optimal sparse recovery.

\section{Robustness of PGD}\label{sec:robust}
Robustness to noise or corruption is another desideratum in signal recovery and statistical estimation. The aim of this section is to show that the robustness of PGD   can be incorporated into the RAIC approach in an elegant way --- by {\it bounding one additional random process}. Taken collectively with the previous developments, our paper shows that RAIC is a unified approach to  robust uniform recovery of structured signals from nonlinear observations, as per the title of the present paper.

  We shall begin with an intuition. Suppose that in the noisy setting we can only access $\{(\ba_i,\tilde{y}_i)\}_{i=1}^m$ for the estimation of $\bx\in\calX$, where $\{\tilde{y}_i\}_{i=1}^m$ are the noisy versions of the noiseless observations $\{y_i\}_{i=1}^m$. (Note that $\tilde{y}_i=y_i$ for $i\in [m]$ returns the noiseless case.) Our RAIC-based approach  remains effective as the only difference is that the ``noisy gradient'' is now constructed from $\{(\ba_i,\tilde{y}_i)\}_{i=1}^m$ and could differ from the ``noiseless gradient'' constructed from $\{(\ba_i,y_i)\}_{i=1}^m$; all we need is to control an additional term capturing such difference.

  To formalize the idea, for any $\bx\in\calX$ we shall use $\tilde{\bh}_{\bx}:\mathbb{R}^n\to \mathbb{R}^n$ to denote the noisy gradient, which  is a perturbed version of the noiseless gradient $\bh_{\bx}:\mathbb{R}^n\to \mathbb{R}^n$. Note that $\tilde{\bh}_{\bx}=\bh_{\bx}$ when $\tilde{y}_i = y_i$ for $i\in[m]$. It remains to establish the RAIC of $\tilde{\bh}_{\bx}$ and then invoke Theorems \ref{coneK} or \ref{convexK} to yield the recovery guarantee of PGD, and note that, if the RAIC is uniform over $\bx\in \calX_*\subset\calX$, then the guarantee is uniform over $\calX_*.$ Built upon the RAIC of $\bh_{\bx}$, the RAIC of $\tilde{\bh}_{\bx}$ follows from a bound on 
\begin{align}
    \sup_{\bx\in\calX_*}\sup_{\bu\in\calU_{\bx}}\,\|\tilde{\bh}_{\bx}(\bu)-\bh_{\bx}(\bu)\|_{\calK_1^\circ},\label{additionalr}
\end{align}
where $\calX_*$ denotes the set of signals we want to uniformly recover (note that we recover nonuniform recovery when $\calX_*=\{\bx\}$ for a fixed $\bx\in\calX$). Intuitively, 
(\ref{additionalr}) captures the mismatch between the noisy gradient and the noiseless gradient under dual norm.

In the following, we provide a formal statement. We focus on signals living in a cone $\calK$ for brevity while note that the case of $\calK$ being a convex set is parallel.  

\begin{theorem}[Robust uniform recovery of signals in a cone] \label{noisyconver} Suppose that $\calK$ is a cone such that $\calX_*\subset \calK$. If 
\begin{gather}  \label{baseraic}
    \bh_{\bx}(\bu)\sim {\rm RAIC}\big(\calK;\calU_{\bx},\mu_1\|\bu-\bx\|_2+\mu_2,\eta\big),\quad \forall \bx\in \calX_*,\\
    \label{noiseerrorterm}\sup_{\bx\in\calX_*}\sup_{\bu\in\calU_{\bx}}\,\eta\big\|\tilde{\bh}_{\bx}(\bu)-\bh_{\bx}(\bu)\big\|_{\calK_1^\circ} \le \tilde{\mu},
\end{gather}
where $\mu_1<\frac{1}{2}$ and 
\begin{align}\label{interior611}
    \calU_{\bx}\supset \calK \cap B_2^n(\bx;d_{\bx})\text{~~for some~~}\frac{2(\mu_2+\tilde{\mu})}{1-2\mu_1}<d_{\bx}\le \infty,\quad \forall \bx\in\calX_*, 
\end{align}
  then $\{\bx_t\}_{t\ge 0}$ generated by   
$
    \bx_{t+1} = P_{\calK}\big(\bx_t-\eta\cdot\tilde{\bh}_{\bx}(\bx_t)\big)~\, (t=0,1,\cdots)$
and with initialization
\begin{align}\label{initialization611}
    \bx_0\in \calK\cap B_2^n(\bx;d_{\bx}) 
\end{align}
satisfies 
\begin{align*}
    \|\bx_t-\bx\|_2 \le (2\mu_1)^t \|\bx_0-\bx\|_2 + \frac{2(\mu_2+\tilde{\mu})}{1-2\mu_1},
    \quad \forall t\ge 0,~\bx\in\calX_*.  
\end{align*}
\end{theorem}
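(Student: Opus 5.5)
The plan is to reduce the statement to Theorem \ref{coneK}, or equivalently to Theorem \ref{thm:uniformcone}, applied to the noisy gradient $\tilde{\bh}_{\bx}$. The only thing to prove is that $\tilde{\bh}_{\bx}$ satisfies a RAIC with the enlarged additive error $\mu_2+\tilde{\mu}$.

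Fix $\bx\in\calX_*$ and $\bu\in\calU_{\bx}$. Write
\[
\bu-\bx-\eta\tilde{\bh}_{\bx}(\bu)=\big(\bu-\bx-\eta\bh_{\bx}(\bu)\big)-\eta\big(\tilde{\bh}_{\bx}(\bu)-\bh_{\bx}(\bu)\big).
\]
Apply the triangle inequality for the dual norm $\|\cdot\|_{\calK_1^\circ}$. It is subadditive because it is a supremum of linear functionals, and positive homogeneity with $\eta>0$ gives $\|\eta\bv\|=\eta\|\bv\|$. Since $\calK$ is a cone, $\calK_1=(\calK-\calK)\cap B_2^n$ is symmetric, so $\|-\bv\|_{\calK_1^\circ}=\|\bv\|_{\calK_1^\circ}$ and the sign of the second term does not matter. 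We obtain
\[
\|\bu-\bx-\eta\tilde{\bh}_{\bx}(\bu)\|_{\calK_1^\circ}\le \mu_1\|\bu-\bx\|_2+\mu_2+\tilde{\mu},
\]
using (\ref{baseraic}) for the first term and (\ref{noiseerrorterm}) for the second. Therefore
\[
\tilde{\bh}_{\bx}(\bu)\sim{\rm RAIC}\big(\calK;\calU_{\bx},\mu_1\|\bu-\bx\|_2+(\mu_2+\tilde{\mu}),\eta\big)\quad\text{for every }\bx\in\calX_*.
\]

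Next, invoke Theorem \ref{coneK} for each $\bx\in\calX_*$, with $\mu_2$ replaced by $\mu_2+\tilde{\mu}$ and with $d=d_{\bx}$. Its hypotheses are all supplied by the statement:
\begin{itemize}
\item $\calK$ is a cone containing $\bx$, because $\calX_*\subset\calK$;
\item $\mu_1<\tfrac12$ holds by assumption;
\item $\calU_{\bx}\supset\calK\cap B_2^n(\bx;d_{\bx})$ with $d_{\bx}>\frac{2(\mu_2+\tilde{\mu})}{1-2\mu_1}$, which is (\ref{interior611});
\item $\bx_0\in\calK\cap B_2^n(\bx;d_{\bx})$, which is (\ref{initialization611}).
\end{itemize}
The iteration in Theorem \ref{coneK} is (\ref{pgd}) with gradient $\tilde{\bh}_{\bx}$, which is exactly the noisy PGD in the statement. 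Theorem \ref{coneK} then yields the claimed bound for that $\bx$. Because the argument is deterministic and holds for each $\bx\in\calX_*$ separately, the conclusion holds simultaneously for all $\bx\in\calX_*$.

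There is no substantive obstacle here. The only points that need care are the symmetry and homogeneity of the dual seminorm, and checking that the dual norm in (\ref{noiseerrorterm}) is the same $\calK_1^\circ$ that appears in the RAIC definition for a cone; it is.
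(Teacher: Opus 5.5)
Your proof is correct and matches the paper's own argument. You use the triangle inequality for the dual seminorm $\|\cdot\|_{\calK_1^\circ}$ to show that $\tilde{\bh}_{\bx}$ satisfies the RAIC with additive error $\mu_2+\tilde{\mu}$, then apply Theorem \ref{coneK} separately to each $\bx\in\calX_*$ using (\ref{interior611}) and (\ref{initialization611}). One minor remark: $\calK_1=(\calK-\calK)\cap B_2^n$ is symmetric for any set $\calK$, so this step does not need $\calK$ to be a cone.
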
 
\begin{proof}
    By definition, (\ref{baseraic})--(\ref{noiseerrorterm}) imply 
    \begin{align*}
        \tilde{\bh}_{\bx}(\bu)\sim {\rm RAIC}\big(\calK;\calU_{\bx},\mu_1\|\bu-\bx\|_2+\mu_2+\tilde{\mu},\eta\big),\quad \forall \bx\in \calX_*,
    \end{align*}
    and $\mu_1 <\frac{1}{2}$, (\ref{interior611}) and (\ref{initialization611}) ensure (\ref{mu1boundcone})--(\ref{conecon4}). The result then follows by applying Theorem \ref{coneK} to every $\bx\in\calX_*$.
\end{proof}
\begin{remark}
    
It is clear that $\tilde{\mu}$, as an upper bound on the random process $\sup_{\bx\in\calX_*}\sup_{\bu\in\calU_{\bx}}\,\eta\big\|\tilde{\bh}_{\bx}(\bu)-\bh_{\bx}(\bu)\big\|_{\calK_1^\circ}$,  captures the effect of noise or corruption.  \remarkend
\end{remark}

Let us proceed to a number of concrete examples to put the above approach to robustness in perspective.  We  treat the recovery of $\bx\in\Sigma^n_k$ via iterative hard thresholding, i.e., (\ref{pgd}) with $\calK=\Sigma^n_k$. We also consider i.i.d. $\ba_i\sim N(0,\bI_n)$. 

\begin{example}[Noisy compressed sensing]
As a warm-up example, we consider the recovery of $\bx\in\Sigma^n_k$ from noisy linear   observations $y_i=\ba_i^T\bx+\varepsilon_i$, $i\in[m]$ and adopt the $\ell_2$ loss 
$L_{\bx}(\bu)=\frac{1}{2m}\sum_{i=1}^m(\ba_i^T\bu-y_i)^2$. Hence, the ``noisy'' gradient is
$
    \tilde{\bh}_{\bx}(\bu) = \frac{1}{m}\sum_{i=1}^m (\ba_i^T\bu-y_i)\ba_i = \frac{1}{m}\sum_{i=1}^m \big(\ba_i^T\bu-\ba_i^T\bx-\varepsilon_i\big)\ba_i.$
When $\varepsilon_i=0$ for $i\in[m]$, it reduces to the ``noiseless'' gradient 
$
    \bh_{\bx}(\bu) = \frac{1}{m}\sum_{i=1}^m \ba_i\ba_i^T(\bu-\bx).$ 
It is not hard to show that\,\footnote{This is indeed an implication of the RIP of $\bA/\sqrt{m}$ over sparse vectors (e.g., \cite[Section F.1]{chen2025unified}) and therefore holds over a large class of matrices $\bA$ beyond Gaussian matrix.} \begin{align*}
    \bh_{\bx}(\bu) \sim {\rm RAIC}\bigg(\Sigma^n_k;\Sigma^n_k,\frac{1}{3}\|\bu-\bx\|_2,1\bigg),\quad \forall \bx\in\Sigma^n_k 
\end{align*} 
holds with high probability. In light of Theorem \ref{noisyconver} (with $d_{\bx}=\infty$ for all $\bx\in\Sigma^n_k$),   
\begin{align*}
    \bx_{t+1} = P_{\Sigma^n_k}\bigg(\bx_t - \frac{1}{m}\sum_{i=1}^m (\ba_i^T\bx_t-y_i)\ba_i\bigg),\quad t=0,1,2,\cdots  
\end{align*}
with sufficiently many iterations 
uniformly recovers all $\bx\in \Sigma^n_k$ to error of the order 
\begin{align*}
    \sup_{\bx\in \Sigma^n_k}\sup_{\bu\in\Sigma^n_k}\big\|\tilde{\bh}_{\bx}(\bu)-\bh_{\bx}(\bu)\big\|_{(\Sigma^{n,*}_{2k})^\circ} = \sup_{\bx\in\Sigma^n_k}\bigg\|\frac{1}{m}\sum_{i=1}^m\varepsilon_i\ba_i\bigg\|_{(\Sigma^{n,*}_{2k})^\circ}.
\end{align*}
This is solely the effect of noise (as exact recovery is achieved in noiseless case). Therefore,  
\begin{align*}    \sup_{\bx\in\Sigma^n_k}\|\hat{\bx}_{pgd}-\bx\|_2 \lesssim \sup_{\bx\in\Sigma^n_k}\bigg\|\frac{1}{m}\sum_{i=1}^m\varepsilon_i\ba_i\bigg\|_{(\Sigma^{n,*}_{2k})^\circ}.
\end{align*}
We consider two types of $\bm{\varepsilon}= (\varepsilon_1,...,\varepsilon_m)^T$ and derive explicit bounds. First, suppose that $\bm{\varepsilon}$ is arbitrary (adversarial) corruption possibly depending on $(\bA,\bx)$, then by Cauchy-Schwarz and the high-probability bound $\sup_{\bu\in\Sigma^{n,*}_{2k}}\|\bA\bu\|_2=O(\sqrt{m})$ (see, e.g., Theorem 9.1.1 of \cite{vershynin2018high}),
\begin{align} \label{csbound}  
    \sup_{\bx\in\Sigma^n_k}\bigg\|\frac{1}{m}\sum_{i=1}^m\varepsilon_i\ba_i\bigg\|_{(\Sigma^{n,*}_{2k})^\circ} &= \sup_{\bx\in\Sigma^n_k}\sup_{\bu\in\Sigma^{n,*}_{2k}} \frac{1}{m}\sum_{i=1}^m\varepsilon_i\ba_i^T\bu\le \sup_{\bu\in\Sigma^{n,*}_{2k}}\frac{\|\bm{\varepsilon}\|_2\|\bA\bu\|_2}{m} \lesssim\frac{\|\bm{\varepsilon}\|_2}{\sqrt{m}}.  
\end{align}
The result is consistent with Theorem 5 and Equation (17) of \cite{blumensath2009iterative}.
Next, we study a statistical  learning setting where $\varepsilon_i\stackrel{iid}{\sim} N(0,\sigma^2)$ are oblivious to $\bx$ and independent of $\bA$. In light of $\|\frac{1}{m}\sum_{i=1}^m\varepsilon_i\ba_i\|_{\psi_2}\lesssim \frac{\sigma}{\sqrt{m}}$  (see, e.g., Section 2 of \cite{vershynin2018high}),  we have that 
\begin{align*}
\sup_{\bx\in\Sigma^n_k}\bigg\|\frac{1}{m}\sum_{i=1}^m\varepsilon_i\ba_i\bigg\|_{(\Sigma^{n,*}_{2k})^\circ}= \sup_{\bu\in \Sigma^{n,*}_{2k}}\frac{1}{m}\sum_{i=1}^m\varepsilon_i\ba_i  ^T\bu \lesssim \frac{\sigma \cdot\omega(\Sigma^{n,*}_{2k})}{\sqrt{m}}\lesssim \sigma\sqrt{\frac{k\log(en/k)}{m}}
\end{align*}
holds with high probability (e.g., Section 8.6 of \cite{vershynin2018high}). This recovers the minimax optimal rate in noisy sparse linear regression \citep{raskutti2011tit}.
    \exampleend
\end{example}
\begin{example}[Noisy sparse phase retrieval] In this problem, we seek to recover $\bx\in \Sigma^n_k\setminus\{0\}$ from $y_i=|\ba_i^T\bx|+\varepsilon_i$, $i\in[m]$, or more compactly, $\by=|\bA\bx|+\bm{\varepsilon}$. To avoid the subtlety of the signal-to-noise ratio, let us focus on $\bx\in\Sigma^{n,*}_k$. As with the reshaped Wirtinger flow \citep{zhang2017nonconvex} and truncated amplitude flow \citep{wang2017solving,wang2017sparse}, we shall adopt the amplitude-based $\ell_2$ loss $L_{\bx}(\bu)=\frac{1}{2m}\||\bA\bu|-\by\|_2^2=\frac{1}{2m}\sum_{i=1}^m(|\ba_i^T\bu|-y_i)^2,$ whose subgradient is given by 
\begin{align*}
    \tilde{\bh}_{\bx}(u) = \frac{1}{m}\sum_{i=1}^m\big(|\ba_i^T\bu|-y_i\big)\sign(\ba_i^T\bu)\ba_i=  \frac{1}{m}\sum_{i=1}^m\big(|\ba_i^T\bu|-|\ba_i^T\bx|-\varepsilon_i\big)\sign(\ba_i^T\bu)\ba_i,
\end{align*}
and reduces to the noiseless gradient 
$
    \bh_{\bx}(\bu) = \frac{1}{m}\sum_{i=1}^m\big(|\ba_i^T\bu|-|\ba_i^T\bx|\big)\sign(\ba_i^T\bu)\ba_i$ when $\varepsilon=0$. In Theorem 5.7 of \cite{chen2025unified}, the authors established the following uniform RAIC for $\bh_{\bx}(\bu)$: if $m\gtrsim k\log \frac{en}{k}$, then for some absolute constant $c_*>0$, with high probability, 
\begin{align*}
    \bh_{\bx}(\bu)\sim{\rm RAIC}\bigg(\Sigma^n_k;\Sigma^n_k\cap B_2^n(\bx;c_*),\frac{1}{4}\|\bu-\bx\|_2,1\bigg),\quad \forall \bx\in\Sigma^{n,*}_k. 
\end{align*}
This implies the uniform exact recovery of all $\bx\in\Sigma^{n,*}_k$ by PGD starting from some $\|\bx_0-\bx\|_2\le c_*/2$. Such an initialization can be found by spectral method   as long as the noise level is small enough and $m=\tilde{\Omega}(k^2)$ \citep{jagatap2019sample}, and we do not discuss this subtle issue.  Rather, 
here we are  interested in characterizing the impacts of noise. For  iid  Gaussian noise $\varepsilon_i \sim \calN(0,\sigma^2)$ independent of $\bA$, Theorem 5.7 of \cite{chen2025unified} shows   the uniform recovery error   
\begin{align*}\sup_{\bx\in \Sigma^{n,*}_k}\|\hat{\bx}_{pgd}-\bx\|_2=
    \sup_{\bx\in \Sigma^{n,*}_k}\sup_{\bu\in B_2^n(\bx;c_*)}\bigg\|\frac{1}{m}\sum_{i=1}^m\varepsilon_i\sign(\ba_i^T\bu)\ba_i\bigg\|_{(\Sigma^{n,*}_{2k})^\circ} = \tilde{O}\bigg(\sigma\sqrt{\frac{k\log(en/k)}{m}}\bigg),
\end{align*}
which is minimax optimal up to log factor (e.g., Theorem 5.9 of \cite{chen2025unified}). 
For arbitrary noise $\bm{\varepsilon} = (\varepsilon_1,...,\varepsilon_m)^T$ obeying $\|\bm{\varepsilon}\|_2 \le c\sqrt{m}$ (for some small enough $c$) possibly depending on $(\bA,\bx)$,  
\begin{align*}
\sup_{\bx\in \Sigma^{n,*}_k}\|\hat{\bx}_{pgd}-\bx\|_2&=\sup_{\bx\in\Sigma^{n,*}_k}\sup_{\bu\in B_2^n(\bx;c_*)}\bigg\|\frac{1}{m}\sum_{i=1}^m\varepsilon_i\sign(\ba_i^T\bu)\ba_i\bigg\|_{(\Sigma^{n,*}_{2k})^\circ} \\
    & \le \sup_{\bx\in\Sigma^{n,*}_k}\sup_{\bu\in B_2^n(\bx;c_*)}\sup_{\bv\in\Sigma^{n,*}_{2k}} \frac{1}{m}\sum_{i=1}^m\varepsilon_i\sign(\ba_i^T\bu)\ba_i^T\bv \\
    &\le  \frac{\|\bm{\varepsilon}\|_2}{m}\sup_{\bv\in \Sigma^{n,*}_{2k}}\|\bA\bv\|_2 =O \bigg(\frac{\|\bm{\varepsilon}\|_2}{\sqrt{m}}\bigg)
\end{align*}
by a reasoning parallel to (\ref{csbound}). Note that this error term is consistent with Theorem 3 of \cite{zhang2017nonconvex}. 
\exampleend
\end{example}
\begin{example}[1-bit compressed sensing] \label{1bcsexample} In the noiseless case, this concerns the recovery of $\bx\in \Sigma^{n,*}_k$ from $\by=\sign(\bA\bx)$ with $m\times n$ Gaussian matrix $\bA$. While it can be encompassed into the single-index model, the solver in Section \ref{sec:sim} only achieves $O(\sqrt{k\log(en/k)/m})$ (see Theorem \ref{thm:sharp1b}) and falls short of the information-theoretic rate $\tilde{O}(k/m)$ \citep{jacques2013robust}. The only known and provably optimal efficient algorithm  is NBIHT  \citep{matsumoto2024binary}. NBIHT is in essence a projected (sub)gradient descent algorithm with regard to the ReLU loss   $
    L_{\bx}(\bu)=\frac{1}{m}\sum_{i=1}^m\max\{0,-y_i \ba_i^T\bu\}=\frac{1}{2m}\sum_{i=1}^m\big[|\ba_i^T\bu|-y_i\ba_i^T\bu\big],$ 
which is a convex relaxation of the hamming distance loss and possesses the (noiseless) subgradient $
    \bh_{\bx}(\bu) = \frac{1}{2m}\sum_{i=1}^m\big(\sign(\ba_i^T\bu)-\sign(\ba_i^T\bx)\big) \ba_i.$ With step size $\eta=\sqrt{2\pi}$ and an additional normalization step, NBIHT proceeds as 
\begin{align}\label{raic1bcs}
    \bx_{t+1} = \frac{P_{
    \Sigma^n_k
    }\big(\bx_t -\sqrt{2\pi}\cdot \bh_{\bx}(\bx_t)\big)}{\|P_{
    \Sigma^n_k
    }\big(\bx_t -\sqrt{2\pi}\cdot \bh_{\bx}(\bx_t)\big)\|_2},\quad t=0,1,2,...
\end{align}
The optimality of NBIHT follows from the following uniform RAIC: if $m=\tilde{\Omega}(k\log\frac{en}{k})$, then with high probability,\,\footnote{Although $\Sigma^{n,*}_k \nsupseteq \Sigma^n_k\cap B_2^n(\bx;d)$ for any $d>0$ and hence Theorem \ref{coneK} does not apply, (\ref{raic1bcs}) is capable of yielding the convergence of NBIHT to $\tilde{O}(k/m)$ $\ell_2$ error due to the normalization step; see Remark \ref{rem:normalization}.}
\begin{align*}
    \bh_{\bx}(\bu)\sim {\rm RAIC}\bigg(\Sigma^n_k;\Sigma^{n,*}_{k},\frac{1}{3}\|\bu-\bx\|_2+\tilde{O}\Big(\frac{k}{m}\Big),\sqrt{2\pi}\bigg),\quad \forall \bx\in \Sigma^{n,*}_k;
\end{align*}
see  \citep[Theorem 3.3]{matsumoto2024binary} or \citep[Theorem 5]{chen2024optimal}. 
 \cite{matsumoto2024robust} showed the robustness of NBIHT to $\eta$-fraction of adversarial bit flips: suppose that our observation is $\tilde{\by}\in\{-1,1\}^m$ obeying $d_H(\tilde{\by},\sign(\bA\bx)):= \sum_{i=1}^m\mathbbm{1}\big(\tilde{y}_i\ne \sign(\ba_i^T\bx)\big)\le \eta m$  for some small enough $\eta\in(0,1)$, then NBIHT (with $\tilde{\by}$ as input) satisfies 
\begin{align*}
    \sup_{\bx\in \Sigma^{n,*}_k}\|\hat{\bx}_{nbiht} - \bx\|_2= \tilde{O}\Big(\frac{k}{m}\Big) + O(\eta\sqrt{\log(1/\eta)}),
\end{align*}
where the term $O(\eta\sqrt{\log(1/\eta)})$ captures the effect of adversarial bit flips.

The main aim of this example is to show that this robustness result can be derived from our framework. Let $\tilde{\bh}_{\bx}(\bu)=\frac{1}{2m}\sum_{i=1}^m(\sign(\ba_i^T\bu)-\tilde{y}_i)\ba_i$ 
be the noisy gradient, then the effect of the adversarial corruption is captured by the term in (\ref{noiseerrorterm}). With high probability, it can be bounded by 
\begin{align*}
    \nonumber&\sup_{\bx,\bu\in\Sigma^{n,*}_k}\sqrt{2\pi}\|\tilde{\bh}_{\bx}(\bu)-\bh_{\bx}(\bu)\|_{(\Sigma^{n,*}_{2k})^\circ}\\\nonumber&= \sup_{\bx,\bu\in\Sigma^{n,*}_k}\sqrt{2\pi}\sup_{\bv\in\Sigma^{n,*}_{2k}}\frac{1}{2m}\sum_{i=1}^m(y_i-\tilde{y}_i)\ba_i^T\bv \\ 
    &\le \sqrt{2\pi}\sup_{\bv\in\Sigma^{n,*}_{2k}} \max_{\substack{I\subset [m]\\|I|\le \eta m}} \frac{1}{m}\sum_{i\in I}|\ba_i^T\bv|
    \\&\quad \blacktriangleright\textrm{by triangle inequality and $d_H(\tilde{\by},\sign(\bA\bx))\le \eta m$}
    \\&\le  \sqrt{2\pi\eta} \sup_{\bv\in\Sigma^{n,*}_{2k}}\max_{\substack{I\subset[m]\\|I|\le \eta m}} \frac{1}{\sqrt{m}}\bigg(\sum_{i\in I}|\ba_i^T\bv|^2\bigg)^{1/2}\\
    &\quad\blacktriangleright \textrm{by Cauchy-Schwarz inequality}
    \\ &\lesssim\eta\sqrt{\log(\eta^{-1})}, \\
    &\quad\blacktriangleright\textrm{by a concentration bound in Lemma \ref{lem:maxlsum}}
\end{align*} 
as desired.
 \exampleend
\end{example}
\begin{remark}
     Note that the error term $O(\eta\sqrt{\log(\eta^{-1})})$ in Example \ref{1bcsexample} is near optimal up to log factor, since there exists some $\bx'\in\Sigma^{n,*}_k$ obeying $\|\bx'-\bx\|_2\asymp \eta $ such that $\bx'$ and $\bx$ are not distinguishable under $\eta$-fraction of adversarial bit flips: let $P_{\bx,\bx'}=\mathbb{P}(\sign(\ba_i^T\bx)\ne \sign(\ba_i^T\bx'))\le \frac{\eta}{2}$ (e.g., see \cite{plan2014dimension}), then in light of $d_H(\sign(\bA\bx'),\sign(\bA\bx))= \sum_{i=1}^m\mathbbm{1}\big(\sign(\ba_i^T\bx')\ne\sign(\ba_i^T\bx)\big)\sim {\rm Bin}(m,P_{\bx,\bx'})$, Chernoff bound yields    $d_H(\sign(\bA\bx'),\sign(\bA\bx))\le \eta m$ with high probability. \remarkend
\end{remark}

\appendix

\bibliography{libr}

\section{Proof of Theorem \ref{uraic_cone} (Uniform RAIC for a cone)}\label{sec:coneproof}
 Our goal is to prove (\ref{raiccone}), and we only need to bound the two terms in (\ref{2terms}). Let us introduce the shorthands $I_1(\bu,\bx):=\|\bu-\bx-\frac{1}{m}\sum_{i=1}^m \ba_i\ba_i^T(\bu-\bx)\|_{\calK_1^\circ}$ and $I_2(\bx):=\|\frac{1}{m}\sum_{i=1}^m \tilde{f}_i(\ba_i^T\bx)\ba_i\|_{\calK_1^\circ}$. We seek to bound $I_1(\bu,\bx)$ and $I_2(\bx)$ uniformly for all $(\bu,\bx)\in \calK\times\calX$ (also recall $\calX\subset \calK$). 

\subsection{Controlling $I_1(\mathbf{u},\mathbf{x})$}\label{section:boundI1x}
For any $\bu\in \calK$ and $\bx\in \calX\subset \calK$ such that $\bu\ne\bx$, we have $\frac{\bu-\bx}{\|\bu-\bx\|_2}\in\calK_1$ and hence 
\begin{align}
    I_1(\bu,\bx)&= \|\bu-\bx\|_2\cdot \sup_{\bq\in \calK_1}  \frac{1}{m}\sum_{i=1}^m (\frac{\bu-\bx}{\|\bu-\bx\|_2})^T\ba_i \ba_i^T \bq - (\frac{\bu-\bx}{\|\bu-\bx\|_2})^T\bq 
    \nonumber\\\label{I1bound11}
    &\le  \|\bu-\bx\|_2\cdot \sup_{\bp,\bq\in \calK_1} \Big|\frac{1}{m}\sum_{i=1}^m \bp^T\ba_i \ba_i^T \bq - \bp^T\bq\Big| .
\end{align}
We pause to introduce a concentration inequality for product process. 
\begin{lemma}[Outcome of  Theorem 1.13 in \cite{mendelson2016upper}]  \label{menproduct}
Let $\ba_1,\dots,\ba_m$ be i.i.d. copies of a random vector $\ba$, and
let $\{g_{\bu}(\ba)\}_{\bu\in \calU}$ and $\{h_{\bv}(\ba)\}_{\bv\in\calV}$ be real-valued stochastic processes indexed by 
$\calU,\calV\subset \mathbb{R}^n$, respectively. Assume that there exist $K_1,K_2,r_1,r_2\geq 0$ such that
\[
\|g_{\bu}(\ba)-g_{\bu'}(\ba)\|_{\psi_2}\leq K_1\|\bu-\bu'\|_2,\quad
\|g_{\bu}(\ba)\|_{\psi_2} \leq r_1,\quad \forall\,\bu,\bu'\in \calU,
\]
and
\[
\|h_{\bv}(\ba)-h_{\bv'}(\ba)\|_{\psi_2} \leq K_2\|\bv-\bv'\|_2,\quad
\|h_{\bv}(\ba)\|_{\psi_2}\leq r_2,\quad \forall\,\bv,\bv'\in \calV.
\]
Then for every $t\geq 1$, with probability at least $1-2\exp(-ct^2)$, it holds that
\begin{align*}
&\sup_{\bu\in\calU}\sup_{\bv\in\calV} 
\bigg| \frac{1}{m}\sum_{i=1}^m g_{\bu}(\ba_i)h_{\bv}(\ba_i)
- \mathbb{E}\big[g_{\bu}(\ba)h_{\bv}(\ba)\big]\bigg|\\
&\leq C\bigg(\frac{(K_1\omega(\calU)+t r_1) \cdot(K_2 \omega(\calV)+t r_2)}{m}
+\frac{r_1 K_2\omega(\calV)+r_2 K_1\omega(\calU)+t r_1r_2}{\sqrt{m}}\bigg).
\end{align*}
\end{lemma}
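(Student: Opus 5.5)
The plan is to obtain the bound directly from Mendelson's generic-chaining estimate for product processes \cite[Theorem 1.13]{mendelson2016upper}, applied to the two function classes $\mathcal{F}=\{g_{\bu}\}_{\bu\in\calU}$ and $\mathcal{H}=\{h_{\bv}\}_{\bv\in\calV}$, viewed as subsets of $L_{\psi_2}$ under the law of $\ba$. I would use the refined form of that theorem, in which the deviation parameter $t$ enters through truncated chaining functionals rather than as a global multiplicative factor. Writing $\Lambda_t(\mathcal{F})\lesssim\gamma_2(\mathcal{F},\psi_2)+t\,d_{\psi_2}(\mathcal{F})$, with $d_{\psi_2}(\mathcal{F})=\sup_{f\in\mathcal{F}}\|f\|_{\psi_2}$, that form states: with probability at least $1-2\exp(-ct^2)$,
\begin{align*}
\sup_{f\in\mathcal{F},\,h\in\mathcal{H}}\bigg|\frac1m\sum_{i=1}^m f(\ba_i)h(\ba_i)-\mathbb{E}fh\bigg|\lesssim \frac{\Lambda_t(\mathcal{F})\Lambda_t(\mathcal{H})}{m}+\frac{d_{\psi_2}(\mathcal{F})\Lambda_t(\mathcal{H})+d_{\psi_2}(\mathcal{H})\Lambda_t(\mathcal{F})}{\sqrt m}.
\end{align*}

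The first step is to transfer the chaining functionals to the index sets. The map $\bu\mapsto g_{\bu}$ is $K_1$-Lipschitz from $(\calU,\|\cdot\|_2)$ into $L_{\psi_2}$, and $\gamma_2$ is monotone under Lipschitz images, so $\gamma_2(\mathcal{F},\psi_2)\le K_1\gamma_2(\calU,\|\cdot\|_2)$. Talagrand's majorizing measure theorem then gives $\gamma_2(\calU,\|\cdot\|_2)\lesssim\omega(\calU)$. The same argument gives $\gamma_2(\mathcal{H},\psi_2)\lesssim K_2\,\omega(\calV)$. The radius hypotheses give $d_{\psi_2}(\mathcal{F})\le r_1$ and $d_{\psi_2}(\mathcal{H})\le r_2$. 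Hence $\Lambda_t(\mathcal{F})\lesssim K_1\omega(\calU)+tr_1$ and $\Lambda_t(\mathcal{H})\lesssim K_2\omega(\calV)+tr_2$.

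The second step is to substitute and expand. The quadratic term becomes $(K_1\omega(\calU)+tr_1)(K_2\omega(\calV)+tr_2)/m$. The linear term becomes
\[
\frac{r_1(K_2\omega(\calV)+tr_2)+r_2(K_1\omega(\calU)+tr_1)}{\sqrt m}=\frac{r_1K_2\omega(\calV)+r_2K_1\omega(\calU)+2tr_1r_2}{\sqrt m},
\]
and the factor $2$ is absorbed into $C$. This is exactly the claimed bound, valid for every $t\ge1$.

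The main obstacle is bookkeeping about which version of Mendelson's theorem to invoke. The cruder version puts a factor $t^2$ in front of the whole bound, which would lose the additive structure in $t$. I would therefore check that the truncated functional $\Lambda_{s_0,t}$ used there, namely chaining starting at level $s_0$ with $2^{s_0}\asymp t^2$, is dominated by $\gamma_2+t\,d_{\psi_2}$. That domination follows by bounding the first $s_0$ levels by $2^{s_0/2}$ times the diameter. A minor further point is that the majorizing measure step needs $\calU,\calV$ to be bounded, which is implicit whenever $\omega(\calU),\omega(\calV)<\infty$; if the width is infinite, the bound is trivial.
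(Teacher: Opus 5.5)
Your proposal is correct and follows essentially the same route as the paper. The paper gives no separate proof; it states the lemma as a direct outcome of Mendelson's Theorem 1.13. Your derivation is the standard way that outcome is obtained: fix the level parameter as a constant and choose the truncation level with $2^{s_0}\asymp t^2$, so that $t$ enters additively. Then bound $\gamma_2$ of the function classes by $K_1\gamma_2(\calU,\|\cdot\|_2)$ and $K_2\gamma_2(\calV,\|\cdot\|_2)$ via the Lipschitz index maps, and convert these to Gaussian widths with the majorizing measure theorem.
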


Continuing from (\ref{I1bound11}), a straightforward application of Lemma \ref{menproduct} yields 
\begin{align}\label{I1bound}
    \mathbb{P}\bigg(I_1(\bu,\bx)\lesssim \sqrt{\frac{\omega^2(\calK_1)}{m}}\|\bu-\bx\|_2,\,\forall \bu\in \calK,\, \bx\in \calX\bigg) \ge 1 - 2\exp(-\omega^2(\calK_1)).  
\end{align}

\subsection{Controlling $I_2(\mathbf{x})$}  
We seek to bound 
$
   \sup_{\bx\in\calX} I_2(\bx)=  \sup_{\bx\in\calX}  \sup_{\bq\in\calK_1}\frac{1}{m}\sum_{i=1}^m\tilde{f}_i(\ba_i^T\bx)\ba_i^T\bq$.
We start with
\begin{align}
     \Big|\mathbb{E}\tilde{f}_i(\ba_i^T\bx)\ba_i^T\bq\Big|  
    & = \Big|\mathbb{E}\tilde{f}_i(\ba_i^T\bx)\ba_i^T \langle \bq,\frac{\bx}{\|\bx\|_2}\rangle \frac{\bx}{\|\bx\|_2}\Big| \nonumber\\
    &\quad\blacktriangleright\textrm{by rotational invariance of $\ba_i$} \nonumber
    \\ \label{expecttildefbound}
    & = \bigg|\langle \bq, \frac{\bx}{\|\bx\|_2}\rangle\bigg|\rho(\bx) \le \rho(\bx)\\
    &\quad\blacktriangleright\textrm{recall $\rho(\bx)$ defined in (\ref{rhox})}\nonumber
\end{align}
for any $\bx\in \calX,\,\bq\in \calK_1$.   Therefore,  
\begin{align}\nonumber
     \sup_{\bx\in\calX}I_2(\bx)&=   \sup_{\bx\in\calX}\sup_{\bq\in \calK_1}\frac{1}{m}\sum_{i=1}^m \tilde{f}_i(\ba_i^T\bx) \ba_i^T\bq\\\nonumber
     &\le  \sup_{\bx\in\calX}\sup_{\bq\in \calK_1}\bigg\{\frac{1}{m}\sum_{i=1}^m \tilde{f}_i(\ba_i^T\bx)\ba_i^T\bq - \mathbb{E}\Big[\tilde{f}_i(\ba_i^T\bx)\ba_i^T\bq\Big]\bigg\} +   \sup_{\bx\in\calX}\sup_{\bq\in \calK_1}  \mathbb{E}\Big[\tilde{f}_i(\ba_i^T\bx)\ba_i^T\bq\Big] \\ 
     &\le \sup_{\bp\in \calX}\sup_{\bq\in \calK_1}\bigg\{\underbrace{\frac{1}{m}\sum_{i=1}^m \tilde{f}_i(\ba_i^T\bp)\ba_i^T\bq - \mathbb{E}\Big[\tilde{f}_i(\ba_i^T\bp)\ba_i^T\bq\Big]}_{:=J_{\bp,\bq}}\bigg\} + \sup_{\bx\in\calX}\rho(\bx) \label{I2end}
\end{align}
We invoke a covering argument to achieve uniform bound on $\bp\in \calX$. For some $\varepsilon>0$ to be chosen, we let $\calN_\varepsilon$ be a smallest $\varepsilon$-net of $\calX$ and hence $\log|\calN_\varepsilon|=\mathscr{H}(\calX,\varepsilon)$. By (C1) in Assumption \ref{assump:fi}, we   use Lemma \ref{menproduct} to reach 
\begin{align}\label{pointbound}
   & \mathbb{P}\bigg(\sup_{\bq\in \calK_1}J_{\bp,\bq}\lesssim \frac{\varphi_1(t+\omega(\calK_1))}{\sqrt{m}}\bigg)\ge 1- 2\exp(-t^2),\quad \forall \bp\in\calX,~0<t\le \sqrt{m}.
\end{align}
We then take a union bound over $p\in \calN_{\varepsilon}$ and set $t= \sqrt{2\mathscr{H}(\calX,\varepsilon)}+\omega(\calK_1)$ to achieve  
\begin{align}
    \mathbb{P}\bigg(\sup_{\bp\in \calN_{\varepsilon}}\sup_{\bq\in \calK_1}J_{\bp,\bq}\lesssim \frac{\varphi_1(\sqrt{\mathscr{H}(\calX,\varepsilon)}+\omega(\calK_1))}{\sqrt{m}}\bigg)\ge 1-2\exp(- \mathscr{H}(\calX,\varepsilon)- \omega^2(\calK_1)).\label{netbound} 
\end{align}
For any $\bp\in \calX$, we let 
\begin{align}
    \hat{\bp}=\text{arg}\min_{\bw\in \calN_{\varepsilon}}\,\|\bw-\bp\|_2,\label{defihatp}
\end{align}
which depends on $\bp$ but such dependence is dropped in notation. By triangle inequality and substituting $J_{\bp,\bq}$, 
\begin{align} \nonumber 
    &\sup_{\bp\in\calX}\sup_{\bq\in \calK_1}J_{\bp,\bq}-\sup_{\bp\in\calX}\sup_{\bq\in \calK_1}J_{\hat{\bp},\bq}\le \sup_{\bp\in\calX}\sup_{\bq\in\calK_1}(J_{\bp,\bq}-J_{\hat{\bp},\bq})\\\label{eq39}
    &\stackrel{(\ref{expecttildefbound})}{\le} \sup_{\bp\in\calX}\sup_{\bq\in\calK_1}\frac{1}{m}\sum_{i=1}^m\big(\tilde{f}_i(\ba_i^T\bp)-\tilde{f}_i(\ba_i^T\hat{\bp})\big)\ba_i^T\bq+ 2\sup_{\bx\in \calX}\rho(\bx)
\end{align}
All that remains is to bound
\begin{align*}
   G:= \sup_{\bp\in\calX}\sup_{\bq\in\calK_1}\frac{1}{m}\sum_{i=1}^m\big(\tilde{f}_i(\ba_i^T\bp)-\tilde{f}_i(\ba_i^T\hat{\bp})\big)\ba_i^T\bq. 
\end{align*} 
For some $\eta \in (0,\frac{\varphi_2}{2})$ and any $\bx\in \calX$, we define 
\begin{align}
    \calJ_{\bx,\eta}:=\{i\in[m]:\dist(\ba_i^T\bx,\calD_{f_i})\le \eta\}. \label{def:Jx}
\end{align} By (C5) in Assumption \ref{assump:fi} and a  Chernoff bound for binomial variable, 
\begin{align}\label{chernoff1}
    \mathbb{P}\big(|\calJ_{\bx,\eta}|\le 2\varphi_5\eta m\big)\ge 1-\exp(-c'\varphi_5\eta m).
\end{align}
Under the condition \begin{align}\label{etasize}
    \varphi_5\eta m \gtrsim \mathscr{H}(\calX,\varepsilon),
\end{align}
a   union bound over $\bx\in \calN_\varepsilon$ yields 
\begin{align}\label{event_awaydis}
    \mathbb{P}\bigg(\sup_{\bx\in \calN_\varepsilon}|\calJ_{\bx,\eta}|\le 2\varphi_5\eta m\bigg) \ge 1-\exp(-\frac{c'\varphi_5\eta m}{2}). 
\end{align}
For $\bx,\bx'\in \calX$, we also define
\begin{align}
    \calG_{\bx,\bx',\eta}:=\Big\{i\in[m]:|\ba_i^T(\bx-\bx')|\ge  \frac{\eta}{2}\Big\}.  \label{def:Gx}
\end{align}
We pause to introduce a useful concentration bound.

 \begin{lemma}[Theorem 2.10 in \cite{dirksen2021non}] \label{lem:maxlsum}
 Let $\ba_1,\dots,\ba_m$ be i.i.d. copies of $N(0,I_n)$, and let $\calU\subset \mathbb{R}^n$.
      For any $\zeta \in \{\frac{1}{m},\frac{2}{m},\cdots,\frac{m-1}{m},1\}$, the event 
      \begin{align*}
          \sup_{\bu\in \calU}\max_{\substack{\calG\subset [m]\\|\calG|= \zeta m}}\,\bigg(\frac{1}{\eta m}\sum_{i\in I}|\ba_i^T\bu|^2\bigg)^{1/2} \lesssim \frac{\omega(\calU)}{\sqrt{\zeta m}} + \Big(\sup_{\bu\in\calU}\|\bu\|_2\Big)\sqrt{\log\frac{e}{\zeta}}
      \end{align*}
      holds with probability at least $1-2\exp(-c\zeta m\log(\frac{e}{\zeta}))$.
  \end{lemma}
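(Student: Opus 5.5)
The plan is a union bound over index sets, with Gaussian concentration applied to each fixed index set. (In the display of Lemma \ref{lem:maxlsum}, the normalization $\frac{1}{\eta m}$ should be read as $\frac{1}{\zeta m}$, and the summation set $I$ as $\calG$.)

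Write $s=\zeta m$ and $R=\sup_{\bu\in\calU}\|\bu\|_2$. Fix $\calG\subset[m]$ with $|\calG|=s$, and let $\bA_{\calG}\in\mathbb{R}^{s\times n}$ be the matrix with rows $\ba_i^T$, $i\in\calG$; it has i.i.d.\ $N(0,1)$ entries. The quantity to control is
\[
Z_{\calG}:=\sup_{\bu\in\calU}\Big(\sum_{i\in\calG}|\ba_i^T\bu|^2\Big)^{1/2}=\sup_{\bu\in\calU}\sup_{\bw\in\mathbb{S}^{s-1}}\bw^T\bA_{\calG}\bu,
\]
which is the supremum of a bilinear Gaussian process.

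\textbf{Step 1 (expectation).} By Chevet's inequality, or equivalently Gordon/Slepian comparison of $\bw^T\bA_{\calG}\bu$ with $\langle \bg,\bu\rangle\|\bw\|_2+\langle \bh,\bw\rangle\|\bu\|_2$, we get
\[
\mathbb{E}Z_{\calG}\le \sup_{\bw\in\mathbb{S}^{s-1}}\|\bw\|_2\,\omega(\calU)+R\,\omega(\mathbb{S}^{s-1})\le \omega(\calU)+R\sqrt{s}.
\]
This is the main technical step, and I expect it to be the chief obstacle. I would invoke Chevet directly. If that is not admissible, I would verify the Sudakov--Fernique increment condition for the two processes, which reduces to the elementary inequality $\|\bw\bu^T-\bw'\bu'^T\|_F^2\le \|\bu-\bu'\|_2^2+R^2\|\bw-\bw'\|_2^2$ for unit vectors $\bw,\bw'$ and $\|\bu\|_2,\|\bu'\|_2\le R$.

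\textbf{Step 2 (concentration).} The map $\bA_{\calG}\mapsto Z_{\calG}$ is $R$-Lipschitz in Frobenius norm, since it is a supremum of linear functionals $\langle \bw\bu^T,\cdot\rangle_F$ with $\|\bw\bu^T\|_F\le R$. Gaussian concentration for Lipschitz functions therefore gives
\[
\mathbb{P}\big(Z_{\calG}\ge \omega(\calU)+R\sqrt{s}+Rt\big)\le e^{-t^2/2}.
\]

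\textbf{Step 3 (union bound).} The number of sets $\calG$ with $|\calG|=s$ is $\binom{m}{s}\le (e/\zeta)^{s}=\exp(s\log(e/\zeta))$. Take $t=C\sqrt{s\log(e/\zeta)}$ with $C$ large enough, e.g.\ $C=2$. The union bound then shows that with probability at least $1-\exp(-c\,\zeta m\log(e/\zeta))$,
\[
\max_{|\calG|=s}Z_{\calG}\lesssim \omega(\calU)+R\sqrt{s\log(e/\zeta)}.
\]
Dividing by $\sqrt{s}=\sqrt{\zeta m}$ yields
\[
\sup_{\bu\in\calU}\max_{|\calG|=\zeta m}\Big(\frac{1}{\zeta m}\sum_{i\in\calG}|\ba_i^T\bu|^2\Big)^{1/2}\lesssim \frac{\omega(\calU)}{\sqrt{\zeta m}}+R\sqrt{\log(e/\zeta)},
\]
where the term $R$ coming from $R\sqrt{s}/\sqrt{s}$ has been absorbed into $R\sqrt{\log(e/\zeta)}$ because $\log(e/\zeta)\ge1$. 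This is the claimed bound, with the stated probability after adjusting constants.
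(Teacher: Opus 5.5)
Your proof is correct for the Gaussian setting of the lemma (reading $\frac{1}{\eta m}$ as $\frac{1}{\zeta m}$ and $I$ as $\calG$, as you do), and it takes a different route from the cited source. The Sudakov--Fernique increment condition holds because the difference of its two sides is $2(1-\bw^T\bw')(R^2-\bu^T\bu')\ge 0$. The map $\bA_{\calG}\mapsto Z_{\calG}$ is $R$-Lipschitz in Frobenius norm. The choice $t=2\sqrt{s\log(e/\zeta)}$ beats $\binom{m}{s}\le (e/\zeta)^s$, so the failure probability is at most $\exp(-\zeta m\log(e/\zeta))$, which is even slightly better than stated.

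The paper gives no proof of its own; it imports the result from \cite{dirksen2021non}. There it is obtained for general subgaussian vectors by a generic-chaining argument, which produces $\gamma_2(\calU)$, and $\gamma_2(\calU)$ is equivalent to $\omega(\calU)$ by majorizing measures.

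Your argument instead fixes one index set at a time and uses Gaussianity twice: a comparison inequality giving $\mathbb{E}Z_{\calG}\le\omega(\calU)+R\sqrt{\zeta m}$ with constant one, and dimension-free Gaussian concentration for Lipschitz functions. Only afterwards do you pay the union bound $\log\binom{m}{\zeta m}\le\zeta m\log(e/\zeta)$. This is more elementary, gives explicit constants, and avoids $\gamma_2$ entirely.

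The chaining proof buys validity for subgaussian rows, where your Step 2 is not available as stated. There, the tail form of the matrix deviation inequality could replace your Steps 1--2 and leave Step 3 unchanged. Since Assumption \ref{assump:sg} is Gaussian, your version covers every use in the paper. The applications maximize over $|\calI|\le\zeta m$ rather than $|\calI|=\zeta m$, but the two maxima coincide because the summands are nonnegative.
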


 For any $\zeta\in\{\frac{i}{m}:i\in [m]\}$, under Assumption \ref{assump:sg}, Lemma \ref{lem:maxlsum} yields 
 \begin{align*}
     \sup_{\bw\in \calX_\varepsilon}\max_{\substack{\calI\subset[m]\\|\calI|\le \zeta m}}\bigg(\frac{1}{\zeta m}\sum_{i\in\calI}|\ba_i^T\bw|^2\bigg)^{1/2}\lesssim \frac{\omega(\calX_\varepsilon)}{\sqrt{\zeta m}} + \varepsilon \sqrt{\log (e/\zeta)},\quad\textrm{w.p. $\ge1-2\exp(-c''\zeta m \log(e/\zeta))$}
 \end{align*}
 We enforce the condition 
 \begin{align}\label{missingcond}
     \frac{\omega(\calX_\varepsilon)}{\sqrt{\zeta m}} + \varepsilon \sqrt{\log (e/\zeta)} \lesssim \eta~\textrm{ with small enough hidden constant}
 \end{align}
so that
\begin{align}\label{event:diffbound}
    \mathbb{P}\bigg(\sup_{\bw\in \calX_\varepsilon}\max_{\substack{\calI\subset[m]\\|\calI|\le \zeta m}}\Big(\frac{1}{\zeta m}\sum_{i\in\calI}|\ba_i^T\bw|^2\Big)^{1/2}\le \frac{\eta}{4}\bigg) \ge 1-2\exp(-c''\zeta m\log(e/\zeta)).
\end{align}
By $\calX\subset\calK$ for a  cone $\calK$, $\calX_\varepsilon\subset \varepsilon \calK_1$,   (\ref{missingcond}) can be ensured by 
 \begin{align}\label{usecone4} 
     \frac{\varepsilon\cdot\omega(\calK_1)}{\sqrt{\zeta m}} + \varepsilon \sqrt{\log (e/\zeta)} \lesssim \eta~\textrm{ with small enough hidden constant}.
 \end{align}
 Since for any $\bp\in\calX$ we have $\bp-\hat{\bp}\in \calX_\varepsilon$ by (\ref{defihatp}), the event in (\ref{event:diffbound}) implies 
 \begin{align}\label{eve:boundG}
    |\calG_{\bp,\hat{\bp},\eta}|\le \zeta m\,,\quad \forall\, \bp\in \calX;
 \end{align}
 otherwise, if $|\calG_{\bp,\hat{\bp},\eta}|>\zeta m$ for some $\bp\in\calX$, then we can find $\hat{\calG}\subset \calG_{\bp,\hat{\bp},\eta}$ such that $|\hat{\calG}|=\zeta m$, then by the definition of $\calG_{\bp,\hat{\bp},\eta}$, 
 \[\bigg(\frac{1}{\zeta m}\sum_{i\in \hat{\calG}}|\ba_i^T(\bp-\hat{\bp})|^2\bigg)^{1/2}\ge \frac{\eta}{2},\]
 which contradicts (\ref{event:diffbound}).

 In summary, we can assume that the events (\ref{event_awaydis}) and (\ref{eve:boundG}) hold with probability at least 
 \begin{align}
     1-\exp(-c'\varphi_5\eta m)-2\exp(-c'\zeta m \log(e/\zeta)).\label{probterm}
 \end{align}
 We now let \begin{align}
     \bar{\calI}_{\bp,\hat{\bp},\eta}:=\calJ_{\hat{\bp},\eta}\cup \calG_{\bp,\hat{\bp},\eta}\,,\quad \bp\in\calX\label{def:Ip}
 \end{align}
which, in view of $\hat{\bp}\in \calN_\varepsilon$, has small cardinality
\begin{align}
    |\bar{\calI}_{\bp,\hat{\bp},\eta}|\le |\calJ_{\hat{\bp},\eta}|+|\calG_{\bp,\hat{\bp},\eta}|\le (2\varphi_5\eta +\zeta)m\,,\quad\forall \bp\in\calX.\label{highprobeven}
\end{align}
To bound $G$, we separately treat the measurements in $\bar{\calI}_{\bp,\hat{\bp},\eta}$ and $\bar{\calI}_{\bp,\hat{\bp},\eta}^c:= [m]\setminus \bar{\calI}_{\bp,\hat{\bp},\eta}$:
\begin{align}
   G \le  \underbrace{\sup_{\bp\in\calX}\sup_{\bq\in\calK_1}\frac{1}{m}\sum_{i\in \bar{\calI}_{\bp,\hat{\bp},\eta}}\big(\tilde{f}_i(\ba_i^T\bp)-\tilde{f}_i(\ba_i^T\hat{\bp})\big)\ba_i^T\bq }_{:=G_1}\label{outterm}
    &+ \underbrace{\sup_{\bp\in\calX}\sup_{\bq\in\calK_1}\frac{1}{m}\sum_{i\in \bar{\calI}^c_{\bp,\hat{\bp},\eta}}\big(\tilde{f}_i(\ba_i^T\bp)-\tilde{f}_i(\ba_i^T\hat{\bp})\big)\ba_i^T\bq}_{:=G_2}\,.
\end{align}
It remains to bound $G_1$ and $G_2$ separately. 

\subsubsection{Bounding $G_1$}
We shall start with 
\begin{align} \nonumber
   G_1
    &\le \sup_{\bp\in\calX}\sup_{\bq\in\calK_1}\frac{1}{m}\sum_{i\in \bar{\calI}_{\bp,\hat{\bp},\eta}}\big|\tilde{f}_i(\ba_i^T\bp)-\tilde{f}_i(\ba_i^T\hat{\bp})\big||\ba_i^T\bq|
    \\\nonumber&\quad\blacktriangleright\textrm{by triangle inequality}
    \\\nonumber
    &\le \sup_{\bp\in\calX} \sup_{\bq\in\calK_1}\frac{1}{m}\sum_{i\in\bar{\calI}_{\bp,\hat{\bp},\eta}}\Big((\varphi_4+\frac{\varphi_3}{\mu\varphi_2})|\ba_i^T(\bp-\hat{\bp})|+\frac{\varphi_3}{\mu}\Big) |\ba_i^T\bq| \\\nonumber&\quad\blacktriangleright\textrm{by Equation (\ref{worstbound})}\\\nonumber
    &\le (\varphi_4+\frac{\varphi_3}{\mu\varphi_2})\sup_{\bw\in \calX_\varepsilon}\sup_{\bq\in\calK_1}\max_{\substack{\calI\subset[m]\\|\calI|\le (2\varphi_5\eta+\zeta)m}}\frac{1}{m}\sum_{i\in\calI}|\ba_i^T\bw||\ba_i^T\bq|  + \sup_{\bq\in\calK_1}\max_{\substack{\calI\subset[m]\\|\calI|\le(2\varphi_5\eta+\zeta)m}} \frac{\varphi_3}{\mu}\frac{1}{m}\sum_{i\in\calI}|\ba_i^T\bq|
    \\\nonumber
    &\quad\blacktriangleright\textrm{by $\bp-\hat{\bp}\in\calX_{\varepsilon}$ and Equation (\ref{highprobeven})}
    \\:&=G_{11}+G_{12}.\label{G1toG11G12}
\end{align}
To bound  $G_{11}$,     
\begin{align} \nonumber
    &G_{11} \le \Big(\varphi_4+\frac{\varphi_3}{\mu\varphi_2}\Big)\bigg(\sup_{\bw\in\calX_\varepsilon}\max_{|\calI|\le(2\varphi_5\eta + \zeta)m}\frac{1}{m}\sum_{i\in\calI}|\ba_i^T\bw|^2\bigg)^{1/2}\bigg(\sup_{\bq\in \calK_1}\max_{|\calI|\le(2\varphi_5\eta + \zeta)m}\frac{1}{m}\sum_{i\in\calI}|\ba_i^T\bq|^2\bigg)^{1/2}\\ &\nonumber\quad\blacktriangleright\textrm{by Cauchy-Schwarz inequality}\\\nonumber
    &\lesssim \Big(\varphi_4+\frac{\varphi_3}{\mu\varphi_2}\Big)\bigg(\frac{\omega(\calX_\varepsilon)}{\sqrt{m}}+\varepsilon \sqrt{\varphi_5\eta \log\Big(\frac{1}{\varphi_5\eta}\Big)+\zeta\log\Big(\frac{1}{\zeta}\Big)}\bigg)\bigg(\frac{\omega(\calK_1)}{\sqrt{m}}+ \sqrt{\varphi_5\eta \log\Big(\frac{1}{\varphi_5\eta}\Big)+\zeta\log\Big(\frac{1}{\zeta}\Big)}\bigg)\\
    \nonumber&\quad\blacktriangleright\textrm{by Lemma \ref{lem:maxlsum}, this holds w.p. $\ge 1-4\exp(-c'\zeta m \log(e/\zeta))$}
    \\ \label{boundG11}
  & \lesssim \varepsilon \Big(\varphi_4+\frac{\varphi_3}{\mu\varphi_2}\Big)\bigg(\frac{\omega^2(\calK_1)}{m} + \varphi_5\eta\log\Big(\frac{1}{\varphi_5\eta}\Big)+\zeta\log\Big(\frac{1}{\zeta}\Big)\bigg).\\
  &\nonumber\quad\blacktriangleright\textrm{by $\calX\subset \calK$ and hence $\calX_\varepsilon \subset \calK_\varepsilon=\varepsilon \calK_1$}
\end{align}  
Similarly,  
\begin{align}\nonumber
    G_{12}&\le \frac{\varphi_3}{\mu}\sqrt{\frac{2\varphi_5\eta+\zeta}{m}}\sup_{\bq\in\calK_1} \max_{|\calI|\le (2\varphi_5\eta+\zeta)m}\bigg(\sum_{i\in\calI}|\ba_i^T\bq|^2\bigg)^{1/2}\\
    &\lesssim \frac{\varphi_3\sqrt{\varphi_5\eta +\zeta}}{\mu}\bigg(\frac{\omega(\calK_1)}{\sqrt{m}}+\sqrt{\varphi_5\eta \log\Big(\frac{1}{\varphi_5\eta}\Big)}+\sqrt{\zeta\log\Big(\frac{1}{\zeta}\Big)}\bigg) \label{boundG12} 
\end{align}
with probability at least $1-2\exp(-c'\zeta m \log(e/\zeta))$.

\subsubsection{Bounding $G_2$}

We now bound $G_2$. For $i\in \bar{\calI}^c_{\bp,\hat{\bp},\eta}$, we have $\dist(\ba_i^T\hat{\bp},f_i)\ge \eta$ and $|\ba_i^T\bp-\ba_i^T\hat{\bp}|<\frac{\eta}{2}$, which implies \[(\min\{\ba_i^T\bp,\ba_i^T\bq\},\max\{\ba_i^T\bp,\ba_i^T\bq\})\cap \calD_{f_i}=\varnothing,\] and therefore (C4) in Assumption \ref{assump:fi} yields 
\begin{align}\label{usingLip}
    |\tilde{f}_i(\ba_i^T\bp)-\tilde{f}_i(\ba_i^T\hat{\bp})|\le \varphi_4|\ba_i^T(\bp-\hat{\bp})|.  
\end{align}
Thus, 
\begin{align}\nonumber
  G_2&\le \sup_{\bq\in\calK_1}\frac{1}{m}\sum_{i\in\bar{\calI}^c_{\bp,\hat{\bp},\eta}}\varphi_4|\ba_i^T(\bp-\hat{\bp})||\ba_i^T\bq|\\\nonumber&\quad\blacktriangleright\textrm{by Equation (\ref{usingLip})}\\ \nonumber
    &\le \sup_{\bq\in\calK_1}\sup_{\bw\in \calX_{\varepsilon}}\frac{1}{m}\sum_{i=1}^m \varphi_4 |\ba_i^T\bw\ba_i^T\bq|
    \\\nonumber&\quad\blacktriangleright\textrm{by $\bp-\hat{\bp}\in \calX_{\varepsilon}$}
    \\\nonumber
    &\le \varphi_4 \sup_{\bq\in\calK_1}\bigg(\frac{1}{m}\sum_{i=1}^m|\ba_i^T\bq|^2\bigg)^{1/2}\sup_{\bw\in\calX_\varepsilon}\bigg(\frac{1}{m}\sum_{i=1}^m |\ba_i^T\bw|^2\bigg)^{1/2}
    \\\nonumber &\quad\blacktriangleright\textrm{by Cauchy-Schwarz inequality}
    \\\nonumber
    &\lesssim \varphi_4 \bigg(\frac{\omega(\calK_1)}{\sqrt{m}}+1\bigg)\bigg(\frac{\omega(\calX_\varepsilon)}{\sqrt{m}}+\varepsilon\bigg)\\
    \nonumber&\quad\blacktriangleright\textrm{by Lemma \ref{lem:maxlsum}, w.p. $\ge 1-2\exp(-c'm)$}
    \\\label{usecone3}
    &\lesssim \varepsilon\varphi_4.\\
    &\nonumber\quad\blacktriangleright\textrm{by \textrm{$\calX_\varepsilon\subset \calK_\varepsilon = \varepsilon \calK_1$ and $m\gtrsim \omega^2(\calK_1)$}}
\end{align}

\subsection{Putting pieces together}
We choose $\eta$ such that 
\begin{align*}
    \varphi_5\eta\asymp \frac{\mathscr{H}(\calX,\varepsilon)}{m} + \frac{\varphi_5\varepsilon\omega(\calK_1)}{\sqrt{\zeta m}} + \varphi_5\varepsilon \sqrt{\log(e/\zeta)}:= \overline{\Xi}
\end{align*}
to fulfill (\ref{etasize}) and (\ref{usecone4}). Under small enough $\varphi_5\varepsilon\sqrt{\log(1/\zeta)}$ and (\ref{sam:raiccone}),  $\overline{\Xi}$ is small enough.
Recall that  we have established yield 
\begin{gather}\label{bbb111}
    G_{11}\lesssim \varepsilon\Big(\varphi_4+\frac{\varphi_3}{\mu\varphi_2}\Big)\bigg(\frac{\omega^2(\calK_1)}{m}+\overline{\Xi}\log\Big(\frac{1}{\overline{\Xi}}\Big)+\zeta\log\Big(\frac{1}{\zeta}\Big)\bigg),\\\label{bbb222}G_{12} \lesssim \frac{\varphi_3\sqrt{\overline{\Xi}+\zeta}}{\mu}\bigg(\frac{\omega(\calK_1)}{\sqrt{m}}+ \sqrt{\overline{\Xi}\log\Big(\frac{1}{\overline{\Xi}}\Big)}+\sqrt{\zeta\log\Big(\frac{1}{\zeta}\Big)}\bigg)
\end{gather}  
We now substitute (\ref{bbb111}) and (\ref{bbb222}) into   the decomposition $G_1\le G_{11}+G_{12}$ to yield a bound on $G_1$. Combining with $G_2\lesssim\varepsilon\varphi_4$ from  (\ref{usecone3}), $G\le G_1+G_2$, and the fact that $\frac{\omega^2(\calK_1)}{m}+\overline{\Xi}\log(\frac{1}{\overline{\Xi}})+\zeta\log(\frac{1}{\zeta})$ is small enough, we obtain 
\begin{align*}
    G \lesssim \varepsilon\varphi_4 + \frac{\varphi_3}{\mu}\cdot \Xi  
\end{align*}
where, as per (\ref{heartstart})--(\ref{heartend}), 
\begin{align*} 
    &\Xi := \sqrt{\overline{\Xi}+\zeta}\Big(\frac{\omega(\calK_1)}{\sqrt{m}}+ \sqrt{\overline{\Xi}\log(\frac{1}{\overline{\Xi}})}+\sqrt{\zeta\log(\frac{1}{\zeta})}\Big)+ \frac{\varepsilon}{\varphi_2} \Big(\frac{\omega^2(\calK_1)}{m}+\overline{\Xi}\log(\frac{1}{\overline{\Xi}})+\zeta\log(\frac{1}{\zeta})\Big), \\ 
    &\text{where}~\quad \overline{\Xi}:=\frac{\mathscr{H}(\calX,\varepsilon)}{m}+ \frac{\varphi_5\varepsilon\omega(\calK_1)}{\sqrt{\zeta m}}+ \varphi_5\varepsilon\sqrt{\log(e/\zeta)}
\end{align*}
In light of (\ref{eq39}), we have 
\begin{align*}
    \sup_{\bp\in\calX}\sup_{\bq\in\calK_1} J_{\bp,\bq}&\le  \sup_{\bp\in\calX}\sup_{\bq\in\calK_1}J_{\hat{\bp},\bq}+2\sup_{\bx\in\calX}\rho(\bx)+O\Big(\varepsilon\varphi_4+\frac{\varphi_3}{\mu}\Xi\Big) \\
    &\lesssim \sup_{\bx\in\calX}\rho(\bx) +  \varphi_1\sqrt{\frac{\omega^2(\calK_1)+\mathscr{H}(\calX,\varepsilon)}{m}} + \varepsilon\varphi_4+ \frac{\varphi_3}{\mu}\Xi\\
    &\quad\blacktriangleright\textrm{by $\hat{\bp}\in \calN_{\varepsilon}$ and Equation (\ref{netbound})}
\end{align*} 
Substituting this into  (\ref{I2end}) yields the same bound for $I_2$, then combining with (\ref{I1bound}) and (\ref{2terms}) completes the proof.

\section{Proof of Theorem \ref{uraic_convex} (Uniform RAIC for a convex set)}  
The proof closely follows that of Theorem  \ref{uraic_cone}, except that some steps need adaptations since $\calK$ is not a cone. We omit  some of the details that are parallel to Theorem  \ref{uraic_cone}. For any $\bu\in\calK$ and $\bx\in\calX\subset\calK$, we start with the decomposition 
\begin{align}\nonumber
    &\frac{1}{\phi}\bigg\|\bu-\bx-\frac{1}{m}\sum_{i=1}^m\big(\ba_i^T\bu-\frac{f_i(\ba_i^T\bx)}{\mu}\big)a_i\bigg\|_{\calK_{\bx,\phi}^\circ}\\&\le \underbrace{\frac{1}{\phi}\bigg\|\bu-\bx -\frac{1}{m}\sum_{i=1}^m\ba_i\ba_i^T(\bu-\bx)\bigg\|_{\calK_{\bx,\phi}^\circ}}_{:=I_1(\bu,\bx)}+ \underbrace{\frac{1}{\phi}\bigg\|\frac{1}{m}\sum_{i=1}^m\tilde{f}_i(\ba_i^T\bx)\ba_i\bigg\|_{\calK^\circ_{\bx,\phi}}}_{:=I_2(\bx)}.\label{convexstarting}
\end{align}
We seek to bound $I_1(\bu,\bx)$ and $I_2(\bx)$ uniformly for all $(\bu,\bx)\in\calK\times \calX$. 
\subsection{Controlling $I_1(\mathbf{u},\mathbf{x})$}
For any $(\bu,\bx)\in\calK \times \calX$, 
\begin{align*}
    &I_1(\bu,\bx)=  \sup_{\bq\in \phi^{-1}\calK_{\bx,\phi}}\bigg|\frac{1}{m}\sum_{i=1}^m(\bu-\bx)^T\ba_i\ba_i^T\bq - (\bu-\bx)^T\bq\bigg|\le  I_1'\phi  +  I_1'  \|\bu-\bx\|_2, 
\end{align*}
where
\begin{gather*}
I_1':= \sup_{\bp,\bq\in \phi^{-1}\calK_{\calX,\phi}}\Big|\frac{1}{m}\sum_{i=1}^m \bp^T\ba_i\ba_i^T\bq-\bp^T\bq\Big|,
\\
    \calK_{\calX,\phi} = \bigcup_{\bx\in\calX}\calK_{\bx,\phi} = (\calK-\calX)\cap \phi\mathbb{B}_2^n.
\end{gather*}
This can be seen by noticing that $\bq\in \phi^{-1}\calK_{\calX,\phi}$ always holds and  discussing the following two cases: 
\begin{itemize}
    \item If $\|\bu-\bx\|_2\le \phi$, then we have $\bu-\bx\in \calK_{\calX,\phi}$ and therefore $\frac{\bu-\bx}{\phi}\in \phi^{-1}\calK_{\calX,\phi}$; 
    \item If $\|\bu-\bx\|_2\ge \phi$, then due to being $\calK-\calX$ star-shaped, \[\frac{\bu-\bx}{\|\bu-\bx\|_2}\in \frac{\calK-\calX}{\|\bu-\bx\|_2}\cap \mathbb{S}^{n-1}\subset \frac{\calK-\calX}{\phi}\cap \mathbb{S}^{n-1}\subset \phi^{-1}\calK_{\calX,\phi}.\] 
\end{itemize}
 Under $m\gtrsim \phi^{-2}\omega^2(\calK_{\calX,\phi})$, Lemma \ref{menproduct} yields \[I_1'\lesssim \frac{\omega(\phi^{-1}\calK_{\calX,\phi})}{\sqrt{m}}\] 
 with probability at least $1-2\exp(-c'\phi^{-2}\omega^2(\calK_{\calX,\phi}))$, and therefore with the same probability 
\begin{align}\label{I1boundconvex}
    I_1(\bu,\bx) \lesssim \frac{\omega(\phi^{-1}\calK_{\calX,\phi})}{\sqrt{m}}\|\bu-\bx\|_2 + \frac{\omega(\calK_{\calX,\phi})}{\sqrt{m}}\,,\quad \forall (\bu,\bx)\in\calK\times\calX . 
\end{align}

\subsection{Controlling $I_2(\mathbf{x})$} 
To bound 
\begin{align*}
    I_2 = \sup_{\bq\in \phi^{-1}\calK_{\bx,\phi}}\frac{1}{m}\sum_{i=1}^m\tilde{f}_i(\ba_i^T\bx)\ba_i^T\bq \le  \sup_{\bq\in \phi^{-1}\calK_{\calX,\phi}}\frac{1}{m}\sum_{i=1}^m\tilde{f}_i(\ba_i^T\bx)\ba_i^T\bq 
\end{align*}
for all $\bx\in\calX$, we use a covering argument parallel to the corresponding part in the proof of Theorem \ref{uraic_cone}, with the following two differences: first, due to the difference between Definitions \ref{raiccone} and \ref{raicconvex}, the range of $\bq$, which is $\bq\in \calK_1$ in the previous proof, should be replaced by $\bq\in\frac{\calK_{\calX,\phi}}{\phi}$; second, some steps that rely on the conic structure of $\calK$, such as (\ref{boundG11}) and (\ref{usecone3}),  
should be revised.

To start, we revisit the argument in (\ref{I2end}) to obtain 
\begin{align}
    \sup_{\bx\in\calX}I_2(\bx)\le \sup_{\bp\in\calX}\sup_{\bq\in\phi^{-1}\calK_{\calX,\phi}}J_{\bp,\bq}+\sup_{\bx\in\calX}\rho(\bx)\label{I2beginning}
\end{align} where
\[J_{\bp,\bq}:=\frac{1}{m}\sum_{i=1}^m \tilde{f}_i(\ba_i^T\bp)\ba_i^T\bq- \mathbb{E}[\tilde{f}_i(\ba_i^T\bp)\ba_i^T\bq].\] 
For some $\varepsilon\in(0,1)$, we let $\calN_\varepsilon$ be the smallest $\varepsilon$-net of $\calX$, and find $\hat{\bp}$ by (\ref{defihatp}) for each $\bp\in\calX$. We now revisit the argument in  (\ref{pointbound})--(\ref{netbound}) to establish 
\begin{align} 
   & \mathbb{P}\bigg(\sup_{\bp\in \calN_{\varepsilon}}\sup_{\bq\in\frac{\calK_{\calX,\phi}}{\phi}}J_{\bp,\bq}\lesssim \frac{\varphi_1(\sqrt{\mathscr{H}(\calX,\varepsilon)}+\omega(\phi^{-1}\calK_{\calX,\phi}))}{\sqrt{m}}\bigg)\label{netbound2} \ge 1-2\exp\Big(- \mathscr{H}(\calX,\varepsilon)- \omega^2(\frac{\omega(\calK_{\calX,\phi})}{\phi})\Big). 
\end{align}
Then, the argument in (\ref{eq39}) with $\bq\in \calK_1$ being replaced with $\bq\in \phi^{-1}\calK_{\bx,\phi}$ then yields 
\begin{align}
    \sup_{\bp\in\calX}\sup_{\bq\in \frac{\calK_{\calX,\phi}}{\phi}}J_{\bp,\bq} \le \sup_{\bp\in\calX}\sup_{\bq\in \frac{\calK_{\calX,\phi}}{\phi}}J_{\hat{\bp},\bq} +\underbrace{\sup_{\bp\in\calX} \sup_{\bq\in \frac{\calK_{\calX,\phi}}{\phi}} \frac{1}{m}\sum_{i=1}^m\big(\tilde{f}_i(\ba_i^T\bp)-\tilde{f}_i(\ba_i^T\hat{\bp})\big)\ba_i^T\bq}_{:=G} + 2\sup_{\bx\in\calX}\rho(\bx).\label{eq94}
\end{align}
 All that is left is to bound $G$.

We shall pause to make some preparations. 
For $\bx,\bx'\in\calX$ and small $\eta\in(0,\frac{\varphi_2}{2})$, we define the index sets $\calJ_{\bx,\eta}$ and $\calG_{\bx,\bx',\eta}$ as in (\ref{def:Jx}) and (\ref{def:Gx}), respectively, and further define $\bar{\calI}_{\bp,\hat{\bp},\eta}$ as in (\ref{def:Ip}). By the arguments in Equations (\ref{chernoff1})--(\ref{highprobeven}),  under the conditions (\ref{etasize}) and (\ref{missingcond}), the event (\ref{highprobeven}) holds with probability in (\ref{probterm}).

To bound  $G$,  we revisit the arguments in (\ref{outterm}) and (\ref{G1toG11G12}) to establish  
\begin{align} 
    G
    &\le (\varphi_4+\frac{\varphi_3}{\mu\varphi_2})\sup_{\bw\in \calX_\varepsilon}\sup_{\bq\in\frac{\calK_{\calX,\phi}}{\phi}}\max_{\substack{\calI\subset[m]\\|\calI|\le (2\varphi_5\eta+\zeta)m}}\frac{1}{m}\sum_{i\in\calI}|\ba_i^T\bw||\ba_i^T\bq|\nonumber\\
    \nonumber &\quad\blacktriangleright\textrm{corresponding to $G_{11}$ in (\ref{G1toG11G12})}
    \\ \nonumber
    &  + \sup_{\bq\in\frac{\calK_{\calX,\phi}}{\phi}}\max_{\substack{\calI\subset[m]\\|\calI|\le(2\varphi_5\eta+\zeta)m}} \frac{\varphi_3}{\mu}\frac{1}{m}\sum_{i\in\calI}|\ba_i^T\bq|
    \\
    \nonumber &\quad\blacktriangleright\textrm{corresponding to $G_{12}$ in (\ref{G1toG11G12})}
    \\\nonumber
    & +  \sup_{\bp\in\calX}\sup_{\bq\in\frac{\calK_{\calX,\phi}}{\phi}}\frac{1}{m}\sum_{i\in \bar{\calI}^c_{\bp,\hat{\bp},\eta}}\big(\tilde{f}_i(\ba_i^T\bp)-\tilde{f}_i(\ba_i^T\hat{\bp})\big)\ba_i^T\bq\\
    \nonumber &\quad\blacktriangleright\textrm{corresponding to $G_{2}$ in (\ref{outterm})}\\
    :&= G_{11}+G_{12}+G_2.  \label{Gtotreeterms}
\end{align}
We shall bound these terms separately.
To bound $G_{11}$, we revisit the argument in  Equation (\ref{boundG11}) to obtain 
\begin{align}\nonumber
    G_{11}\lesssim (\varphi_4+\frac{\varphi_3}{\mu\varphi_2})&\bigg(\frac{\omega(\calX_\varepsilon)}{\sqrt{m}}+\varepsilon \sqrt{\varphi_5\eta \log(\frac{1}{\varphi_5\eta})+\zeta\log(\frac{1}{\zeta})}\bigg)\\\label{102}
    \cdot&\bigg(\frac{\omega(\phi^{-1}\calK_{\calX,\phi})}{\sqrt{m}}+ \sqrt{\varphi_5\eta \log(\frac{1}{\varphi_5\eta})+\zeta\log(\frac{1}{\zeta})}\bigg)
\end{align}
with probability at least $1-4\exp(-c'\zeta m \log(e/\zeta))$.

Similarly, as with  (\ref{boundG12}), 
\begin{align}\label{103}
   G_{12}\lesssim  \frac{\varphi_3\sqrt{\varphi_5\eta +\zeta}}{\mu}\bigg(\frac{\omega(\phi^{-1}\calK_{\calX,\phi})}{\sqrt{m}}+\sqrt{\varphi_5\eta \log(\frac{1}{\varphi_5\eta})}+\sqrt{\zeta\log(\frac{1}{\zeta})}\bigg) 
\end{align}
with probability at least $1-2\exp(-c'\zeta m \log(e/\zeta))$.

To bound $G_2$, we revisit (\ref{usecone3}) without the final inequality therein that relies on $\calK$ being a cone, yielding
\begin{align}
    \mathbb{P}\bigg(G_2\lesssim \varphi_4\big(\frac{\omega(\phi^{-1}\calK_{\calX,\phi})}{\sqrt{m}}+1\big)\big(\frac{\omega(\calX_\varepsilon)}{\sqrt{m}}+\varepsilon\big)\,,~\forall \bp\in\calX\bigg)\ge 1-2\exp(-c'm).\label{104}
\end{align}
\subsection{Putting pieces together}
Suppose \[m\gtrsim \omega^2(\phi^{-1}\calK_{\calX,\phi})+\omega^2(\varepsilon^{-1}\calX_\varepsilon),\]
then (\ref{104}) implies $G_2\lesssim \varphi_4\varepsilon$ for all $\bp\in\calX$. We shall choose $\eta$ before proceeding. Recall that conditions (\ref{etasize}) and (\ref{missingcond}) are needed to support our analysis, so we set the minimal $\eta$ such that 
\begin{align}\label{chooseeta1}
    \varphi_5\eta \asymp \frac{\mathscr{H}(\calX,\varepsilon)}{m} + \frac{\varphi_5\omega(\calX_\varepsilon)}{\sqrt{\zeta m}} + \varphi_5 \varepsilon\sqrt{\log(e/\zeta)}.
\end{align}
We suppose  $\zeta m \gtrsim \varphi_5^2\omega^2(\calX_\varepsilon)$ and $\varphi_5\varepsilon\sqrt{\log(e/\zeta)}$ is small enough to guarantee small enough $\varphi_5\eta$.
Substituting    the bounds in   (\ref{102}),   (\ref{103}) and (\ref{104}) into Equations (\ref{Gtotreeterms}), along with using $\overline{\Upsilon}$ to replace the choice of $\varphi_5\eta$ in (\ref{chooseeta1}), establishes 
\begin{align}\label{finalcobound}
    G\lesssim\varphi_4\varepsilon+ \frac{\varphi_3}{\mu}\Upsilon,
\end{align}
where, as per Equations (\ref{club_start})--(\ref{club_end}), 
\begin{align*} 
&\Upsilon: =\frac{\varepsilon}{\varphi_2}\bigg(\frac{\omega(\varepsilon^{-1}\calX_\varepsilon)}{\sqrt{m}}+\sqrt{\overline{\Upsilon}\log(\frac{1}{\overline{\Upsilon}})+\zeta\log(\frac{1}{\zeta})}\bigg)\bigg(\frac{\omega(\phi^{-1}\calK_{\calX,\phi})}{\sqrt{m}}+\sqrt{\overline{\Upsilon}\log(\frac{1}{\overline{\Upsilon}})+\zeta\log(\frac{1}{\zeta})}\bigg)\\
    &+ \sqrt{\overline{\Upsilon}+\zeta}\bigg(\frac{\omega(\phi^{-1}\calK_{\calX,\phi})}{\sqrt{m}}+\sqrt{\overline{\Upsilon}\log(\frac{1}{\overline{\Upsilon}})+\zeta\log(\frac{1}{\zeta})}\bigg) \\
    &\textrm{where}~~ \overline{\Upsilon}:=\frac{\mathscr{H}(\calX,\varepsilon)}{m}+ \frac{\varphi_5\omega(\calX_\varepsilon)}{\sqrt{\zeta m}}+\varphi_5 \varepsilon\sqrt{\log(e/\zeta)}. 
\end{align*}
Substituting (\ref{finalcobound}) into (\ref{eq94}) and using (\ref{netbound2}) to bound $\sup_{\bp\in\calX}\sup_{\bq\in\phi^{-1}\calK_{\calX,\phi}}J_{\hat{\bp},\bq}$ yields a bound on $\sup_{\bp\in\calX}\sup_{\bq\in\phi^{-1}\calK_{\calX,\phi}}J_{\bp,\bq}.$ 
Further combining with (\ref{I2beginning}), we obtain 
\begin{align*}
    \sup_{\bx\in\calX}I_2(\bx)\lesssim \frac{\varphi_1(\sqrt{\mathscr{H}(\calX,\varepsilon)}+\omega(\phi^{-1}\calK_{\calX,\phi}))}{\sqrt{m}}+ \varphi_4\varepsilon + \frac{\varphi_3}{\mu}\Upsilon + \sup_{\bx\in\calX}\rho(\bx).
\end{align*}
Combining with (\ref{I1boundconvex}) and (\ref{convexstarting}) finishes the proof.  

\section{Proof of Theorem \ref{thm:modulo} (Uniform sparse recovery from modulo measurements)} 
   Note that Assumption \ref{assump:sg}  holds trivially. In light of 
   \begin{align} 
       m_{\lambda}(v)= \sum_{k\in \mathbb{Z}}(v-2\lambda k)\mathbbm{1}\big(\lambda(2k-1)\le v<\lambda(2k+1)\big)\label{mlambdav2}
       & = v - 2\lambda \sum_{k\in\mathbb{Z}}k\cdot \mathbbm{1}\big(\lambda(2k-1)\le v<\lambda(2k+1)\big),
   \end{align}
we have\begin{align*}
       \mathbb{E}\Big[gm_{\lambda}(g)\Big]&= \mathbb{E}\Big[ g^2-2\lambda \sum_{k\in \mathbb{Z}}kg \mathbbm{1}\big(\lambda(2k-1)\le g< \lambda(2k+1)\big)\Big] \\
       &= 1- 2\lambda \sum_{k\in \mathbb{Z}}\mathbb{E}\Big[kg\mathbbm{1}\big(\lambda(2k-1)\le g<\lambda(2k+1)\big)\Big] = 1.
   \end{align*}
   Thus,  Assumption \ref{assumption:nonzero} holds and the choice  in (\ref{valuemu}) is $\mu=1$.

   We now validate Assumption \ref{assump:fi}. By $|m_{\lambda}(v)|\le |v|$ and $\tilde{f}_i = \Id - m_{\lambda}$, for $g\sim N(0,1)$, 
   \begin{align*}
       \|\tilde{f}_i(g)\|_{\psi_2}\le \|g\|_{\psi_2} + \|m_{\lambda}(g)\|_{\psi_2}\le 2\||g|\|_{\psi_2} = O(1).
   \end{align*}
   Thus, (C1) of Assumption \ref{assump:fi} holds for $\varphi_1=O(1)$. Since the points of discontinuity of $m_{\lambda}$ lie in $\{\lambda(2k-1):k\in \mathbb{Z}\}$, so $\varphi_2$ in (C2) of Assumption \ref{assump:fi} satisfies $\varphi_2\ge 2\lambda \ge \frac{1}{2}.$ Moreover, (C3) of Assumption \ref{assump:fi} holds with $\varphi_3=2\lambda$, and in light of  (\ref{mlambdav2}), (C4) of Assumption \ref{assump:fi}
 holds with $\varphi_4=0$. By $\ba_i^T\bx\sim N(0,1)$ and $\calD_{f_i}=\{(2k-1)\lambda:k\in\mathbb{Z}\}$, for $g\sim N(0,1)$ and any $t\in (0,\frac{\lambda}{2})$,
 \begin{align*}
     \mathbb{P}\big(\dist(\ba_i^T\bx,\calD_{f_i})\le t\big)  & =  \sum_{k\in\mathbb{Z}}\mathbb{P}\Big(g\in [(2k-1)\lambda-t,(2k-1)\lambda+t]\Big)\\
     & = 2 \sum_{k=1}^\infty \int_{(2k-1)\lambda-t}^{(2k-1)\lambda+t}\frac{1}{\sqrt{2\pi}}\exp(-\frac{t^2}{2})\,dt \\
     &\le \frac{4t}{\sqrt{2\pi}}\sum_{k=1}^\infty \exp\Big(-\frac{(2k-3/2)^2\lambda^2}{2}\Big)\\
     &\le \frac{4t}{\sqrt{2\pi}}\sum_{k=1}^\infty \exp\Big(-\frac{k\lambda^2}{8}\Big) = \sqrt{\frac{8}{\pi}}\frac{\exp(-\lambda^2/8)t}{1-\exp(-\lambda^2/8)} 
 \end{align*}
 By $\lambda\ge \frac{1}{4}$, (C5) in Assumption \ref{assump:fi} holds with 
$
     \varphi_5 = O(\exp(-\lambda^2/8)).$
 It remains to apply Theorems \ref{thm:recoverycone} and \ref{thm:recoveryconvex} to the two settings (\ref{sparseconeK}) and (\ref{sparseconvexK}), separately. Before proceeding, we note the following: for $\calX = \Sigma^{n,*}_k$ or $ \Sigma^{n,*}_k\cap \{\bx:\|\bx\|_1=c_*\sqrt{k}\}$ and any $\varepsilon>0$,
 \begin{gather} \label{sparsecover}
     \mathscr{H}(\calX,\varepsilon)\le k\log(\frac{Cn}{\varepsilon k}),\\ \label{inversebound}
     \omega(\varepsilon^{-1}\calX_{\varepsilon})\lesssim \sqrt{k\log(\frac{en}{k})}\,;
 \end{gather}
 for $\calK=\Sigma^{n}_k$,
 \begin{gather} \label{K1gw}
     \omega(\calK_1) \lesssim \sqrt{k\log(\frac{en}{k})};
 \end{gather}
 for $\calX = \Sigma^{n,*}_k\cap \{\bx:\|\bx\|_1=c_*\sqrt{k}\}$ and $\calK=\mathbb{B}_1^n(c_*\sqrt{k})$, 
 \begin{gather} \label{descentconebound1}
     \omega(\phi^{-1}\calK_{\calX,\phi})\le \omega(\cone(\calK_{\calX})\cap \mathbb{B}_2)\lesssim\sqrt{k\log(\frac{en}{k})}\,,\quad\forall \phi>0. 
 \end{gather}
See \cite{plan2012robust,plan2013one,chandrasekaran2012convex} for instance.

\subsection{Applying Theorem \ref{thm:recoverycone} to (\ref{sparseconeK})} By Theorem \ref{thm:recoverycone}, (\ref{sparsecover}) and (\ref{K1gw}), for any small $\varepsilon= \zeta$ such that   
$
    m \gtrsim k \log\big(\frac{en}{\varepsilon k}\big),$
then with   probability at least $1-\exp(-c'k\log(\frac{en}{k}))$, it holds for all $\bx\in \calX$ and any $t\ge 0$ that  
\begin{align*}
    \|\bx_t-\bx\|_2 \le \Big(\frac{Ck\log(en/k)}{m}\Big)^{t/2}\|\bx_0-\bx\|_2 +C_1 \sqrt{\frac{k\log(n/\varepsilon k)}{m}} + C_2\lambda \Xi,
\end{align*}
where 
\begin{align*}
    &\Xi = \overline{\Xi}\log^{1/2}(\frac{1}{ \overline{\Xi}}) + \varepsilon\log^{1/2}(\frac{1}{\varepsilon}) + \frac{\varepsilon}{\lambda}\big(\overline{\Xi}\log(\frac{1}{\overline{\Xi}})+\varepsilon\log(\frac{1}{\varepsilon})\big)\\
    &\textrm{where}~~\overline{\Xi}: = \frac{k\log(n/k\varepsilon)}{m} +\exp(-\frac{\lambda^2}{8})\left(\sqrt{\frac{\varepsilon k\log(en/k)}{m}}+\varepsilon\sqrt{\log(1/\varepsilon)}\right).
\end{align*}
Setting $\varepsilon= (\frac{k\log(en/k)}{m})^{10}$ yields the claim. 

\subsection{Applying Theorem \ref{thm:recoveryconvex} to (\ref{sparseconvexK})} By Theorem \ref{thm:recoveryconvex} with sufficiently small $\varepsilon=\zeta$ and $\phi$,  along with (\ref{sparsecover}), (\ref{inversebound}) and (\ref{descentconebound1}), we reach the following: if $m\gtrsim k \log(\frac{en}{k})$,
then with probability at least $1-\exp(-c'k\log\frac{en}{k})$, it holds for all $\bx\in\calX$ and any $t\ge 0$ that 
\begin{align*}
    \|\bx_t-\bx\|_2 \le \Big(\frac{Ck\log(en/k)}{m}\Big)^{t/2}\|\bx_0-\bx\|_2 + C_1\sqrt{\frac{k\log(\frac{n}{k\varepsilon})}{m}} + C_2\phi + C_3\lambda\Upsilon,
\end{align*}
where 
\begin{align*}
    &\Upsilon = \overline{\Upsilon}\log^{1/2}\big(\frac{1}{\overline{\Upsilon}}\big) +\zeta\log^{1/2}\big(\frac{1}{\zeta}\big)+\frac{\varepsilon}{\lambda}\bigg(\overline{\Upsilon}\log \big(\frac{1}{\overline{\Upsilon}}\big) +\zeta\log \big(\frac{1}{\zeta}\big)\bigg), \\
    &\textrm{where~}~\overline{\Upsilon}= \frac{k\log(\frac{n}{k\varepsilon})}{m} + \exp\big(-\frac{\lambda^2}{8}\big)\bigg[\sqrt{\frac{\varepsilon k\log(en/k)}{m}} + \varepsilon\log^{1/2}(\frac{1}{\varepsilon})\bigg].
\end{align*}
As it turns out, the uniform recovery error rate here is identical  to the one for the setting of (\ref{sparseconeK}). In particular, we set $\phi=\varepsilon=(\frac{k\log(en/k)}{m})^{10}
$ yields the claim. 
\section{Proof of Theorem \ref{thm:sharp1b} (Uniform 1-bit compressed sensing with no loss of log factor)}\label{thm:nolog}

In this section, we establish a sharp uniform recovery error rate for sparse recovery from 1-bit measurements, which does not lose a log factor compared to the nonuniform error rates \cite{plan2016generalized,oymak2017fast}.

Under $\mu = \mathbb{E}_{g\sim N(0,1)}[g\sign(g)]= \sqrt{\frac{2}{\pi}}$ and
 $\bh_{\bx}(\bu) = \frac{1}{m}\sum_{i=1}^m(\sqrt{\frac{2}{\pi}}\ba_i^T\bu-y_i)\sqrt{\frac{2}{\pi}}\ba_i$  from (\ref{hxsim}), the procedure of (\ref{pgd1bcs}) is simply  PGD with $\calK=\Sigma^n_k$, gradient $\bh_{\bx}(\bu)$, and step size $\eta = \mu^{-2} = \frac{\pi}{2}$. 
We now claim that, it suffices to prove the following uniform RAIC
\begin{align}\nonumber
    &\bh_{\bx}(\bu)\sim {\rm RAIC}\bigg(\Sigma^n_k;\Sigma^n_k,C_1\sqrt{\frac{k\log(en/k)}{m}}\|\bu-\bx\|_2+C_2\sqrt{\frac{k\log(en/k)}{m}},\frac{\pi}{2}\bigg),~~\forall \bx\in\Sigma^{n,*}_k
\end{align}
 i.e., 
 \begin{align}\nonumber
  &\bigg\|\bu-\bx-\frac{1}{m}\sum_{i=1}^m\big(\ba_i^T\bu- \sqrt{\frac{\pi}{2}}\sign(\ba_i^T\bx)\big)\ba_i\bigg\|_{(\Sigma^{n,*}_{2k})^\circ}\\&\qquad \lesssim \sqrt{\frac{k\log(en/k)}{m}}\|\bu-\bx\|_2+\sqrt{\frac{k\log(en/k)}{m}},\quad \forall (\bu,\bx)\in\Sigma^n_k\times \Sigma^{n,*}_k, \label{1bcsraicsharp}
 \end{align}
 Again, we emphasize that Theorem \ref{uraic_cone} only yields a uniform RAIC with approximation error \[C'\sqrt{\frac{k\log(en/k)}{m}}\|\bu-\bx\|_2+\tilde{O}\bigg(\sqrt{\frac{k\log(en/k)}{m}}\bigg),\] exhibiting extra log factors compared to the desired (\ref{1bcsraicsharp}).

 To start, we revisit the decomposition in (\ref{2terms}) and the arguments in Appendix \ref{section:boundI1x} and find that it remains to establish  
\begin{align*}
\sup_{\bx\in\Sigma^{n,*}_k}\sup_{\bq\in\Sigma^{n,*}_{2k}}\,\frac{1}{m}\sum_{i=1}^m \bigg(\sqrt{\frac{\pi}{2}}\sign(\ba_i^T\bx)-\ba_i^T\bx\bigg)\ba_i^T\bq \lesssim \sqrt{\frac{k\log(en/k)}{m}} 
\end{align*}
with the promised probability under $m\gtrsim k\log(en/k)$.  To this end, the first step is to decouple the process into two centered processes as follows 
\begin{equation*}
 \underbrace{\sup_{\bx\in\Sigma^{n,*}_k}\sup_{\bq\in\Sigma^{n,*}_{2k}}\frac{1}{m}\sum_{i=1}^m \bigg(\bx^T\bq-(\ba_i^T\bx)(\ba_i^T\bq) \bigg)}_{\coloneqq I_1} + \underbrace{\sup_{\bx\in\Sigma^{n,*}_k}\sup_{\bq\in\Sigma^{n,*}_{2k}}\frac{1}{m}\sum_{i=1}^m \bigg(\sqrt{\frac{\pi}{2}}\sign(\ba_i^T\bx)\ba_i^T\bq -\bx^T\bq\bigg)}_{\coloneqq I_2}.
\end{equation*}
We control each process separately as they require different techniques.

\subsection{Controlling $I_1$} \label{multipliersec} 
 
By Lemma \ref{menproduct}, we conclude that for any $m\ge w^2(\Sigma^{n,*}_k)$, for $u=w(\Sigma^{n,*}_k)$, we have that with probability at least $1-2^{-w^2(\Sigma^{n,*}_k)}$
\begin{equation*}
\sup_{\bx\in \Sigma^{n,*}_k}\sup_{\bq \in \Sigma^{n,*}_{2k}}\frac{1}{m}\sum_{i=1}^m \bigg(\bx^T\bq-(\ba_i^T\bx)(\ba_i^T\bq) \bigg)\lesssim \sqrt{\frac{w^2(\Sigma^{n,*}_k)}{m}}.
\end{equation*}
Recalling the estimate $w^2(\Sigma^{n,*}_k) \asymp k\log(en/k)$, we conclude the estimate for $I_1$.

\subsection{Controlling $I_2$}
We would like to proceed similarly to control the process $I_2$. However, the sign function is not a class of function with subgaussian increments. To bypass the lack of regularity of the sign function, we rely on a covering argument exploiting the metric entropy estimates for Boolean classes of functions that exploit the VC dimension of the class. We refer the reader to \cite{vershynin2018high} for a background on VC dimension, but it will not be strictly necessary here as the estimates based on VC-dimension used here are well-known facts in the literature.

To simplify the notation, we shall embed  $\bx\in \Sigma^{n,*}_{k}\subset \Sigma^{n,*}_{2k}$ and seek to bound \[\sup_{\bx,\bq\in\Sigma^{n,*}_{2k}}\bigg|\sqrt{\frac{\pi}{2}}\frac{1}{m}\sum_{i=1}^m \sign(\ba_i^T\bx)\ba_i^T\bq-\bx^T\bq\bigg|.\]  

To start, we reduce the control of $I_2$ to its symmetrized version. More accurately, we apply the Gine-Zinn symmetrization for tail estimates (see, e.g., \cite[Theorem 1.14]{mendelson2016upper}) to obtain that for any $u\ge 1$
\begin{equation}
\label{eq:symmetrization_step}
\begin{split}
&\mathbb{P}\left(\sup_{\bx,\bq\in \Sigma^{n,*}_{2k}} \bigg|\sqrt{\frac{\pi}{2}}\frac{1}{m}\sum_{i=1}^m \sign(\ba_i^T\bx)\ba_i^T\bq-\bx^T\bq\bigg|\ge u\right)\\
&\le 4 \mathbb{P}\left(\sup_{\bx,\bq\in \Sigma^{n,*}_{2k}} \bigg|\frac{1}{m}\sum_{i=1}^m \varepsilon_i \sign(\ba_i^T\bx)\ba_i^T\bq \bigg|\ge \frac{u}{4}\sqrt{\frac{2}{\pi}}\right),
\end{split}
\end{equation}
where $\varepsilon_1,\cdots,\varepsilon_m$ are random signs independent from $\ba_1,\cdots,\ba_m$. Next, we focus on the process $Z_{\bx,\bq}$ given by
\begin{equation*}
 Z_{\bx,\bq}\coloneqq\frac{1}{\sqrt{m}}\sum_{i=1}^m \varepsilon_i \sign(\ba_i^T\bx)\ba_i^T\bq, \quad (\bx,\bq)\in \Sigma^{n,*}_{2k}\times \Sigma^{n,*}_{2k} .
\end{equation*}
We require two technical lemmas that brings together some technical results in the literature. To state them accurately, let $\mathcal{N}(T,d,\varepsilon)$ be covering number of $T$ with respect to $d$ at the scale of $\varepsilon$, that is, the smallest number of balls of radius $\varepsilon$ with respect to the metric $d$ whose centers lie in $T$. 

\begin{lemma}
\label{lemmma:covering}
Let $\mathcal{X}=\{\by_1,\cdots, \by_m\}$ be any collection of points in $\mathbb{R}^n$ and set $\|\cdot\|_{\mathcal{X}}$ to be the following metric in $\mathbb{R}^n$
\begin{equation*}
\|\bx-\bx'\|_{\mathcal{X}}\coloneqq  \left(\frac{1}{m}\sum_{i=1}^m\big|\sign(\by_i^T\bx)-\sign(\by_i^T\bx')\big|^2\right)^{1/2}, \quad  \bx,\bx'\in \mathbb{R}^n.
\end{equation*}
Then, for any $\varepsilon \in (0,2)$, 
\begin{equation*}
\log \mathcal{N}(\Sigma^{n,*}_{2k},\|\cdot\|_{\mathcal{X}},\varepsilon)\lesssim k\log\bigg(\frac{en}{k}\bigg)\log\bigg(\frac{2}{\varepsilon}\bigg).
\end{equation*}
\end{lemma}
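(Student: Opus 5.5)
The plan is to reduce the covering problem for the metric $\|\cdot\|_{\mathcal{X}}$ to a covering problem for a Boolean class under an empirical $L_2$ metric, and then to invoke the classical VC-dimension entropy bound (Haussler/Dudley; see \cite{vershynin2018high}). The key observation is that $|\sign(\by_i^T\bx)-\sign(\by_i^T\bx')|^2\in\{0,4\}$. Therefore
\[
\|\bx-\bx'\|_{\mathcal{X}}=2\Big(\mu_m\big(\{\by:\sign(\by^T\bx)\ne\sign(\by^T\bx')\}\big)\Big)^{1/2}=2\|f_{\bx}-f_{\bx'}\|_{L_2(\mu_m)},
\]
where $\mu_m=\frac1m\sum_i\delta_{\by_i}$ is the empirical measure and $f_{\bx}(\by)=\mathbbm{1}(\by^T\bx\ge 0)$. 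Here we use the convention $\sign(0)=1$; any fixed convention works. Covering $\Sigma^{n,*}_{2k}$ at scale $\varepsilon$ in $\|\cdot\|_{\mathcal{X}}$ is thus the same as covering the class $\mathcal{F}=\{f_{\bx}:\bx\in\Sigma^{n,*}_{2k}\}$ at scale $\varepsilon/2$ in $L_2(\mu_m)$. One caveat: the lemma asks for centers lying in $T$. Since the centers produced below are themselves functions $f_{\bx}$ with $\bx\in\Sigma^{n,*}_{2k}$, they come from points of $T$, and in any case passing between external and internal covering costs only a factor $2$ in the radius.

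The class is decomposed by support. For each support set $S\subset[n]$ with $|S|=2k$, the subclass $\mathcal{F}_S=\{f_{\bx}:\mathrm{supp}(\bx)\subset S\}$ consists of indicators of homogeneous halfspaces in the $2k$-dimensional coordinate subspace, i.e. $\by\mapsto\mathbbm{1}(\by_S^T\bx_S\ge0)$. Its VC dimension is at most $2k$ (halfspaces through the origin in $\mathbb{R}^{2k}$). The VC entropy bound, valid uniformly over all probability measures and in particular over $\mu_m$, then gives
\[
\log\mathcal{N}(\mathcal{F}_S,L_2(\mu_m),\delta)\lesssim k\log(2/\delta),\qquad \delta\in(0,1).
\]
Since $\mathcal{F}=\bigcup_S\mathcal{F}_S$ and there are $\binom{n}{2k}\le (en/2k)^{2k}$ supports, a union of the nets yields
\[
\log\mathcal{N}(\mathcal{F},L_2(\mu_m),\delta)\lesssim k\log(en/k)+k\log(2/\delta).
\]

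It remains to set $\delta=\varepsilon/2$ and bound the sum $k\log(en/k)+k\log(4/\varepsilon)$ by $k\log(en/k)\log(2/\varepsilon)$ up to constants. This works because $\log(en/k)\ge1$ and $\log(2/\varepsilon)\gtrsim 1$ away from $\varepsilon$ near $2$; for $\varepsilon\in[1,2)$ the bound needs slight care, but there the diameter of $(T,\|\cdot\|_{\mathcal{X}})$ is at most $2$, and a net at scale $1$ already suffices once constants are absorbed. I would also use $a+b\le 2ab$ for $a,b\ge1$.

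The main obstacle is the VC entropy estimate itself, specifically its uniformity over measures so that it applies to the arbitrary empirical measure $\mu_m$. I would cite it rather than reprove it: it is Haussler's packing bound, or equivalently Dudley's $L_2$ version as stated in \cite{vershynin2018high}. Beyond that, the work is only bookkeeping of the sign convention at $0$ and of the range of $\varepsilon$.
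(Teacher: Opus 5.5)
Your proof is correct, and at the one substantive step it takes a different route from the paper. The paper makes the same reduction to the Boolean class $\mathcal{F}$ under the empirical $L^2$ metric. It then applies the VC entropy bound once to all of $\mathcal{F}$, importing the nontrivial fact $vc(\mathcal{F})\lesssim k\log(en/k)$ for sparse homogeneous halfspaces from the literature. That gives the multiplicative form $vc(\mathcal{F})\log(1/\varepsilon)$ in one line.

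You instead split by support and use only the textbook fact that homogeneous halfspaces in $\mathbb{R}^{2k}$ have VC dimension at most $2k$. You then pay for the $\binom{n}{2k}$ supports with a union of nets. This is more self-contained, since it replaces the cited VC bound (usually itself proved by this kind of support counting) with elementary facts. It also gives the sharper additive estimate $k\log(en/k)+k\log(4/\varepsilon)$, in which the sparsity cost is not multiplied by $\log(1/\varepsilon)$. In the later Dudley integral both forms give the same $\sqrt{k\log(en/k)}$.

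One correction to your endgame: near $\varepsilon=2$ you cannot absorb constants. There $\log(2/\varepsilon)\to 0$, while $\log\mathcal{N}$ can stay at $\log 2$. For example, with your convention $\sign(0)=1$, take $n=2$, $k=1$, $m=1$, $\by_1=e_1$. The circle then splits into two clusters at $\|\cdot\|_{\mathcal{X}}$-distance $2$, so $\mathcal{N}=2$ for every $\varepsilon<2$.

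This is a defect of the statement, not of your method. The paper's own proof applies the VC bound at scale $\varepsilon/2$ and so really only yields $\log(4/\varepsilon)$, and the downstream Dudley bound is unaffected by this change. Your additive estimate implies the corrected bound $\log\mathcal{N}\lesssim k\log(en/k)\log(4/\varepsilon)$ on all of $(0,2)$, because $\log(4/\varepsilon)\ge\log 2$ there.
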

\begin{proof}
We first observe that 
\begin{equation*}
|\sign(\by^T\bx)-\sign(\by^T\bx')| = 2 |\mathbbm{1}(\langle \by,\bx\rangle\ge 0)-\mathbbm{1}(\langle \by,\bx'\rangle\ge 0)|
\end{equation*}
Consequently, up to a factor of $2$, it suffices to estimate the metric entropy of the boolean class of function $\mathcal{F}=\{\mathbbm{1}(\langle \cdot,\bx\rangle \ge 0):\bx\in \Sigma^{n,*}_{2k} \}$ with respect to the $L^2$ metric endowed by the empirical measure of $\mathcal{X}$, that is,
\begin{equation*}
\|f-f'\|_{L^2(\mathcal{X})}\coloneqq  \left(\frac{1}{m}\sum_{i=1}^m\big(f-f')(y_i)\big)^2\right)^{1/2}, \quad  f,f'\in \mathcal{F}.
\end{equation*}
By the metric entropy estimate for VC class (see for  \cite[Theorem 8.3.13]{vershynin2018high}), we obtain that  
\begin{equation*}
\log \mathcal{N}(\Sigma^{n,*}_{2k},\|\cdot\|_{\mathcal{X}},\varepsilon)= \log \mathcal{N}(\mathcal{F},\|\cdot\|_{L^2(\mathcal{X})},\varepsilon/2) \lesssim vc(\mathcal{F})\log\bigg(\frac{2}{\varepsilon}\bigg),
\end{equation*}
where $vc(\mathcal{F})$ is the VC-dimension of the class $\mathcal{F}$. The proof now follows by combining the estimate above with $vc(\mathcal{F}) \lesssim k\log(en/k)$ (see \cite[Corollary 2]{depersin2024robust}). 
\end{proof}
\begin{lemma}
\label{lemma:conditioned_psi2}
Let $\ba\sim N(0,\bI_n)$ be a standard gaussian random vector in $\mathbb{R}^n$. Given $\bx,\bx'\in S^{n-1}$ such that $\bx\neq \pm \bx'$, set $\|\cdot\|_{\psi_2}$ be the $\psi_2$ norm conditioned on $\sign\langle \ba,\bx\rangle$ and $\sign \langle \ba,\bx'\rangle$. Then for any vector $\bv \in \mathbb{R}^n$, it the following holds that
\begin{equation*}
    \|\ba^T\bv\|_{\psi_2}\lesssim \|\bv\|_2.
\end{equation*}
\end{lemma}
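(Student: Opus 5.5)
The plan is to reduce to a two-dimensional computation using rotational invariance, and then handle the conditioning event by event. Let $E=\mathrm{span}\{\bx,\bx'\}$, which is two-dimensional because $\bx\neq\pm\bx'$. Write $\bv=\bv_E+\bv_\perp$ with $\bv_E\in E$ and $\bv_\perp\perp E$. Since $\ba$ is standard Gaussian, $\ba^T\bv_\perp$ is independent of $(\ba^T\bx,\ba^T\bx')$, and hence independent of the two signs. Conditionally on the signs, $\ba^T\bv_\perp$ is therefore still $N(0,\|\bv_\perp\|_2^2)$, with $\psi_2$ norm $\lesssim\|\bv_\perp\|_2$. By the triangle inequality for $\|\cdot\|_{\psi_2}$, it remains to show $\|\ba^T\bv_E\|_{\psi_2}\lesssim\|\bv_E\|_2$ conditionally. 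Set $\bg=P_E\ba$, a standard Gaussian vector in $E\cong\mathbb{R}^2$. Then $\ba^T\bv_E=\langle\bg,\bv_E\rangle$, and the two signs are functions of $\bg$ only.

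Next I would treat each of the four conditioning events $\{\sign(\ba^T\bx)=s,\ \sign(\ba^T\bx')=s'\}$, $s,s'\in\{\pm1\}$. Each such event is a cone $W\subset E$, namely an intersection of two half-planes through the origin, with opening angle $\theta$ or $\pi-\theta$, where $\theta\in(0,\pi)$ is the angle between $\bx$ and $\bx'$. Conditionally on this event, $\bg$ has density $\frac{1}{P(W)}\frac{1}{2\pi}e^{-\|\bg\|_2^2/2}\mathbbm{1}_W$. Its norm $\|\bg\|_2$ is independent of its angle and is Rayleigh distributed, because the Gaussian is rotationally invariant and $W$ is a sector. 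Hence
\[
|\langle\bg,\bv_E\rangle|\le\|\bg\|_2\|\bv_E\|_2 ,
\]
and under the conditional law $\|\bg\|_2$ is still Rayleigh, with $\psi_2$ norm bounded by an absolute constant. This gives $\|\langle\bg,\bv_E\rangle\|_{\psi_2}\lesssim\|\bv_E\|_2$ uniformly over the event, with no dependence on the angle $\theta$.

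The point I expect to be the main obstacle is making sure the constant does not blow up when the sector $W$ is very thin, which happens when $\theta\to0$ or $\theta\to\pi$, because then $P(W)\to0$. A naive bound of the form $\mathbb{E}[\cdot\mid W]\le\mathbb{E}[\cdot]/P(W)$ would fail here. The sector/polar-coordinates observation avoids this: the radial part of a Gaussian restricted to any sector is exactly Rayleigh, whatever the angular width. So I would state that observation explicitly, via the polar change of variables in which the density factorizes as $r e^{-r^2/2}\,dr\,d\varphi$. I would also note that the conditional $\psi_2$ norm is interpreted as the maximum over the four (or, for the relevant sign patterns, fewer) atoms of the conditioning $\sigma$-algebra, each atom having positive probability since $\bx\neq\pm\bx'$. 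Combining the two parts gives
\[
\|\ba^T\bv\|_{\psi_2}\lesssim\|\bv_E\|_2+\|\bv_\perp\|_2\le\sqrt2\,\|\bv\|_2 .
\]
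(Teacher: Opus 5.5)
Your proof is correct, and it takes a genuinely different route from the paper's. The paper also splits off the component orthogonal to $\mathrm{span}\{\bx,\bx'\}$. Within that plane, however, it uses the orthonormal basis $\bm{\beta}_1=(\bx-\bx')/\|\bx-\bx'\|_2$, $\bm{\beta}_2=(\bx+\bx')/\|\bx+\bx'\|_2$ and bounds $\ba^T\bm{\beta}_1$ and $\ba^T\bm{\beta}_2$ separately. Each is split according to whether the two signs agree or disagree, and two external results are imported (Lemmas B.1 and B.2 of \cite{matsumoto2024binary}). The first gives a conditional tail bound for $\ba^T\bm{\beta}_1$ around the mean $\sqrt{\pi/2}\,\|\bx-\bx'\|_2/\arccos(\bx^T\bx')$, which is $\lesssim 1$. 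The second handles the agreement event after the substitution $\by=-\bx'$.

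You replace all of this with one structural observation: every atom of the conditioning is a sector of the plane, and a planar standard Gaussian restricted to a sector has exactly Rayleigh radius, whatever the sector's width. This buys several things:
\begin{itemize}
\item uniformity as $\theta\to0$ or $\theta\to\pi$ becomes transparent, which is exactly where a $1/P(W)$ bound would fail, as you point out;
\item no case split and no cited lemma are needed, and the constants are explicit;
\item the hypothesis $\|\bx\|_2=\|\bx'\|_2$ is never used, whereas the paper needs it to get $\bm{\beta}_1\perp\bm{\beta}_2$;
\item the argument generalizes at once: conditioning on the signs of $r$ linear forms, or on any event depending only on the direction of the projection of $\ba$ onto an $r$-dimensional subspace, keeps $\ba^T\bv$ sub-Gaussian with constant $\lesssim\sqrt{r}$, since the radius is then chi-distributed.
\end{itemize}

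What the paper's route carries, and your bound $|\langle\bg,\bv_E\rangle|\le\|\bg\|_2\|\bv_E\|_2$ deliberately discards, is finer information about the conditional law, namely where $\ba^T\bm{\beta}_1$ concentrates given sign disagreement. That information is not needed for this lemma.

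One caveat: your argument depends on the atoms being cones through the origin. It therefore does not carry over as is to the shifted thresholds $\sign(\ba^T\bx-w)$ with $w\neq0$ mentioned in the paper's remark on possible extensions. There the atoms are wedges with apex away from the origin, and the radius is no longer independent of the angle.
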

\begin{proof} 
By homogeneity of the $\psi_2$ norm, it suffices to prove the result when $v$ is a unit norm vector. Next, we decompose the vector $\bv$ along the directions $\bm{\beta}_1\coloneqq (\bx-\bx')/\|\bx-\bx'\|_2$ and $\bm{\beta}_2\coloneqq (\bx+\bx')/\|\bx+\bx'\|_2$, 
\begin{equation*}
\bv = (\bv^T\bm{\beta}_1)\bm{\beta}_1 + (\bv^T\bm{\beta}_2)\bm{\beta}_2 + \bv^{\perp},
\end{equation*}
where $\bv^{\perp}$ is orthogonal to both $\bm{\beta}_1$ and $\bm{\beta}_2$. By triangle inequality and Cauchy-Schwarz,
\begin{equation*}
\|\ba^T\bv\|_{\psi_2}\le \|\ba^T\bm{\beta}_1\|_{\psi_2} + \|\ba^T\bm{\beta}_2\|_{\psi_2} + \|\ba^T\bv^{\perp}\|_{\psi_2}.
\end{equation*}
Since $\ba\sim N(0,\bI_n)$ and $\bv^{\perp}$ is orthogonal to $\bx$ and $\bx'$, we conclude that $ \ba^T\bv^{\perp}$ is distributed as a univariate  centered Gaussian with variance $\|\bv^{\perp}\|_2^2$, which implies that $\|\ba^T\bv^{\perp}\|_{\psi_2}\lesssim1$. Next, by triangle inequality
\begin{equation*}
\|\ba^T\bm{\beta}_1\|_{\psi_2}\le \|\ba^T\bm{\beta}_1\mathbbm{1}(\sign(\ba^T\bx)\neq \sign(\ba^T\bx'))\|_{\psi_2} + \|\ba^T\bm{\beta}_1 \mathbbm{1}(\sign(\ba^T\bx)= \sign(\ba^T\bx'))\|_{\psi_2}.
\end{equation*}
We argue that each term is bounded by a constant. By \cite[Lemma B.1]{matsumoto2024binary} we have that 
\begin{equation*}
\mathbb{P}\bigg(\left|\ba^T\bm{\beta}_1-\sqrt{\frac{\pi}{2}}\frac{\|\bx-\bx'\|_2}{\arccos (\bx^T\bx')}\right|\ge t\big|\sign(\ba^T\bx)\neq\sign(\ba^T\bx')\bigg)\lesssim e^{-t^2/2}, 
\end{equation*}
Consequently, by \cite[Proposition 2.6.6]{vershynin2018high}
\begin{equation*}
\bigg\|\left(\ba^T\bm{\beta}_1-\sqrt{\frac{\pi}{2}}\frac{\|\bx-\bx'\|_2}{\arccos (\bx^T\bx')} \right)\mathbbm{1}(\sign(\ba^T\bx)\neq \sign(\ba^T\bx'))\bigg\|_{\psi_2}\lesssim1,
\end{equation*}
and  by triangular inequality 
\begin{align*}
&\|\ba^T\bm{\beta}_1\mathbbm{1}(\sign(\ba^T\bx)\neq \sign(\ba^T\bx'))\|_{\psi_2} \lesssim 1 + \sqrt{\frac{\pi}{2}}\frac{\|\bx-\bx'\|_2}{\arccos (\bx^T\bx')}\lesssim 1.
\end{align*}
Next, we argue that
\begin{equation*}
\|\ba^T\bm{\beta}_1\mathbbm{1}(\sign(\ba^T\bx)= \sign(\ba^T\bx'))\|_{\psi_2} \lesssim 1.
\end{equation*}
Indeed, by change of variables $\by=-\bx'$, it is enough to argue that 
\begin{equation*}
\bigg\|\ba^T\frac{\bx+\by}{\|\bx+\by\|_2}\mathbbm{1}(\sign(\ba^T\bx) \neq \sign(\ba^T\by))\bigg\|_{\psi_2} \lesssim1,
\end{equation*}
which follows directly from \cite[Lemma B.2]{matsumoto2024binary}. By symmetry, the proof that $\|\ba^T\bm{\beta}_2\|_{\psi_2}\lesssim 1$ follows an analogous argument.
\end{proof}

We now focus on obtaining the estimate for the supremum of the process $Z_{\bx,\bq}$. Notice that the metric entropy estimates in Lemma \ref{lemmma:covering} holds for any collection of points $\mathcal{X}$, thus the law of $\ba_1,\cdots,\ba_m$ does not affect the estimates there. This is the typical advantage of an approach relying on covering number of boolean classes of function.

Thus, the first step is to exploit the behavior $\|Z_{\bx,\bq}-Z_{\bx',\bq'}\|_{\psi_2}$, conditionally on the signs $\sign(\ba_i^T\bx)$ and $\sign(\ba_i^T\bx')$. In fact, by triangle inequality
\begin{align*}
\|Z_{\bx,\bq}-Z_{\bx',\bq'}\|_{\psi_2} \le \|Z_{\bx,\bq}-Z_{\bx,\bq'}\|_{\psi_2} + \|Z_{\bx,\bq'}-Z_{\bx',\bq'}\|_{\psi_2}.
\end{align*}
For the first term, notice that Lemma \ref{lemma:conditioned_psi2} combined with \cite[Proposition 2.7.1]{vershynin2018high} implies that
\begin{align*}
&\|Z_{\bx,\bq}-Z_{\bx,\bq'}\|_{\psi_2} = \frac{1}{\sqrt{m}}\bigg\|\sum_{i=1}^m \varepsilon_i\sign(\ba_i^T\bx)\ba_i^T(\bq-\bq')\bigg\|_{\psi_2}\le \left(\frac{1}{m}\sum_{i=1}^m\|\ba_i^T(\bq-\bq')\|_{\psi_2}^2\right)^{1/2}\lesssim \|\bq-\bq'\|_2.
\end{align*}
For the second term, by Lemma \ref{lemma:conditioned_psi2} again
\begin{align*}
&\|Z_{\bx,\bq'}-Z_{\bx',\bq'}\|_{\psi_2} = \frac{1}{\sqrt{m}}\bigg\|\sum_{i=1}^m \varepsilon_i \big(\sign(\ba_i^T\bx)-\sign(\ba_i^T\bx')\big)\ba_i^T\bq'\bigg\|_{\psi_2}\\
&\le \left(\frac{1}{m}\sum_{i=1}^m\|\big(\sign(\ba_i^T\bx)-\sign(\ba_i^T\bx')\big)\ba_i^T\bq'\|_{\psi_2}^2\right)^{1/2}\\
&\lesssim \left(\frac{1}{m}\sum_{i=1}^m\big|\sign(\ba_i^T\bx)-\sign(\ba_i^T\bx')\big|^2\right)^{1/2}.
\end{align*}
We conclude that the process $Z_{\bx,\bq}$ is dominated by the following metric
\begin{align*}
\|Z_{\bx,\bq}-Z_{\bx',\bq'}\|_{\psi_2} \lesssim \underbrace{\|\bq-\bq'\|_2 + \left(\frac{1}{m}\sum_{i=1}^m |\sign( \ba_i^T\bx)-\sign(\ba_i^T\bx')|^2\right)^{1/2}}_{\coloneqq d\big((\bx,\bx'),(\bq,\bq')\big)}.
\end{align*}
We now aim to apply Dudley's inequality (see for example \cite{vershynin2018high}). To estimate the covering number of the set $\Sigma^{n,*}_{2k} \times \Sigma^{n,*}_{2k}$ with respect to $d$, we first observe that
\begin{align*}
\mathcal{N}(\Sigma^{n,*}_{2k}\times \Sigma^{n,*}_{2k},d,\varepsilon)\le \mathcal{N}(\Sigma^{n,*}_{2k},\|\cdot\|_{\mathcal{X}},\varepsilon/2)\cdot \mathcal{N}(\Sigma^{n,*}_{2k},\|\cdot\|_2,\varepsilon/2),
\end{align*}
where $\mathcal{X}$ is the collection of points $\ba_1,\ldots,\ba_m$. By Lemma \ref{lemmma:covering} and the standard (Euclidean) covering number of $\Sigma^{n,*}_{2k}$, we have
\begin{equation*}
\log \mathcal{N}(\Sigma^{n,*}_{2k}\times \Sigma^{n,*}_{2k},d,\varepsilon) \lesssim  k\log\bigg(\frac{en}{k}\bigg)\log\bigg(\frac{4}{\varepsilon}\bigg).
\end{equation*}
Since the diameter of $\Sigma^{n,*}_{2k}\times \Sigma^{n,*}_{2k}$ with respect to $d$ is clearly bounded by $4$, we apply the Dudley's tail bound: with probability at least $1-2e^{-u^2}$,
\begin{align*}
\sup_{\bx,\bq\in \Sigma^{n,*}_{2k}}Z_{\bx,\bq}&\lesssim \int_{0}^4 \sqrt{\log \mathcal{N}(\Sigma^{n,*}_{2k}\times \Sigma^{n,*}_{2k},d,\varepsilon)}d\varepsilon + u\\
&\lesssim \int_{0}^4\sqrt{k\log\bigg(\frac{en}{k}\bigg)\log\bigg(\frac{4}{\varepsilon}\bigg)}d\varepsilon +u \\&\lesssim \sqrt{k\log\bigg(\frac{en}{k}\bigg)} + u.
\end{align*}
Diving both sides by $\sqrt{m}$ and choosing $u = \sqrt{k\log(en/k)}$, we obtain that with probability at least $1-2^{-k\log(en/k)}$,
\begin{equation*}
\sup_{\bx,\bq \in \Sigma^{n,*}_{2k} }\frac{1}{m}\sum_{i=1}^m \varepsilon_i \sign(\ba_i^T\bx)\ba_i^T\bq \lesssim \sqrt{\frac{k\log(en/k)}{m}},
\end{equation*}
which together with \eqref{eq:symmetrization_step} concludes the estimate for $I_2$.

\end{document}

%% file: preamble.tex
\newcommand{\ba}{\bm{a}}
\newcommand{\bA}{\bm{A}}

\newcommand{\bg}{\bm{g}}

\newcommand{\bh}{\bm{h}}

\newcommand{\bI}{\bm{I}}

\newcommand{\bp}{\bm{p}}

\newcommand{\bq}{\bm{q}}

\newcommand{\bu}{\bm{u}}

\newcommand{\bv}{\bm{v}}

\newcommand{\bw}{\bm{w}}

\newcommand{\bx}{\bm{x}}
\newcommand{\bX}{\bm{X}}
\newcommand{\by}{\bm{y}}

\newcommand{\bz}{\bm{z}}